\newcommand{\req}[1]{(\ref{#1})} %{Eq.\thinspace(\ref{#1})}  
\newcommand{\bea}{\begin{eqnarray}}
\newcommand{\eea}{\end{eqnarray}}
\newcommand{\ba}{\begin{eqnarray}}
\newcommand{\ea}{\end{eqnarray}}
\newcommand{\beq}{\begin{equation}}
\newcommand{\eeq}{\end{equation} }
\newcommand{\beqa}{\begin{eqnarray}}
\newcommand{\eeqa}{\end{eqnarray}}
\newcommand{\beqar}{\begin{eqnarray*}}
\newcommand{\eeqar}{\end{eqnarray*}}
\newcommand{\be}{\begin{equation}}
\newcommand{\ee}{\end{equation}}
\newcommand{\diff}{\mathrm{d}}
\renewcommand{\req}[1]{(\ref{#1})}
\newcommand{\eg}{{\it e.g.,}\ }
\newcommand{\ie}{{\it i.e.,}\ }
\newcommand{\Rho}{\mathrm{P}}
\newtheorem{theorem}{Theorem}
\newtheorem{defi}{Definition}
\newtheorem{notion}{Notion}
\newtheorem{proposition}{Proposition}
\newtheorem{remark}{Remark}
\definecolor{shadecolor}{rgb}{.25,.25,.25}
\title{ \boldmath Birkhoff implies Quasi-topological}
\author[\lightning]{Pablo Bueno,}
\author[\scriptsize \sun]{Robie A. Hennigar,}
\author[\lightning]{\'Angel J. Murcia}
\affiliation[\lightning]{Departament de Física Quàntica i Astrofísica, Institut de Ciències del Cosmos Universitat de Barcelona, Martí i Franquès 1, E-08028 Barcelona, Spain \vspace{0.1cm}}
\affiliation[\scriptsize \sun]{Centre for Particle Theory, Department of Mathematical Sciences, Durham University, Durham DH1 3LE, UK \vspace{0.1cm}}
\emailAdd{pablobueno@ub.edu}
\emailAdd{robie.a.hennigar@durham.ac.uk}
\emailAdd{angelmurcia@icc.ub.edu}
\date{\today}
\abstract{Quasi-topological gravities (QTGs) are higher-curvature extensions of Einstein gravity in $D\geq 5$ spacetime dimensions. 
Throughout the years, different notions of QTGs constructed from analytic functions of polynomial curvature invariants have been introduced in the literature. In this paper, we show that all such definitions may be reduced to three distinct inequivalent notions: type I QTGs, for which the field equations evaluated on a single-function static and spherically symmetric ansatz are second order; type II QTGs, whose field equations on general static and spherically symmetric backgrounds are second order; and type III QTGs, for which the trace of the field equations on a general background is second order. We show that type II QTGs are a subset of type I QTGs and that type III QTGs are a subset of type II QTGs modulo pure Weyl invariants. Moreover, we prove that type II QTGs possess second-order equations on general spherical backgrounds. This allows us to prove that any theory satisfying a Birkhoff theorem is a type II QTG, and that the reverse implication also holds up to a zero-measure set of theories. For every theory satisfying Birkhoff's theorem, the most general spherically symmetric solution is a generalization of the Schwarzschild spacetime characterized by a single function which satisfies an algebraic equation.
 }
\begin{document} 
\vspace*{2cm} 
\maketitle
\flushbottom
%%%%%%%%%%%%%%%%%%%%%%%%%%%%%%%%%%%%%
%\setcounter{tocdepth}{2}
%the line above sets the depth of the table of contents. {2} means it will display section and subsections only.
%{\small
%\setlength\parskip{-0.5mm} 
%\tableofcontents
%}

%%%%%%%%%%%%%%%%%%%%%%%%%%%%%%

\section{Introduction and summary of results}

The study of gravitational actions involving higher powers of the curvature has a long history and broad motivations. From an effective field theory perspective, terms of this nature arise as unavoidable high-energy corrections to general relativity (GR)~\cite{Donoghue:1994dn, Endlich:2017tqa}. Higher-curvature terms can improve the renormalizability of gravitational theory, though at the expense of new (and potentially problematic) degrees of freedom~\cite{tHooft:1974toh, Stelle:1976gc, Stelle:1977ry,Goroff:1985th}. Specific combinations of higher-curvature terms are predicted by different approaches to quantum gravity. This is the case, for example, in string theory --- though only the first few of these ``$\alpha'$ corrections'' are known explicitly~\cite{Fradkin:1984pq, Callan:1985ia,Fradkin:1985ys,Zwiebach:1985uq, Gross:1986mw, Metsaev:1987zx}. Higher-curvature modifications to gravity also play a role in cosmology, most famously in the model of Starobinsky inflation~\cite{Starobinsky:1980te}. Beyond these specific realizations, higher-curvature theories provide a useful laboratory for probing universality: when a statement holds across broad classes of deformations of GR, it points to a model-independent lesson~\cite{Wald:1993nt,Myers:2010xs, Myers:2010tj,Mezei:2014zla, Bueno:2015rda,Bueno:2015xda,Miao:2015dua,Bianchi:2016xvf, Bobev:2017asb, Bueno:2018yzo,Bueno:2022jbl}.

Among the various possible higher-curvature theories, one class that has attracted considerable attention is \textit{quasi-topological gravities} (QTGs). The first examples of QTGs were found in 2010 almost simultaneously by two groups for two different purposes. On the one hand, a cubic-in-curvature interaction was identified in~\cite{Oliva:2010eb} while searching for beyond-GR theories that satisfy a Birkhoff theorem {--- which is not generically fulfilled in generic higher-curvature theories, see e.g. \cite{Pechlaner1966,Stelle:1977ry,Clifton:2006ug,Faraoni:2010rt,Capozziello:2011wg,Ghosh:2024tlk}}. On the other hand, an equivalent cubic-in-curvature interaction was identified in~\cite{Quasi} during a search for a holographic toy model for four-dimensional conformal field theories with distinct central charges $a \neq c$. After these initial discoveries, new examples of QTGs were identified first at quartic order in curvature~\cite{Dehghani:2011vu}, then quintic order~\cite{Cisterna:2017umf}, and finally at arbitrary order in the spacetime curvature and in all dimensions $D \ge 5$~\cite{Bueno:2019ycr, Bueno:2022res, Moreno:2023rfl,Moreno:2023arp}.\footnote{Early work on QTGs focused on \emph{polynomial} curvature invariants, and we adopt that focus here. Recent results show that allowing non-polynomial invariants yields theories that retain QTG properties and exist in four dimensions~\cite{Bueno:2025zaj} --- see also \cite{Chinaglia:2017wim, Colleaux:2017ibe,Colleaux:2019ckh}. In fact, such theories all have equations of motion that can be constrained by `lifting' certain two-dimensional Horndeski theories to higher-dimensions~\cite{Carballo-Rubio:2025ntd}.}

Roughly speaking, QTGs are defined by their behaviour in spherical symmetry: they possess static and  spherically symmetric black hole solutions determined by a single metric function, $-g_{tt}=g_{rr}^{-1}\equiv f(r)$, which satisfies an algebraic equation in terms of its ADM mass~\cite{Oliva:2010eb,Quasi}. The algebraic equation is formally equivalent to that in Lovelock gravity~\cite{Lovelock1, Lovelock2, Wheeler:1985nh}. However, while Lovelock theory of order $N$ in curvature is nontrivial only for $D \ge 2 N + 1$, QTGs of any order exist in any dimension $D \ge 5$. This simplicity motivated much of the early work on the theories, where they were studied as gravitational and holographic toy models to probe beyond general relativity~\cite{Oliva:2010eb,Quasi, Dehghani:2011vu,Myers:2010jv,Kuang:2010jc,Brenna:2011gp,Oliva:2011xu,Kuang:2011dy,Dehghani:2011hm, Brenna:2012gp,Bazrafshan:2012rn,Dehghani:2013ldu, Hennigar:2015esa,Chernicoff:2016qrc,Cisterna:2017umf,Cisterna:2018tgx,Fierro:2020wps,Bueno:2020odt}. It was later realized that, despite the careful tuning of coupling constants required to realize the structural properties of QTGs, the theories are broad enough to capture general corrections to vacuum gravity within an effective field theory framework~\cite{Bueno:2019ltp, Bueno:2019ycr}. Most recently, it has been found that a complete resummation of QTGs --- including the effects of terms at all orders in curvature --- generically produces regular black hole solutions~\cite{Bueno:2024dgm, Bueno:2024eig, Bueno:2024zsx,Bueno:2025gjg}. This has attracted considerable attention as the first explicit model in which dynamical aspects of regular black holes can be probed~\cite{Konoplya:2024hfg, DiFilippo:2024mwm, Konoplya:2024kih, Ma:2024olw, Cisterna:2024ksz, Ditta:2024iky, Frolov:2024hhe,Wang:2024zlq,Bueno:2025dqk, Fernandes:2025fnz, Fernandes:2025eoc,Hennigar:2025ftm,Aguayo:2025xfi,Cisterna:2025vxk,Boyanov:2025pes,Konoplya:2025uta,Fan:2025jow, Eichhorn:2025pgy, Ling:2025ncw,Arbelaez:2025gwj,Xie:2025auj,Chen:2025pgg}.

Since their introduction, QTGs have been formulated in several ---  sometimes inequivalent --- ways. Some definitions are tailored to static, spherically symmetric metrics; others drop the assumption of staticity; still others impose requirements meant to hold beyond spherical symmetry. Which of these formulations are equivalent, and which are genuinely broader, remains unclear. This paper clarifies these relationships --- see Fig.\,\ref{fig:qtg-nested} below for a schematic summary.  In addition, we prove a general result which holds for arbitrary higher-curvature theories built from general (polynomial) contractions of the Riemann tensor and the metric: namely, that any of such theories which satisfies a Birkhoff theorem --- in a precise sense that is explained in the document ---  necessarily belongs to one of the QTG classes characterized in this paper. This goes beyond previous results in the literature, which proved a Birkhoff theorem either for Lovelock gravities \cite{Charmousis:2002rc,Zegers:2005vx} or for certain classes of QTGs \cite{Oliva:2011xu,Oliva:2012zs,Cisterna:2017umf,Bueno:2024dgm,Bueno:2024qhh,Bueno:2024zsx}.

For the benefit of the reader, after introducing the various (six) notions of QTGs coined in the literature throughout the years, we opt to begin our document by presenting the main results. This is the content of the following two subsections. The derivation and proofs of our results are presented in subsequent sections, which are structured as follows. In Section \ref{sec2} we identify the first and broadest set of QTGs, namely, the one involving theories for which the equations of motion on a single-function static and spherically symmetric ansatz are second order. We prove various propositions for this type of theories and show that two of the previously considered notions of QTG are equivalent to this one. In Section \ref{sec3} we characterize the second type of QTGs, corresponding to theories for which the equations of motion on general static and spherically symmetric metrics are second order. This type is contained in the first class of QTGs and we show that three of the six previously considered notions of QTG are all equivalent to one another. In Section \ref{sec4} we prove several results concerning the structure of the equations of higher-curvature theories whose equations of motion on spherically symmetric backgrounds are second order. Specifically, we show that QTGs of the second type necessarily possess second-order equations on general spherically symmetric backgrounds.  
In Section \ref{sec5} we prove that any theory that satisfies a Birkhoff theorem is a QTG of the second type, showing that the reverse implication also holds except for a zero measure set of theories whose static and spherically symmetric solutions are characterized by more than one free continuous parameter. In Section \ref{sec6} we characterize the last class of QTGs, namely, those for which the traced field equations are second order in derivatives on general backgrounds. We show that any such QTG can be written as a QTG of the second type plus an analytic function of two invariants built, respectively, from contractions of two and three Weyl tensors. Finally, we conclude with some future directions and an appendix with an alternative characterization for QTGs of the second type.

{\bf Note:} Throughout the document, we will be considering theories of gravity $\mathcal{L}(g^{ab},R_{cdef})$ that may be written as:
\begin{equation}
    \mathcal{L}(g^{ab},R_{cdef})=\sum_{n=0}^\infty \sum_{j=1}^{g_n} \alpha_j^{(n)} \mathcal{R}_{j}^{(n)}\,,
    \label{eq:anapol}
\end{equation}
where $\mathcal{R}_{j}^{(n)}$ is a curvature invariant of order $n$, $g_n$ stands for the number of independent curvature invariants at order $n$, each of them labeled with the index $j$, and $\alpha_j^{(n)}$ are unspecified couplings with units of length$^{2(n-1)}$ --- one could set all $\alpha_j^{(n)}=0$ for $n \geq n_{\rm max}$, thus obtaining a polynomial higher-curvature theory. Whenever a theory has the form \eqref{eq:anapol}, we will just say that the theory is \emph{written as an analytic function of polynomial curvature invariants}, or synonyms thereof. In fact, any theory in the present manuscript, unless otherwise stated, may be assumed to adopt the form \eqref{eq:anapol}.

\subsection{The various notions of QTG}

Let $\mathcal{L}=\mathcal{L}(g^{ab}, R_{cdef})$ be a $D$-dimensional theory of gravity (with $D \geq 5$) formed by combinations of arbitrary polynomial contractions of Riemann curvature tensors with the metric --- perhaps featuring an infinite tower of higher-curvature corrections. Consider a general\footnote{Throughout this manuscript, we will consider various spherically symmetric backgrounds in which the warp factor multiplying the metric of the $(D-2)$-dimensional round sphere is given precisely by the radial coordinate. In general, spherical symmetry does not determine such a warp factor and fixing it to be precisely the radial coordinate is allowed as long as the gradient of the warp factor is (asymptotically) space-like.} static and spherically symmetric ansatz (SSS) for the metric:
\begin{equation}
\mathrm{d}s_{N,f}^2=-N(r)^2 f(r) \mathrm{d} t^2+\frac{\mathrm{d} r^2}{f(r)}+r^2 \mathrm{d} \Omega_{D-2}^2\,.
\label{eq:sssintro}
\end{equation}
We will also make use of the single-function SSS ansatz, which entails considering \req{eq:sssintro} with $N(r)=1$, namely,
\begin{equation}
\mathrm{d}s_{f}^2=-f(r) \mathrm{d} t^2+\frac{\mathrm{d} r^2}{f(r)}+r^2 \mathrm{d} \Omega_{D-2}^2\,.
\label{eq:sssn}
\end{equation}
Define $L_{N,f}=r^{D-2} N(r) \mathcal{L} \vert_{N,f}$ (resp., $L_{f}=r^{D-2} \mathcal{L} \vert_{f}$) to be the evaluation of the theory on the general SSS ansatz \eqref{eq:sssintro} (resp., on the single-function SSS ansatz \eqref{eq:sssn}). From now on, every tensor marked by $ \vert_{N,f}$ (resp., $\vert_{f}$) is assumed to be evaluated on \eqref{eq:sssintro} (resp., \eqref{eq:sssn}). We will sometimes use $\vert_{N=1}$ to specify that a tensor evaluated on \eqref{eq:sssintro} is evaluated on the single-function SSS ansatz, so that $N=1$. Another essential ingredient is the \emph{entropy tensor}
\begin{equation}
P^{abcd}=\frac{\partial \mathcal{L}}{\partial R_{abcd}}\,.
\label{eq:defentrot}
\end{equation}
It determines the gravitational equations of motion $\mathcal{E}_{ab}$ of the theory under study, which take the following form \cite{Padmanabhan:2011ex,Padmanabhan:2013xyr}:
\begin{equation}
\mathcal{E}_{ab}=P_{acde} R_{b}{}^{cde}-\frac{1}{2}g_{ab} \mathcal{L}+2 \nabla^c \nabla^d P_{acbd}=0\,.
\label{eq:eomgen}
\end{equation}

We have identified up to six possibly different definitions of QTGs which have appeared in the literature over the years. Let us emphasize  that these definitions need not be equivalent among each other (and they will not be in general, as we shall see below). We list them below, indicating the references from which each definition was extracted. 

\begin{notion}[First notion of QTGs \cite{Bueno:2019ltp,Moreno:2023rfl,Moreno:2023arp}]
\label{def:1}
A theory $\mathcal{L}(g^{ab}, R_{cdef})$ is a QTG if
\begin{equation}
\frac{\delta L_{f}}{\delta f}=0 \, \quad  \text{and} \, \quad \left. \frac{\delta L_{N,f}}{\delta N} \right \vert_{N=1} = \frac{\mathrm{d}}{\mathrm{d}r} \left ( \mathcal{F}(r,f) \right) \,,
\end{equation}
where the $\delta$ operation stands for functional differentiation and where $\mathcal{F}$ is a certain function of the indicated variables.
\end{notion}
\begin{notion}[Second notion of QTGs \cite{Bueno:2019ycr,Bueno:2022res}]
\label{def:2}
A theory $\mathcal{L}(g^{ab}, R_{cdef})$ is a QTG if the entropy tensor is divergenceless on \eqref{eq:sssn}:
\begin{equation}
\nabla_a P^{abcd} \vert_{f}=0\,.
\label{eq:conddef2}
\end{equation}
\end{notion}
\begin{notion}[Third notion of QTGs \cite{Myers:2010ru,Dehghani:2011vu}]
\label{def:3}
A theory $\mathcal{L}(g^{ab}, R_{cdef})$ is a QTG if
\begin{equation}\label{defi3}
L_{N,f}=N(r) \frac{\diff}{\diff r} \left ( \mathcal{F}(r,f) \right)+ \frac{\mathrm{d} \mathcal{G}}{\mathrm{d} r}\,,
\end{equation} 
for certain functions\footnote{We do not indicate the variable on which $\mathcal{G}$ depends, since it appears in $L_{N,f}$ as a total derivative and will not contribute to the equations of motion.} $\mathcal{F}$ and $\mathcal{G}$. 
\end{notion}

\begin{notion}[Fourth notion of QTGs \cite{Bueno:2024dgm,Bueno:2024zsx,Bueno:2024eig}]
\label{def:4}
A theory $\mathcal{L}(g^{ab}, R_{cdef})$ is a QTG if the entropy tensor is divergenceless on \eqref{eq:sssintro}:
\begin{equation}
\nabla_a P^{abcd} \vert_{N,f}=0\,.
\end{equation}
\end{notion}
\begin{notion}[Fifth notion of QTGs \cite{Oliva:2010eb,Dehghani:2011vu,Cisterna:2017umf}]
\label{def:5}
A theory $\mathcal{L}(g^{ab}, R_{cdef})$ is a QTG if its equations of motion on SSS backgrounds are second order in derivatives.
\end{notion}

\begin{notion}[Sixth notion of QTGs \cite{Oliva:2010eb,Oliva:2010zd,Oliva:2011xu}]
A theory $\mathcal{L}(g^{ab}, R_{cdef})$ is a QTG if its traced field equations are of second order in derivatives. 
\label{def:6}
\end{notion}

Observe that in some cases different notions were extracted from the same reference, as the authors proved that those notions were equivalent for the particular instances of theories they were considering (for instance, reference \cite{Oliva:2010eb} and Notions \ref{def:5}, \ref{def:6}).

Finally, a significant focus of our work will be the connection of these notions with the fulfillment of a Birkhoff theorem. There exist various formulations of Birkhoff's theorem which are equivalent for vacuum GR \cite{Schmidt:2012wj}. For our purposes, it will be convenient to introduce Birkhoff's theorem as follows.

\begin{defi}
Let $\mathcal{L}(g^{ab},R_{cdef})$ be a theory of gravity. It is said to satisfy a Birkhoff theorem if all spherically symmetric solutions are static\footnote{That is, they admit an asymptotically timelike Killing vector field which is hypersurface-orthogonal.} and uniquely characterized by a single continuous integration constant, together with a possible additional discrete parameter.
\label{def:birk}
\end{defi}

In vacuum GR all spherically symmetric (SS) solutions are isometric to the Schwarzschild-Tangherlini solution and uniquely characterized by their ADM mass, so that Birkhoff's theorem trivially follows. In Lovelock theories and polynomial QTGs fulfilling any of the Notions \ref{def:3}, \ref{def:4}  or \ref{def:5}, all SS solutions are determined by an algebraic equation for $f(t,r)=f(r)$ depending on the ADM mass \cite{Bueno:2019ltp,Bueno:2019ycr,Bueno:2022res,Moreno:2023rfl}, so that SS solutions will also be labeled by an additional discrete parameter. To see this more explicitly, consider the particular case of Einstein-Gauss-Bonnet theory. For a given ADM mass, there will be two different SS solutions, meaning that there is a discrete parameter $q \in \mathbb{Z}_2$ characterizing the solutions.\footnote{Of course, only one solution is connected to the Schwarzschild-Tangherlini solution as the Gauss-Bonnet coupling tends to zero.}

However, there could be theories of gravity for which SSS solutions contain additional free continuous integration parameters beyond the ADM mass. We believe this goes against the spirit of the original GR Birkhoff's theorem, as even SSS solutions would be specified by further continuous parameters beyond the ADM mass --- giving rise to proper hairy black hole solutions. As a consequence, Definition \ref{def:birk} appears to be the most natural notion for the concept of a higher-curvature theory satisfying a Birkhoff theorem.

The characterization of the Birkhoff theorem given in Definition~\ref{def:birk} agrees with the standard usage in the literature. One notable exception is the case of pure Weyl-squared gravity, which has been proven to satisfy a Birkhoff-\textit{like} theorem~\cite{Riegert:1984zz}. In that case the equations of motion enforce that the general spherical solution is static \textit{up to a conformal factor}. Hence, the argument relies crucially on the properties of the pure Weyl-squared action and the result would fail if the Weyl-squared term was coupled to an Einstein-Hilbert term.

\subsection{Main results}

In this subsection we present our findings regarding the interrelations between the notions of QTG defined above. We relegate the proofs of the theorems to future sections of the manuscript, to facilitate the reading of the document and the exploitation of results.

The first result concerns higher-curvature gravities with second-order equations on the single-function ansatz \req{eq:sssn}. %
\begin{theorem}
Notions \ref{def:1} and \ref{def:2} are equivalent. 
\label{thm:1}
\end{theorem}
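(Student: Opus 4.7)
My plan is to translate both notions into equivalent statements about components of the field equations $\mathcal{E}_{ab}$ evaluated on the single-function ansatz \req{eq:sssn}, and then check equivalence directly. Step one is to derive a dictionary between the functional derivatives of the reduced action and components of \req{eq:eomgen}. Starting from $\delta(\sqrt{-g}\,\mathcal{L}) = -\tfrac{1}{2}\sqrt{-g}\,\mathcal{E}_{ab}\,\delta g^{ab}$ up to boundary terms, and tracking the variations of $g^{tt}$ and $g^{rr}$ induced by $\delta N$ and $\delta f$, direct computation yields (up to irrelevant angular-integration constants)
\begin{equation*}
\frac{\delta L_{N,f}}{\delta N}\bigg|_{N=1} = r^{D-2}\,\mathcal{E}^{t}{}_{t}\big|_{f}, \qquad \frac{\delta L_{f}}{\delta f} = \frac{r^{D-2}}{2f}\left(\mathcal{E}^{t}{}_{t}-\mathcal{E}^{r}{}_{r}\right)\big|_{f}.
\end{equation*}
Thus Notion \ref{def:1} becomes the conjunction: (i) $\mathcal{E}^{t}{}_{t}|_{f}=\mathcal{E}^{r}{}_{r}|_{f}$ identically in $f$, and (ii) $r^{D-2}\mathcal{E}^{t}{}_{t}|_{f}=\mathrm{d}\mathcal{F}(r,f)/\mathrm{d}r$ for some function $\mathcal{F}$.

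For the direction Notion \ref{def:2} $\Rightarrow$ Notion \ref{def:1}, I assume $\nabla_{a}P^{abcd}|_{f}=0$. The Riemann symmetries of $P^{abcd}$ (antisymmetry in each pair of indices and symmetry under pair-exchange) imply $\nabla^{d}P_{acbd}|_{f}=0$ as well, so the higher-derivative piece $2\nabla^{c}\nabla^{d}P_{acbd}|_{f}$ of \req{eq:eomgen} vanishes identically and $\mathcal{E}_{ab}|_{f}$ reduces to its purely algebraic part $P_{acde}R_{b}{}^{cde}|_{f}-\tfrac{1}{2}g_{ab}\mathcal{L}|_{f}$. On the single-function ansatz the Riemann tensor exhibits the enhanced symmetry $R^{t\mu}{}_{t\nu}|_{f}=R^{r\mu}{}_{r\nu}|_{f}$ for all angular indices $\mu,\nu$, which forces $T^{t}{}_{t}|_{f}=T^{r}{}_{r}|_{f}$ for every algebraic tensor built from $R$ and $g$; this is (i). For (ii), I invoke the contracted Bianchi identity $\nabla_{a}\mathcal{E}^{ab}=0$ at $b=r$: once (i) is inserted, this reduces to a relation between $\partial_{r}(r^{D-2}\mathcal{E}^{r}{}_{r}|_{f})$ and $\mathcal{E}^{\theta}{}_{\theta}|_{f}-\mathcal{E}^{r}{}_{r}|_{f}$. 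Because the algebraic piece on \req{eq:sssn} is polynomial in $(f,f',f'')$, integration in $r$ exhibits $r^{D-2}\mathcal{E}^{t}{}_{t}|_{f}$ as the total derivative of a function $\mathcal{F}(r,f)$ of $r$ and $f$ only.

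For the converse Notion \ref{def:1} $\Rightarrow$ Notion \ref{def:2}, the dictionary translates (i) and (ii) into the statement that $\mathcal{E}^{t}{}_{t}|_{f}$ is at most first-order in derivatives of $f$. Decomposing $\mathcal{E}_{ab}|_{f}$ as in \req{eq:eomgen}, the algebraic part is at most second-order in $f$ (via $R|_{f}$), while $2\nabla^{c}\nabla^{d}P_{acbd}|_{f}$ can in principle reach $f^{(4)}$. The coefficient of $f^{(4)}$ in $\nabla\nabla P|_{f}$ is proportional to the coefficient of $f''$ in $P|_{f}$ for the appropriate tensorial structure, since taking two radial covariant derivatives is the only mechanism that can raise the jet order from $f''$ to $f^{(4)}$; its vanishing forces the leading symbol of $\nabla^{d}P_{acbd}|_{f}$ to vanish. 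Propagating the same argument down in derivative order — the $f'''$ coefficient must also vanish, and finally the $f''$ contribution from the derivative sector must cancel the algebraic one — one concludes $\nabla^{d}P_{acbd}|_{f}=0$, equivalently $\nabla_{a}P^{abcd}|_{f}=0$ by the Riemann symmetries of $P$. The main obstacle is precisely here: the cancellations must be identities in the jet variables $(f,f',f'',\ldots)$ rather than on-shell relations, so extracting the coefficients at each derivative order cleanly requires careful book-keeping of the tensorial structure of $\nabla^{c}\nabla^{d}P_{acbd}|_{f}$ on \req{eq:sssn}, ruling out the possibility that accidental cancellations among different index configurations mimic the divergence-free condition without actually enforcing it.
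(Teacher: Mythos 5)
Your overall strategy (translate both notions into statements about $\mathcal{E}^t{}_t|_f$ and $\mathcal{E}^r{}_r|_f$ via the functional-derivative dictionary, then compare) is the same as the paper's, and your dictionary, the vanishing of the $\nabla\nabla P$ term under Notion \ref{def:2}, and the $\tau\leftrightarrow\rho$ symmetry argument for $\mathcal{E}^t{}_t=\mathcal{E}^r{}_r$ are all fine. However, both directions have a genuine gap with a common root: you never use the hypothesis that $\mathcal{L}$ descends from actual polynomial curvature invariants, and this hypothesis is indispensable. In the paper it enters through Proposition \ref{prop2}: writing $\mathcal{L}_f$ in terms of $\Omega_f,\Theta_f,\Rho_f$ and demanding that no terms of the form $g_1(\Rho_f)\Omega_f$ or $g_2(\Rho_f)\Theta_f$ appear (such terms cannot arise from polynomial contractions) forces the constraints \req{eq:h1f}--\req{eq:h2f} that tie $h_1$ and $h_2$ to $h_0$. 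Without those constraints the theorem is simply false at the level of reduced Lagrangians: a generic $\mathcal{L}_f$ of the form \req{eq:laghf} with \emph{independent} $h_0,h_1,h_2$ still yields second-order equations for $f$ (one checks $A_f=\frac{1}{2r}\left(4\psi h_0'-2(D-2)h_0+h_1\right)$ depends only on $(r,f)$, so $\nabla^c\nabla^d P_{acbd}|_f$ is at most first order), yet $\nabla_aP^{abcd}|_f\neq 0$ unless \req{eq:h1f} holds. This directly refutes your ``cascade'' in the direction Notion \ref{def:1}$\Rightarrow$\ref{def:2}: killing the $f^{(4)}$, $f'''$ and $f''$ contributions to $\nabla^c\nabla^d P_{acbd}|_f$ only shows that this \emph{double} divergence is first order; it does not force the single divergence $\nabla^dP_{acbd}|_f$ to vanish, and in general it does not vanish.

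The converse direction has the mirror-image problem. Having reduced $\mathcal{E}_{ab}|_f$ to its algebraic part, you assert that ``because the algebraic piece is polynomial in $(f,f',f'')$, integration in $r$ exhibits $r^{D-2}\mathcal{E}^t{}_t|_f$ as the total derivative of a function $\mathcal{F}(r,f)$.'' The Bianchi identity only gives $\frac{\mathrm{d}}{\mathrm{d}r}(r^{D-2}\mathcal{E}_{\rm rad})=(D-2)r^{D-3}\mathcal{E}_{\rm ang}$, which says nothing about $\mathcal{F}$ depending on $(r,f)$ alone; for that you need $r^{D-2}\mathcal{E}^t{}_t|_f$ to be free of $f''$, linear in $f'$, and integrable in the required sense, none of which follows from mere polynomiality in the jet variables. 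The paper obtains this only after the structure theorems (Propositions \ref{prop1}, \ref{prop2} and the computation in Proposition \ref{prop:eomsssn1}), where the $tt$ equation collapses to $\frac{1}{r^{D-3}}\frac{\mathrm{d}}{\mathrm{d}r}\left(r^{D-1}h(\psi)\right)$. In this direction your hypothesis $\nabla_aP^{abcd}|_f=0$ is actually strong enough to repair the argument --- imposing $A_f=B_f=0$ as identities in the jet of $f$ reproduces exactly \req{eq:h1f}--\req{eq:h2f} --- but you would need to extract and solve those identities explicitly rather than appeal to polynomiality plus Bianchi.
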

We then concentrate on theories of gravity which possess second-order equations for general spherically symmetric backgrounds \eqref{eq:sssintro}. We find that:
\begin{theorem}
Notions \ref{def:3}, \ref{def:4} and \ref{def:5} are equivalent. Furthermore, these imply that the equations on general spherical backgrounds are second-order.
\label{thm:2}
\end{theorem}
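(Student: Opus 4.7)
The plan is to close the cycle Notion~\ref{def:3}$\Rightarrow$Notion~\ref{def:5}$\Rightarrow$Notion~\ref{def:4}$\Rightarrow$Notion~\ref{def:3} and then upgrade the second-order property from the SSS ansatz \eqref{eq:sssintro} to general spherical backgrounds.

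For Notion~\ref{def:3}$\Rightarrow$Notion~\ref{def:5}, I would drop the total derivative $\mathrm{d}\mathcal{G}/\mathrm{d}r$ and vary $L_{N,f}=N\bigl(\partial_r\mathcal{F}+f'\partial_f\mathcal{F}\bigr)$ directly. A quick Euler--Lagrange computation yields $\mathcal{E}_N=\mathrm{d}\mathcal{F}/\mathrm{d}r$ and $\mathcal{E}_f=-N'\partial_f\mathcal{F}$, both at most first-order. By the principle of symmetric criticality these reproduce the independent components of $\mathcal{E}_{ab}\vert_{N,f}$, so the full reduced system is second-order. For Notion~\ref{def:4}$\Rightarrow$Notion~\ref{def:5}, I would simply inspect \eqref{eq:eomgen}: the algebraic pieces $P_{acde}R_b{}^{cde}-\frac{1}{2}g_{ab}\mathcal{L}$ are already at most second order in the metric, while the potentially higher-order term $2\nabla^c\nabla^d P_{acbd}$ vanishes on the ansatz once $\nabla^d P_{acbd}\vert_{N,f}=0$ is imposed as a tensorial identity.

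The technically heaviest step is Notion~\ref{def:5}$\Rightarrow$Notion~\ref{def:4}. My plan here is to exploit the fact that spherical symmetry restricts $P^{abcd}\vert_{N,f}$ to a handful of independent scalar components, which are functions of $r$ built from $f,f',f'',N,N',N''$. Expanding $\nabla^c\nabla^d P_{acbd}\vert_{N,f}$ in powers of the high-order derivatives $f^{(3)}, f^{(4)}, N^{(3)}, N^{(4)}$ and requiring each coefficient to vanish (as mandated by second-order EOM) yields algebraic constraints on the scalar components of $P^{abcd}$. The main obstacle will be to show that this system of constraints is equivalent, component by component, to $\nabla_d P^{abcd}\vert_{N,f}=0$: a priori, cancellations could arise in $\nabla\nabla P$ that are absent in $\nabla P$, and ruling this out requires a careful decomposition of both tensors using the spherically symmetric projectors onto the $(t,r)$ block and the sphere directions.

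To close the cycle with Notion~\ref{def:4}$\Rightarrow$Notion~\ref{def:3}, I would first note that Notion~\ref{def:4} implies Notion~\ref{def:5}, which in turn forces $L_{N,f}$ to be equivalent, modulo total derivatives, to a first-order Lagrangian $L'(r,N,N',f,f')$ by the standard structure theorem that characterises higher-derivative Lagrangians yielding second-order equations. The $N$-dependence of $L'$ can then be fixed by comparing with the explicit reduction of $\sqrt{-g}\,\mathcal{L}$, which is linear in $N$ through the volume factor and whose curvature dependence couples to $N$ only through the combinations selected by $\nabla_a P^{abcd}\vert_{N,f}=0$, yielding $L'=N\bigl(A(r,f)+f'B(r,f)\bigr)+\mathrm{d}K/\mathrm{d}r$. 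Finally, applying the Bianchi identity $\nabla^a\mathcal{E}_{ab}=0$ to the reduced system imposes the integrability condition $\partial_f A=\partial_r B$, so that $A\,\mathrm{d}r+B\,\mathrm{d}f$ is exact and $L_{N,f}$ takes the form required by Notion~\ref{def:3}. The extension to general spherical backgrounds is then almost immediate: any spherical metric can be brought to the form \eqref{eq:sssintro} with $f$ and $N$ promoted to functions of $(t,r)$, and since $\nabla_a P^{abcd}=0$ depends only on the algebraic spherical structure of $P$ --- which is preserved under turning on $t$-dependence --- the divergence continues to vanish, eliminating the higher-derivative term in $\mathcal{E}_{ab}$ and leaving second-order equations throughout.
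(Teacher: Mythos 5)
Your overall architecture (a cycle of implications plus an upgrade to non-static backgrounds) is sound, and the two easy legs --- Notion~\ref{def:4}$\Rightarrow$Notion~\ref{def:5} via \eqref{eq:eomgen}, and the reduced Euler--Lagrange computation for Notion~\ref{def:3} --- are essentially what the paper does (modulo the fact that symmetric criticality only delivers the $tt$ and $rr$ equations, so you still need the contracted Bianchi identity to control the angular component). However, there are two genuine gaps. The first is in your ``technically heaviest step,'' Notion~\ref{def:5}$\Rightarrow$Notion~\ref{def:4}. Requiring the coefficients of $f^{(3)},f^{(4)},N^{(3)},N^{(4)}$ in $\nabla^c\nabla^d P_{acbd}\vert_{N,f}$ to vanish is strictly \emph{weaker} than $\nabla^d P_{acbd}\vert_{N,f}=0$: those conditions only force $\nabla^d P_{acbd}\vert_{N,f}$ to be of first derivative order, which already suffices for second-order equations, so no amount of careful projector decomposition of $\nabla\nabla P$ versus $\nabla P$ will close the gap. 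The missing ingredient is the input that $\mathcal{L}$ is built from \emph{polynomial} curvature invariants: the no-higher-derivative conditions fix $\mathcal{L}_{N,f}$ only up to four free functions $\mathcal{H}_{0,\dots,3}(\mathcal{R}^{(4)}_{N,f})$ (cf.\ \eqref{eq:lagnfproceso2}), and one must then invoke the admissibility argument of Proposition~\ref{prop2} --- excluding terms like $g_1(\Rho_f)\Omega_f$ that cannot arise from polynomial contractions --- to reduce these to the single function $h$ of \eqref{eq:formhf}. Only for that one-parameter family does the divergence actually vanish. Your proposal never uses polynomiality, yet the equivalence is false without it.

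The second gap is the final step. The claim that turning on $t$-dependence ``preserves the algebraic spherical structure of $P$'' and hence the vanishing of its divergence is not correct as stated: the non-static ansatz \eqref{eq:ss} introduces a genuinely new off-diagonal block $\zeta_{ab}\propto\delta^t_{(a}\delta^r_{b)}$ in the Riemann tensor (the $\mathcal{R}^{(4)}_{\rm SS}\propto\partial_t f$ component), a corresponding new component $P^{(4)}_{\rm SS}$ of the entropy tensor, and two new components $A_{\rm SS}$, $C_{\rm SS}$ of $\nabla_c P_{ab}{}^{cd}$ built from \emph{time} derivatives of the $P^{(I)}_{\rm SS}$ --- none of which exist in the static case, so their vanishing cannot be inherited ``algebraically.'' The paper has to do real work here (Propositions~\ref{prop:ssfoem} and \ref{prop:ssres}): a Schouten-identity argument shows that every invariant on \eqref{eq:ss} depends on the new data only through $\big(\mathcal{R}^{(2)}_{\rm SS}-\mathcal{R}^{(3)}_{\rm SS}\big)^2-4\big(\mathcal{R}^{(4)}_{\rm SS}\big)^2$, which pins down $\mathcal{L}_{\rm SS}$ from $\mathcal{L}_{N,f}$ by a formal replacement, after which all four components $A_{\rm SS},B_{\rm SS},C_{\rm SS},E_{\rm SS}$ are checked to vanish explicitly. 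You should either reproduce this argument or find a substitute; it is not ``almost immediate.''
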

Interestingly enough, it is sufficient to demand second-order equations on general SSS configurations to have second-order equations \emph{also} for non-static SS solutions. On the other hand, notions \ref{def:3}, \ref{def:4} and \ref{def:5} trivially imply Notions \ref{def:1} and \ref{def:2}. However, the converse is not true: this will be made explicit in Remark \ref{rm:counterex}.  

Third, we focus on general SS backgrounds, removing the staticity condition. Specifically, we will examine which higher-curvature theories satisfy a Birkhoff theorem, as introduced in Definition \ref{def:birk}. We prove the following result:
\begin{theorem}
Let $\mathcal{L}(g^{ab},R_{cdef})$ be a certain theory of gravity built from analytic functions of polynomial curvature invariants.
\begin{enumerate}
    \item If it satisfies Birkhoff's theorem, then the theory fulfills Notions \ref{def:3}, \ref{def:4} and \ref{def:5}.
    \item If all static and spherically symmetric solutions are uniquely characterized by a single continuous integration constant and a possible additional discrete parameter, then the theory satisfies a Birkhoff theorem. 
\end{enumerate}
\label{thmBirk}
\end{theorem}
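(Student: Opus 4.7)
The plan is to exploit, in both directions, the severe restriction that a one-continuous-parameter SSS solution space imposes on the reduced equations of motion on \eqref{eq:sssintro}, and to combine this with the second-order nature of type II QTG field equations on general spherically symmetric backgrounds established in Section \ref{sec4}.

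For Part 1, I would substitute \eqref{eq:sssintro} into the action to obtain the reduced Lagrangian $L_{N,f}$, whose Euler--Lagrange equations $\mathcal{E}_N\equiv\delta L_{N,f}/\delta N$ and $\mathcal{E}_f\equiv\delta L_{N,f}/\delta f$ are, for a generic higher-curvature theory, ODEs of order strictly greater than two in $N$ and $f$. The diffeomorphism invariance of the underlying action translates, on the SSS ansatz, into a single $r$-differential relation between $\mathcal{E}_N$, $\mathcal{E}_f$ and their derivatives, effectively cutting the number of independent reduced equations to one. Under the Birkhoff hypothesis, this single ODE for $f$ in the gauge $N=1$ must admit only a one-parameter family of solutions, which forces it to be an exact $r$-derivative of a function of $(r,f)$ with no derivatives of $f$: $\mathcal{E}_N|_{N=1}=\mathrm{d}\mathcal{F}(r,f)/\mathrm{d}r$, so that its first integral $\mathcal{F}(r,f)=c$ is algebraic. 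I would then promote this on-shell statement to an off-shell statement about $L_{N,f}$ by successive integration by parts inside the reduced action, showing that all derivative terms assemble into the shape $N\,\mathrm{d}\mathcal{F}(r,f)/\mathrm{d}r+\mathrm{d}\mathcal{G}/\mathrm{d}r$ demanded by Notion \ref{def:3}. Theorem \ref{thm:2} then yields the equivalent Notions \ref{def:4} and \ref{def:5}.

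For Part 2, I would first note that the argument just sketched only uses the one-parameter property of the SSS solution space and not the staticity of non-static SS solutions, so under the weaker hypothesis of Part 2 the theory is already in the type II class. I would then invoke the Section \ref{sec4} result that type II QTGs have second-order equations on arbitrary, not necessarily static, spherically symmetric backgrounds. Substituting a general time-dependent SS ansatz of the form $\mathrm{d}s^{2}=-N(t,r)^{2}f(t,r)\mathrm{d}t^{2}+\mathrm{d}r^{2}/f(t,r)+r^{2}\mathrm{d}\Omega_{D-2}^{2}$ (after absorbing any off-diagonal $g_{tr}$ by an appropriate coordinate change) into these second-order field equations, the off-diagonal $(t,r)$ component forces $\partial_{t}f=0$, and the remaining components then impose that $N(t,r)$ is of separable form $n_{1}(t)\,n_{2}(r)$, whose $t$-dependent factor can be eliminated by redefining the time coordinate. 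This yields the hypersurface-orthogonal staticity required by Definition \ref{def:birk} and, combined with the one-parameter property assumed as input, establishes Birkhoff.

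The main obstacle I foresee is the backward-reconstruction step in Part 1: passing from the on-shell statement that $\mathcal{E}_N|_{N=1}$ is the $r$-derivative of a function of $(r,f)$ alone to the off-shell statement that $L_{N,f}$ itself has the shape \eqref{defi3}. This requires ruling out the possibility that higher-derivative terms in $L_{N,f}$ cancel out only on the one-parameter constraint surface, and must leverage the generality of the ansatz \eqref{eq:anapol} --- in which the couplings $\alpha_{j}^{(n)}$ at every order can be varied independently --- so that no tuned cancellation between different invariants can mimic the Notion \ref{def:3} structure without actually realizing it.
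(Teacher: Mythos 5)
Your overall strategy --- a one-parameter SSS solution space forces the reduced equations to be of low derivative order, hence the theory is a type II QTG, after which the second-order equations on general spherical backgrounds force staticity --- is the same as the paper's, but as written it contains two genuine gaps. First, in Part 1 you extract the key constraint only ``in the gauge $N=1$''. Setting $N=1$ is not a gauge choice for a generic higher-curvature theory: the general SSS solution involves two independent functions, and knowing that $\mathcal{E}_N|_{N=1}$ is a total $r$-derivative of an algebraic function of $(r,f)$ (together with $\delta L_f/\delta f=0$) only delivers Notions \ref{def:1} and \ref{def:2}, i.e.\ a type I QTG, not Notion \ref{def:3}. Remark \ref{rm:counterex} exhibits a theory ($\hat{\mathcal{Z}}_{(4)}$ with $\beta\neq 0$) that is type I but not type II, so no amount of integration by parts on the single-function data can reconstruct the two-function structure \eqref{defi3}; the ``backward-reconstruction obstacle'' you flag at the end is in fact insurmountable from that starting point. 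The paper instead uses the Birkhoff hypothesis directly on the two-function ansatz \eqref{eq:sssintro}: any derivative of degree higher than one in the $tt$ or $rr$ components would introduce unconstrained continuous integration constants, so those components must be first order; the contracted Bianchi identity then makes the angular equation second order (Proposition \ref{prop:sbprop}), and Proposition \ref{prop:ssres} yields type II. Relatedly, diffeomorphism invariance relates the three components of $\mathcal{E}_{ab}$ so as to leave \emph{two} independent equations (matching the two unknowns $N,f$), not one as you claim.

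Second, in Part 2 the staticity step is unjustified as written. For a type II QTG the $tr$ equation on \eqref{eq:ss} is $h'(\psi)\,\partial_t f=0$, and $\mathcal{E}_t{}^t-\mathcal{E}_r{}^r\propto h'(\psi)\,\partial_r N$, so $\partial_t f=0$ and $\partial_r N=0$ follow only where $h'(\psi)\neq 0$. If there exists $\psi_0$ with $h(\psi_0)=h'(\psi_0)=0$, the configurations $f=1-\psi_0 r^2$ with arbitrary $N(t,r)$ solve all the equations and Birkhoff fails (Remarks \ref{rem:csp} and \ref{rm:Birkpart}). The hypothesis of Part 2 --- that SSS solutions carry a single continuous integration constant --- is needed precisely to rule out such a $\psi_0$; your sketch invokes that hypothesis only for the uniqueness clause of Definition \ref{def:birk}, not for staticity, so the central implication ``off-diagonal component forces $\partial_t f=0$'' is missing its essential ingredient.
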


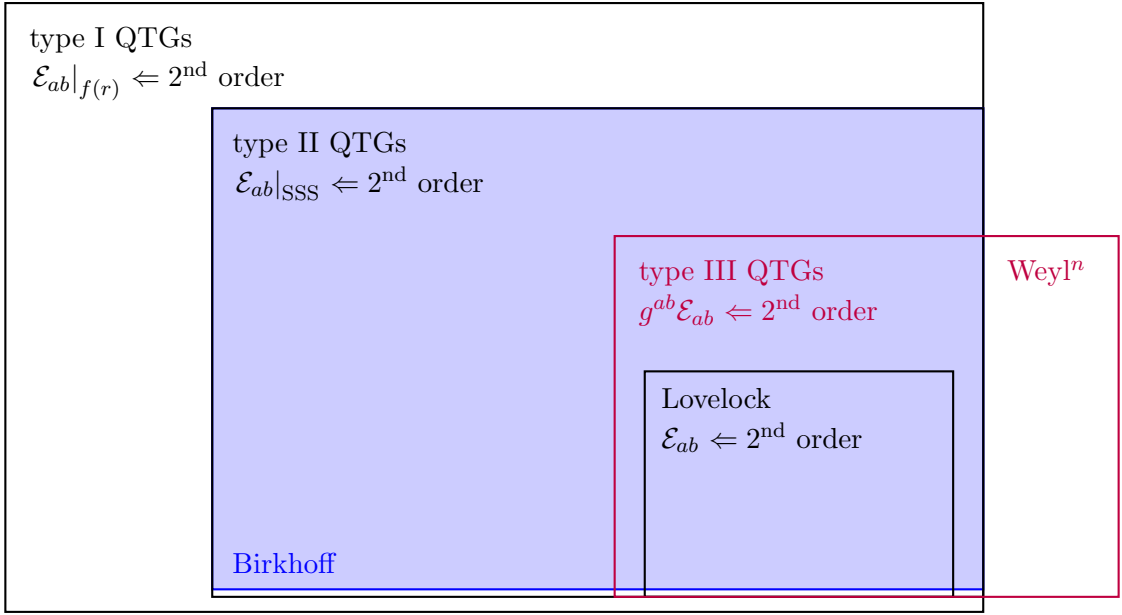
\begin{figure}[t!]
\centering
\begin{tikzpicture}

% Outermost rectangle (QTG1)
\node[draw, thick,minimum width=13cm, minimum height=8.1cm, anchor=north west] (R1) at (-1,0) {};
\node[anchor=north west, align=left] at (-0.8,-0.2) {type I QTGs \\ $\left.\mathcal{E}_{ab}\right|_{f(r)}$  $\Leftarrow 2^{\rm nd}$ order};

% Second rectangle (QTG2)

\node[draw, draw=blue, fill=blue!20,thick, minimum width=10.25cm, minimum height=6.4cm, anchor=north west] (R2) at (1.75,-1.4) {};
\node[anchor=north west, align=left] at (1.9,-7.2) { {\color{blue} Birkhoff}};

\node[draw, draw=black,thick, minimum width=10.25cm, minimum height=6.5cm, anchor=north west] (R2) at (1.75,-1.4) {};
\node[anchor=north west, align=left] at (1.9,-1.6) { type II QTGs \\$\left.\mathcal{E}_{ab}\right|_{\rm SSS}$ $\Leftarrow 2^{\rm nd}$ order};

% Birkhoff arrow
%\draw[<->, thick] (4.5,-1.9) -- (5.75,-1.9)
 %   node[right] {\color{blue} Birkhoff}
%    node[pos=0, left] {type II QTGs };

\node[draw, thick ,minimum width=4.1cm, minimum height=3cm, anchor=north west] (R4) at (7.5,-4.9) {};
\node[anchor=north west, align=left] at (7.6,-5) {Lovelock\\  $\mathcal{E}_{ab}$ $\Leftarrow 2^{\rm nd}$ order  };

% Third rectangle (QTG3)
\node[draw, thick, draw=purple,minimum width=6.7cm, minimum height=4.8cm, anchor=north west] (R3) at (7.1,-3.1) {};
\node[anchor=north west, align=left] at (7.3,-3.3) { \color{purple} type III QTGs \qquad \qquad \qquad  Weyl$^n$ \\ \color{purple} $g^{ab}\mathcal{E}_{ab}$ $\Leftarrow 2^{\rm nd}$ order};

% Innermost rectangle (Lovelock)

\end{tikzpicture}
\caption{Schematic representation of the different types of QTGs as classified in this paper. The broadest class, type I QTGs (interior of largest black rectangle), corresponds to theories for which the equations of motion evaluated on the SSS single-function ansatz are second order. It contains the set of type II QTGs (interior of second largest black rectangle), which  are theories for which the equations of motion evaluated on a general SSS ansatz are second order. The type II set  turns out to be almost identical to the set of theories which satisfy a Birkhoff theorem (region shaded in blue). The difference between both is a zero-measure set of type II QTGs which do not satisfy a Birkhoff theorem (represented by a horizontal thin rectangle bounded by black and blue lines). The set of type III QTGs (interior of the purple rectangle) involves theories with second-order traced equations. Observe that it is not completely contained within the type II or type I classes, as the addition of pure Weyl invariants, which possess second-order traced field equations, may take a theory out of the type II and type I sets. The type III and type II sets also contain Lovelock theories, for which the equations of motion are second order in general backgrounds. }
\label{fig:qtg-nested}
\end{figure}

As a consequence, if $\mathcal{L}$ is any higher-curvature gravity built from arbitrary polynomial contractions of Riemann curvature tensors and satisfies a Birkhoff theorem, it must satisfy Notions \ref{def:3}, \ref{def:4} and \ref{def:5}. There is no other option. The converse statement is \emph{almost} true: there exists a zero-measure set of QTGs of this kind that do not satisfy a Birkhoff theorem. This will be clarified in Remark \ref{rm:Birkpart}.

Regarding the relationship between Notion \ref{def:6} and the others, we find the following:
\begin{theorem}
Let $\mathcal{L}(g^{ab},R_{cdef})$ satisfy Notion \ref{def:6}. Then, there exists an analytic function of the Weyl invariants $W_2\equiv W_{ab}^{cd}W^{ab}_{cd} $ and $W_3\equiv W_{ab}^{cd} W_{cd}^{ef} W_{ef}^{ab}$, $\mathcal{L}'=\mathcal{L}'(W_2,W_3)$, such that $\mathcal{L}(g^{ab},R_{cdef})- \mathcal{L}'(W_2,W_3)$ satisfies Notions \ref{def:3}, \ref{def:4} and \ref{def:5}.
\label{thm4}
\end{theorem}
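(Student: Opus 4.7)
The plan is to address both directions of the decomposition: first, show that any $\mathcal{L}'(W_2,W_3)$ is itself a type III QTG; second, show that the full fourth-derivative obstruction of a type III QTG to being a type II QTG is encoded in a pure-Weyl function of $W_2$ and $W_3$ alone.

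First, I would establish that every analytic $\mathcal{L}'(W_2,W_3)$ satisfies Notion \ref{def:6}. Writing $\mathcal{L}'$ as a function of $R_{abcd}$ via the Weyl decomposition, the entropy tensor
\begin{equation*}
P^{abcd}_{\mathcal{L}'} = \frac{\partial \mathcal{L}'}{\partial W_2}\frac{\partial W_2}{\partial R_{abcd}} + \frac{\partial \mathcal{L}'}{\partial W_3}\frac{\partial W_3}{\partial R_{abcd}}
\end{equation*}
inherits the algebraic symmetries of the Weyl tensor; in particular, all its traces vanish, $g^{ab}P^{\mathcal{L}'}_{acbd}=0$. Since the metric is covariantly constant, the only fourth-derivative piece of the traced equation reads $2g^{ab}\nabla^c\nabla^d P^{\mathcal{L}'}_{acbd}=2\nabla^c\nabla^d(g^{ab}P^{\mathcal{L}'}_{acbd})=0$. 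As the remaining contributions to $g^{ab}\mathcal{E}_{ab}^{\mathcal{L}'}$ are purely algebraic in curvature, $\mathcal{L}'$ satisfies Notion \ref{def:6}.

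Next, I would analyse a generic $\mathcal{L}$ satisfying Notion \ref{def:6}. Its traced equation carries fourth-derivative content only through $\nabla^c\nabla^d(g^{ab}P_{acbd})$, which the hypothesis forces to reduce to second-order terms. Decomposing $P^{abcd}$ into its Ricci-type traces and its trace-free (``Weyl-type'') part, this imposes algebraic relations among the coefficients of $\mathcal{L}$ at every curvature order, leaving a controlled residual obstruction to Notion \ref{def:5}. The main step is then to exploit the rigidity of the Weyl tensor on SSS backgrounds \req{eq:sssintro}: spherical symmetry forces $W^{ab}{}_{cd}$ to be determined by a single ``Coulomb'' scalar $w(r)$, so every polynomial Weyl invariant reduces on SSS to a function of $w$. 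Since $W_2|_{N,f}\propto w^2$ and $W_3|_{N,f}\propto w^3$ together encode both the magnitude and the sign of $w$, for any polynomial Weyl invariant $\mathcal{W}$ there exists an analytic $F_\mathcal{W}$ with $\mathcal{W}|_{N,f}=F_\mathcal{W}(W_2,W_3)|_{N,f}$. I would upgrade this scalar identity to a tensorial matching at the level of $P^{abcd}|_{N,f}$ and its divergence, using that the independent SSS Weyl structure is essentially unique.

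Finally, I would construct $\mathcal{L}'(W_2,W_3)$ order by order in curvature: at each order, the Notion-\ref{def:6} hypothesis pins the residual fourth-derivative obstruction to $\mathcal{E}_{ab}|_{N,f}$ inside the space spanned by SSS reductions of Weyl invariants, which by the previous step is exhausted by monomials in $W_2$ and $W_3$. Summing these matches produces the required analytic $\mathcal{L}'$, and by Theorem \ref{thm:2} the cancellation of the fourth-derivative piece of $(\mathcal{E}_{ab}-\mathcal{E}_{ab}^{\mathcal{L}'})|_{N,f}$ delivers Notions \ref{def:3}, \ref{def:4} and \ref{def:5} simultaneously. The hard part will be the tensorial upgrade in the third step: lifting the scalar identity $\mathcal{W}|_{N,f}=F_\mathcal{W}(W_2,W_3)|_{N,f}$ to a matching of the entropy tensor's divergence on SSS, so that a single analytic function of $W_2$ and $W_3$ cancels the whole residual fourth-derivative obstruction and does so coherently across all curvature orders.
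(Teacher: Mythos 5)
Your overall strategy matches the paper's: restrict to SSS backgrounds, isolate the piece of the Lagrangian responsible for higher derivatives, observe that on spherical symmetry every polynomial Weyl invariant collapses to a power of the single scalar $\Omega$, and use $W_2\propto\Omega^2$, $W_3\propto\Omega^3$ to realize that obstruction as an analytic function of $W_2$ and $W_3$. The final matching step and the invocation of Theorem \ref{thm:2} are exactly right. However, the central claim of your argument --- that ``the Notion-\ref{def:6} hypothesis pins the residual fourth-derivative obstruction to $\mathcal{E}_{ab}|_{N,f}$ inside the space spanned by SSS reductions of Weyl invariants'' --- is asserted rather than proven, and it is precisely where all the work lies. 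The paper's Proposition \ref{prop:2ndtr2} establishes it by writing $\mathcal{L}_{N,f}$ as a function of the four independent curvature components $\mathcal{R}^{(K)}_{N,f}$, translating the second-order-trace condition into explicit PDEs (absence of third and then second derivatives of $f,N$ in $\mathcal{C}_{N,f}=A_{N,f}+(D-2)B_{N,f}$), solving them, and recognizing that the only surviving ``non--type-II'' dependence is through the combination $\mathcal{R}^{(1)}-\mathcal{R}^{(2)}-\mathcal{R}^{(3)}+2\mathcal{R}^{(4)}\propto\Omega$. Crucially, one must then also invoke the admissibility constraints of Proposition \ref{prop2} (the requirement that $\mathcal{L}_{N,f}$ descend from an actual polynomial curvature invariant, which fixes $h_0,h_1,h_2$ in terms of a single function $h$) to conclude that the remainder is a genuine type II QTG rather than merely a Lagrangian with the right derivative structure. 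Neither of these steps appears in your outline; your general-background decomposition of $P^{abcd}$ into trace parts is both harder than necessary and never brought to a conclusion.

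Two smaller points. First, the ``tensorial upgrade'' you flag as the hard part is actually not needed: by Proposition \ref{prop:lagsss} (equivalently Theorem \ref{thm:2}), membership in the type II class is fully characterized by the \emph{scalar} form of $\mathcal{L}_{N,f}$ as a function of the $\mathcal{R}^{(K)}_{N,f}$, so the identity $\mathcal{W}|_{N,f}=F_{\mathcal{W}}(W_2,W_3)|_{N,f}$ at the level of the on-shell Lagrangian already suffices --- there is no separate matching of $\nabla_aP^{abcd}$ to perform. Second, your opening step showing that $\mathcal{L}'(W_2,W_3)$ itself satisfies Notion \ref{def:6} is correct (tracelessness of $\partial W_{2,3}/\partial R_{abcd}$) but is not required for the theorem as stated, which only asks that $\mathcal{L}-\mathcal{L}'$ be of type II; and you should speak of cancelling all derivatives above second order, not only the fourth-derivative terms.
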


We refer the reader to Remark \ref{rem:cex2ndtrace}, where we show an example of a theory satisfying Notions \ref{def:3}, \ref{def:4} and \ref{def:5} but without second-order traced-field equations, even if one allows for the addition of pure Weyl invariants. 

In sum, the various notions of QTGs identified above reduce to three different classes of higher-curvature theories:
\begingroup
\renewcommand\labelitemi{\ding{117}}
\begin{itemize}
\item \textbf{QTGs of type I}. These satisfy Notions \ref{def:1} and \ref{def:2}.  
\item \textbf{QTGs of type II}. These fulfill Notions \ref{def:3}, \ref{def:4} and \ref{def:5}.  
\item \textbf{QTGs of type III}. These satisfy Notion \ref{def:6}.
\end{itemize}
\endgroup

We will use the nomenclature interchangeably: \emph{type X QTGs} or \emph{QTGs of type X}, where $X=\left\lbrace\right.$I,\,II,\,III$\left. \right\rbrace$. From our previous discussions, it is possible to add/remove pure  Weyl invariants to QTGs of type III to convert them into QTGs of type II. Also, QTGs of type II naturally belong to the type I class. Nevertheless, theories of type I need not be of type II, and theories of this latter type need not correspond to a QTG of type III, even with the addition or subtraction of pure Weyl invariants. Note that Einstein gravity and all non-trivial Lovelock gravities in a given dimension are QTGs of type III that also belong to the type II class.

Observe that QTGs of type III may actually provide higher-order equations on spherical backgrounds --- just take any pure Weyl invariant. Similarly, QTGs of type I may yield in general higher-derivative equations on general (S)SS configurations. Unlike them, QTGs of type II do provide second-order equations for these space-times, being actually the unique theories of gravity fulfilling a Birkhoff theorem --- see Theorem \ref{thmBirk}. Furthermore, QTGs of type II are known to exist at all curvature orders \cite{Bueno:2019ycr,Bueno:2024dgm,Bueno:2024zsx} and can be mapped into any effective theory of gravity by perturbative field redefinitions of the metric \cite{Bueno:2019ltp,Bueno:2024dgm}. Therefore, it seems that QTGs of type II represent, by far, the most natural notion of quasi-topological gravity.

\section{Theories with $2^{\rm nd}$-order equations on single-function SSS backgrounds}\label{sec2}

Let us consider the following single-function SSS configurations:
\begin{equation}
\mathrm{d}s_{f}^2=-f(r) \mathrm{d} t^2+\frac{\mathrm{d} r^2}{f(r)}+r^2 \mathrm{d} \Omega_{D-2}^2\,.
\label{eq:sssn1}
\end{equation}
The Riemann curvature tensor of \eqref{eq:sssn1} takes the following form:
\begin{equation}
\left. R_{ab}{}^{cd} \right \vert_f=-2f'' \tau_{[a}^{[c}  \rho_{b]}^{d]} -\frac{2 f'}{r} \left (\tau_{[a}^{[c}+\rho_{[a}^{[c} \right ) \sigma_{b]}^{d]}+2 \frac{(1-f)}{r^2} \sigma_{[a}^{[c}  \sigma_{b]}^{d]}\,,
\label{eq:Rformf}
\end{equation}
where $\tau_{a}^b=\delta^t_a\delta^b_t$, $\rho_a{}^b=\delta^r_a\delta^b_r$ and $\sigma_a^b=\delta_a^b-\tau_a^b-\rho_a^b$ stand for the projectors into the temporal, radial and the angular components, respectively. These projectors satisfy:
\begin{equation}
\label{eq:proy}
\tau_{a}^b \tau_b^c=\tau_a^c \,, \quad  \rho_{a}^b \rho_b^c=\rho_a^c\,, \quad  \sigma_a^b \sigma_b^c=\sigma_a^c\,, \quad \tau_a^b \rho_b^c= \sigma_a^b \tau_b^c=\sigma_a^b \rho_b^c=0\,, \quad\tau_a^a=\rho_a^a=\frac{\sigma_a^a}{D-2}=1\,.
\end{equation}
Such a structure for the Riemann curvature tensor enforces the following expression for the entropy tensor \cite{Bueno:2019ycr}:
\begin{equation}
\label{eq:Pf}
\left. P_{ab}{}^{cd} \right \vert_f=P^{(1)}_f (r) \tau_{[a}^{[c}  \rho_{b]}^{d]} +P^{(2)}_f (r) \left (\tau_{[a}^{[c}+\rho_{[a}^{[c} \right )  \sigma_{b]}^{d]}+P^{(3)}_f (r) \sigma_{[a}^{[c}  \sigma_{b]}^{d]}\,,
\end{equation}
where
\begin{equation}
P^{(1)}_f=-2\frac{\partial \mathcal{L}_f}{\partial f''}\,, \quad P^{(2)}_f=-\frac{r}{(D-2)}\frac{\partial \mathcal{L}_f}{\partial f'}\,, \quad P^{(3)}_f=-\frac{r^2}{(D-2)(D-3)}\frac{\partial \mathcal{L}_f}{\partial f}\,.
\label{eq:Pformf}
\end{equation}
Direct computations show that:
\begin{align}
\left.\nabla_e P_{ab}{}^{ed} \right \vert_f&=A_f\,  \delta_{[a}{}^r\tau_{b]}{}^d + B_f  \, \delta_{[a}{}^r \sigma_{b]}{}^d\,, \\ \nonumber
\left. \nabla^a \nabla_e P_{ab}{}^{ed} \right \vert_f&=\left[\frac{f A_f'}{2}+\frac{(r f'+2(D-2)f)A_f}{4r}  \right] \tau_b{}^d+ \left[ \frac{f' A_f}{4}+\frac{(D-2)f B_f}{2r}\right] \rho_b{}^d \\ \label{eq:ccPf} & +\left[\frac{f B_f'}{2}+\frac{(r f'+(D-3)f)B_f}{2r}  \right] \sigma_b{}^d\,,
\end{align}
where we have defined:
\begin{equation}
A_f=\frac{1}{2}\frac{\mathrm{d} P^{(1)}_f}{\mathrm{d}r}+\frac{(D-2)\left (P^{(1)}_f-P^{(2)}_f \right )}{2r}\,, \quad B_f=\frac{1}{2}\frac{\mathrm{d} P^{(2)}_f}{\mathrm{d}r}+\frac{(D-3)\left  (P^{(2)}_f-2P^{(3)}_f \right)}{2r} \,.
\end{equation}
For future purposes, it is convenient to define 
\begin{equation}
\psi\equiv \frac{1-f}{r^2}\,.
\end{equation}
\begin{proposition}
\label{prop1}
Let $\mathcal{L}(g^{ab},R_{cdef})$ be a higher-curvature gravity built from polynomial combinations of curvature invariants. Assume its equations of motion on \eqref{eq:sssn1} contain at most second derivatives of $f(r)$. Then, the evaluation of $\mathcal{L}(g^{ab},R_{cdef})$ on the background \eqref{eq:sssn1}, denoted as $\mathcal{L}_f$, takes the form:
\begin{equation}
\label{eq:laghf}
\mathcal{L}_f=h_0 \left (\psi \right) f''+\frac{\partial h_0 (\psi)}{\partial f}(f')^2+h_1 \left (\psi \right)\frac{f'}{r}+h_2 \left (\psi \right)\,, 
\end{equation}
for certain single-variable functions $h_0,h_1$ and $h_2$. 
\end{proposition}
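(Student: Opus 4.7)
The plan rests on two independent inputs: the structural consequence of working with polynomial curvature invariants, which restricts the functional form of $\mathcal{L}_f$, and the derivative-order constraint read off from the explicit expression \eqref{eq:ccPf}. I would begin by observing that, because the projectors $\tau_a{}^b,\,\rho_a{}^b,\,\sigma_a{}^b$ in \eqref{eq:Rformf} carry no explicit $r$-dependence, any polynomial contraction of Riemann tensors with the metric reduces on the ansatz \eqref{eq:sssn1} to a polynomial in the three ``building blocks'' $(f'',\,f'/r,\,\psi)$, with constant coefficients depending only on $D$ and the couplings. Consequently $\mathcal{L}_f$ is a polynomial in these three variables; in particular $f$ and $r$ enter only through $\psi$ and the combination $f'/r$, so any partial derivative with respect to $f$ at fixed $r$ equals $-1/r^2$ times a derivative with respect to $\psi$. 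This restricted functional dependence will be the crucial structural input for the last step.

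Next I would exploit the explicit formula \eqref{eq:ccPf}. Its $\rho_b{}^d$ entry is algebraic in $A_f$ and $B_f$, whereas the $\tau_b{}^d$ and $\sigma_b{}^d$ entries involve the extra derivatives $A_f'$ and $B_f'$. Second-order equations of motion therefore require that $A_f$ and $B_f$ themselves be at most first-order in derivatives of $f$. Tracking the $f''$-content through the definitions \eqref{eq:Pformf} of $P^{(i)}_f$, $A_f$ is first-order precisely when $\partial^2\mathcal{L}_f/\partial(f'')^2=0=\partial^2\mathcal{L}_f/\partial f'\partial f''$; equivalently, $\mathcal{L}_f$ is linear in $f''$ and the coefficient of $f''$ (which I shall temporarily denote by $\alpha(f,r)$) is independent of $f'$. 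Writing $\mathcal{L}_f=\alpha(f,r)\,f''+\beta(f',f,r)$, the remaining $f''$-dependence in $B_f$ has two origins: a contribution proportional to $\partial^2\beta/\partial(f')^2$ coming from $\partial_r P^{(2)}_f$ and a contribution proportional to $\partial\alpha/\partial f$ coming from $P^{(3)}_f$. Their cancellation is the content of the single relation $\partial^2\beta/\partial(f')^2=2\,\partial\alpha/\partial f$.

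Finally I would feed these analytic constraints back into the restricted functional form of step one. That $\alpha$ is independent of $f'$, combined with $\mathcal{L}_f$ depending on $(f,r)$ only through $\psi$, forces $\alpha=h_0(\psi)$. The relation above then reads $\partial^2\beta/\partial(f')^2=-2h_0'(\psi)/r^2$, and since $\beta$ depends on $(f,r)$ only through $(f'/r,\psi)$, $\beta$ must be quadratic in $f'/r$ with $\psi$-dependent coefficients. Two integrations yield $\beta=(\partial h_0/\partial f)(f')^2+h_1(\psi)\,f'/r+h_2(\psi)$, which together with $\alpha\,f''=h_0(\psi)\,f''$ reproduces \eqref{eq:laghf}. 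The main technical obstacle is the second-order bookkeeping in $B_f$: one must recognize the two independent sources of $f''$ in it and see that their coefficients conspire into the clean relation between $\alpha$ and $\beta$ displayed above; the rest of the argument is essentially organizational.
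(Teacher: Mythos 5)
Your proposal is correct and follows essentially the same route as the paper's proof: both extract the conditions $\partial^2\mathcal{L}_f/\partial(f'')^2=\partial^2\mathcal{L}_f/\partial f'\partial f''=0$ and $\partial^2\mathcal{H}/\partial(f')^2=2\,\partial h_0/\partial f$ from the requirement that \eqref{eq:ccPf} produce no third or fourth derivatives, and both then invoke the fact that polynomial invariants evaluated on \eqref{eq:sssn1} depend on $(r,f,f',f'')$ only through the building blocks $f''$, $f'/r$ and $\psi$ to fix the coefficient functions to be single-variable functions of $\psi$. The only difference is organizational (you phrase the derivative-order constraints as $A_f$ and $B_f$ being first order and frontload the structural observation, while the paper differentiates $\nabla^a\nabla_e P_{ab}{}^{ed}$ directly with respect to $f''''$ and $f'''$ and uses the structural input at the end).
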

\begin{proof}
Assume that the equations of motion of $\mathcal{L}(g^{ab},R_{cdef})$ on single-function SSS backgrounds \eqref{eq:sssn1} contain at most second-order derivatives of $f(r)$. This implies that the term $\nabla^a \nabla_e P_{ab}{}^{ed}$ (the source of higher derivatives in the equations of motion) does not provide any higher-derivatives of $f(r)$ upon evaluation on \eqref{eq:sssn1}. 

We already presented $\left. \nabla^a \nabla_e P_{ab}{}^{ed} \right \vert_f$ in \eqref{eq:ccPf}, so we just need to study the sufficient and necessary conditions for fourth and third derivatives of $f(r)$ to be absent. Let us first focus on fourth derivatives. Indeed, these may only appear from those terms which correspond to second derivatives of $P_f^{(1)}$ and $P_f^{(2)}$ (following the notation introduced in \eqref{eq:Pf}). As a matter of fact, by direct computation one may check that:
\begin{equation}
\frac{\partial \left. \nabla^a \nabla_e P_{ab}{}^{ed} \right \vert_f}{\partial f''''}=-\frac{f}{2} \frac{\partial^2 \mathcal{L}_f}{\partial (f'')^2}\tau_b{}^d-\frac{r f}{4(D-2)}  \frac{\partial^2 \mathcal{L}_f}{\partial f' \partial f''}\sigma_b{}^d\,.
\end{equation}
Therefore, no fourth derivatives in the equations of motion on single-function SSS backgrounds requires:
\begin{equation}
\label{eq:no4df}
\frac{\partial^2 \mathcal{L}_f}{\partial (f'')^2}=\frac{\partial^2 \mathcal{L}_f}{\partial f' \partial f''}=0\,.
\end{equation}
This condition forces $\mathcal{L}_f$ to adopt the form:
\begin{equation}
\label{eq:lfprov}
\mathcal{L}_f=h_0(r,f)f''+\mathcal{H} (r,f,f')\,,
\end{equation}
where $h_0$ and $\mathcal{H}$ are certain functions depending on the indicated variables. Let us now obtain the conditions for the absence of third derivatives of $f(r)$. To this aim, we compute:
\begin{equation}
\frac{\partial \left. \nabla^a \nabla_e P_{ab}{}^{ed} \right \vert_f}{\partial f'''}=-\frac{r f}{4(D-2)} \left[ \frac{\partial^2 \mathcal{H}}{\partial (f')^2}-2\frac{\partial h_0}{\partial f}\right] \sigma_b{}^d\,.
\end{equation}
The latter vanishes if and only if:
\begin{equation}
\mathcal{H}(r,f,f')=\frac{\partial h_0 (r,f)}{\partial f} (f')^2+h_1(r,f) f'+h_2(r,f)\,,
\end{equation}
where $h_1$ and $h_2$ are undetermined functions of the indicated variables.

However, let us recall that $\mathcal{L}_f$ is obtained from the combination of curvature invariants evaluated on the single-function SSS ansatz \eqref{eq:sssn1}. Observe the form of the Riemann curvature tensor on \eqref{eq:sssn1} that was presented in \eqref{eq:Rformf}. Up to numerical prefactors, any component of this Riemann tensor takes one of the following forms:
\begin{equation}
\left\lbrace f''\,, \quad \frac{f'}{r}\,, \quad \psi=\frac{1-f}{r^2}\right\rbrace\,.
\end{equation}
As a consequence, $h_0(r,f)$ may only depend on $(r,f)$ through $\psi$. By similar arguments, $h_1$ and $h_2$ must be single-variable functions of $\psi$. Consequently:
\begin{equation}
\mathcal{L}_f=h_0 \left (\psi\right) f''+\frac{\partial h_0 (\psi)}{\partial f}(f')^2+h_1 \left (\psi \right)\frac{f'}{r}+h_2 \left (\psi \right)\,, 
\end{equation}
and we conclude.
\end{proof}

Our next goal will be to restrict the expression of the functions $h_0, h_1$ and $h_2$ in \eqref{eq:laghf}. To this aim, it will be convenient to have at hand the expressions for the Weyl curvature tensor $W_{abcd}$, the traceless Ricci tensor $Z_{ab}=R_{ab}-1/D g_{ab}R$ and Ricci scalar $R$ on \eqref{eq:sssn1}. From the expression of the Weyl curvature tensor, we get \cite{Moreno:2023rfl}:
\begin{align}
\label{eq:weylsimf}
\left.W_{ab}{}^{cd}\right\vert_f&=\Omega_f \left[ (D-2)(D-3) \tau_{[a}^{[c} \rho_{b]}^{d]}-(D-3)\left (\tau_{[a}^{[c}+\rho_{[a}^{[c} \right ) \sigma_{b]}^{d]}+ \sigma_{[a}^{[c} \sigma_{b]}^{d]} \right]\, , \\ 
\left.Z_a^b\right\vert_f&=\Theta_f\left[ -\frac{D-2}{2} \left (\tau_a^b+\rho_a^b \right)+ \sigma_a^b \right ]\, , \quad \left. R \right \vert_f= \Rho_f\,.
\label{eq:ricimf}
\end{align}
where
\begin{align}
\label{eq:omsimf}
\Omega_f=\frac{4-4f+4 r f'-2r^2 f''}{(D-1)(D-2) r^2}&\, , \quad \Theta_f=\frac{2(D-3)(1-f)+(D-4) r f'
+r^2 f''}{D r^2}\,, \\
\Rho_f=&\frac{(D-2)((D-3)(1-f)-2r f')-r^2 f''}{r^2}\,.
\label{eq:scalsimf}
\end{align}

\begin{proposition}
\label{prop2}
Let $\mathcal{L}(g^{ab},R_{cdef})$ be a higher-curvature gravity built from polynomial combinations of curvature invariants. Assume its equations of motion on \eqref{eq:sssn1} are of second order in derivatives. The functions $h_0$, $h_1$ and $h_2$ in \eqref{eq:laghf} can be entirely parametrized as follows:
\begin{equation}
\label{eq:formhf}
h_0(\psi)=\frac{\psi^{(D-2)/2}}{2} \int  \frac{h'(\psi)}{\psi^{D/2}}\mathrm{d}\psi\,, \quad h_1(\psi)=-2 h'(\psi)\,, \quad h_2(\psi)=(D-1)h(\psi)-2\psi h'(\psi)\,,
\end{equation}
where $h$ is an arbitrary analytic function of $\psi$.
\end{proposition}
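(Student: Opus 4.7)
Starting from Proposition~\ref{prop1}, one has the form
\begin{equation*}
\mathcal{L}_f = h_0(\psi)\, f'' - \frac{h_0'(\psi)}{r^2}(f')^2 + h_1(\psi)\,\frac{f'}{r} + h_2(\psi)\,,
\end{equation*}
where $\psi = (1-f)/r^2$ and primes on $h_i$ denote $d/d\psi$. The plan is to extract the additional consequences of the second-order hypothesis, which should further reduce the triple $(h_0, h_1, h_2)$ to a single master function $h(\psi)$.

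First, I would rewrite the reduced Lagrangian $L_f = r^{D-2}\mathcal{L}_f$ by integrating the $f''$ term by parts; using $d\psi/dr = -f'/r^2 - 2\psi/r$ a short calculation yields
\begin{equation*}
L_f = \frac{d}{dr}\bigl[r^{D-2}\, h_0(\psi)\, f'\bigr] + r^{D-3}\, G(\psi)\, f' + r^{D-2}\, h_2(\psi)\,,
\end{equation*}
with $G(\psi) \equiv -(D-2) h_0 + 2\psi\, h_0' + h_1$. Second, I would compute the entropy-tensor components through \eqref{eq:Pformf}---obtaining $P^{(1)}_f = -2 h_0$, $P^{(2)}_f = [2 h_0'\, f'/r - h_1]/(D-2)$, and $P^{(3)}_f = [h_0' f'' - h_0'' (f')^2/r^2 + h_1' f'/r + h_2']/[(D-2)(D-3)]$---and plug these into the definitions just below \eqref{eq:ccPf} to find $A_f = (1/r)[2\psi\, h_0' - (D-2) h_0 + h_1/2]$, manifestly independent of $f'$ and $f''$, while $B_f$ carries no $f''$ term. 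The three scalar field-equation components $\mathcal{E}^\tau|_f, \mathcal{E}^\rho|_f, \mathcal{E}^\sigma|_f$ would then be obtained from \eqref{eq:eomgen} by combining the algebraic piece $P_{acde}R_b{}^{cde}|_f - \tfrac{1}{2} g_{ab}\mathcal{L}_f$ (via \eqref{eq:Pf} and \eqref{eq:Rformf}) with $2\nabla^c\nabla^d P_{acbd}|_f$ from \eqref{eq:ccPf}.

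Third, I would impose that these three scalar equations form a consistent system for the single function $f(r)$. Under the second-order hypothesis, this consistency is enforced by the Bianchi identity $\nabla^a\mathcal{E}_{ab}|_f = 0$ together with the $t\!\leftrightarrow\!r$ symmetry of the single-function ansatz (which demands $\mathcal{E}^\tau|_f = \mathcal{E}^\rho|_f$ as an identity in $f, f', f''$), and would produce two independent constraints on $(h_0, h_1, h_2)$. The first says that $L_f$ must be a total derivative in $r$, whose integrability condition $\partial_r\partial_f F = \partial_f\partial_r F$ reads $(D-3)\, G - 2\psi\, G' + h_2' = 0$; introducing the master function $h(\psi)$ via $G = -h'$, this integrates (using integration by parts on the $\psi G'$ term) to $h_2 = (D-1) h - 2\psi\, h'$. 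The second reads $A_f \equiv 0$, i.e., $2\psi\, h_0' - (D-2) h_0 + h_1/2 = 0$. Combining the two via $G = -h'$ would then produce $h_1 = -2\, h'$ along with the first-order linear ODE $2\psi\, h_0' - (D-2) h_0 = h'$ for $h_0$.

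Finally, I would solve the $h_0$ ODE using the integrating factor $\psi^{-(D-2)/2}$: it rewrites as $(d/d\psi)[\psi^{-(D-2)/2}\, h_0] = h'(\psi)/(2\psi^{D/2})$, and a single integration yields $h_0(\psi) = \tfrac{\psi^{(D-2)/2}}{2}\int h'(\psi)/\psi^{D/2}\, d\psi$; collecting these expressions reproduces the parametrization \eqref{eq:formhf}. The main obstacle will be deriving the two consistency conditions from the tensor-projected field equations: this requires systematic and lengthy bookkeeping with the explicit expressions \eqref{eq:Pf}, \eqref{eq:Rformf}, and \eqref{eq:ccPf}, though the outcome is completely determined by the tensor structure on single-function SSS and the remaining ODE integrations are elementary.
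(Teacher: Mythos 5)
There is a genuine gap in step three, and it is fatal to the argument as written. Your two ``consistency conditions'' --- that $L_f$ be a total derivative (equivalently $\mathcal{E}^\tau|_f=\mathcal{E}^\rho|_f$, i.e.\ $\delta L_f/\delta f=0$) and that $A_f\equiv 0$ --- do \emph{not} follow from the hypothesis that the equations of motion on \eqref{eq:sssn1} are second order. Indeed, once $\mathcal{L}_f$ has the form \eqref{eq:laghf} guaranteed by Proposition~\ref{prop1}, your own intermediate computations already show that $A_f$ is free of $f'$ and $f''$ and that the $f''$ terms in $B_f$ cancel; hence $A_f'$ and $B_f'$ in \eqref{eq:ccPf} are at most second order for \emph{arbitrary} functions $h_0,h_1,h_2$ of $\psi$, and the second-order hypothesis imposes nothing further. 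The $t\leftrightarrow r$ symmetry of the ansatz does not rescue the first constraint either: while $P_{acde}R_b{}^{cde}|_f$ and $g_{ab}\mathcal{L}_f$ are symmetric under $\tau\leftrightarrow\rho$, the term $2\nabla^c\nabla^d P_{acbd}|_f$ is not (compare the $\tau$ and $\rho$ coefficients in \eqref{eq:ccPf}; their difference is $fA_f'+(D-2)f(A_f-B_f)/r$, which is nonzero for generic $h_0,h_1,h_2$). The conditions you invoke are precisely the content of Proposition~\ref{prop:sssn1res} and Theorem~\ref{thm:1}, which the paper establishes \emph{after} and \emph{using} Proposition~\ref{prop2}; assuming them here is circular.

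The missing idea is that the extra constraints come not from second-orderness but from \emph{representability}: $\mathcal{L}_f$ must be the restriction of an actual polynomial contraction of curvature tensors. The paper trades $(f,f'/r,f'')$ for the building blocks $(\Omega_f,\Theta_f,\Rho_f)$ of the Weyl tensor, traceless Ricci tensor and Ricci scalar, and observes that a complete polynomial contraction containing exactly one Weyl or one traceless Ricci factor vanishes identically, so the expansion \eqref{eq:lexpanf} of $\mathcal{L}_f$ cannot contain terms of the form $g_1(\Rho_f)\,\Omega_f$ or $g_2(\Rho_f)\,\Theta_f$. Setting those coefficients to zero yields exactly \eqref{eq:h1f} and \eqref{eq:h2f}, which are equivalent to the two ODEs you wrote down; from that point on your integrations (the integrating factor $\psi^{-(D-2)/2}$, $h_1=-2h'$, $h_2=(D-1)h-2\psi h'$) are correct and reproduce \eqref{eq:formhf}. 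So the algebraic endgame of your proposal is sound, but you need to replace the claimed field-equation consistency argument with the representability argument (or some equivalent use of the polynomial-invariant hypothesis, which your proof never actually invokes).
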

\begin{proof}
Using \eqref{eq:omsimf} and \eqref{eq:scalsimf}, we may express $f(r)$ and its derivatives in terms of the functions $\Omega_f$, $\Theta_f$ and $\Rho_f$ determining any curvature invariant on the single-function SSS ansatz \eqref{eq:sssn1}:
\begin{align}
\frac{1-f}{r^2}&=\frac{\Rho_f}{D(D-1)}+\frac{\Omega_f}{2}+\frac{2 \Theta_f}{D-2}\,, \\
\frac{f'}{r}&=\frac{2\Rho_f}{D(1-D)}+\frac{(D-3)\Omega_f}{2}+\frac{(D-4) \Theta_f}{D-2}\,, \\
f''&=\frac{2\Rho_f}{D(1-D)}-\frac{(D-2)(D-3) \Omega_f}{2} +2\Theta_f\,.
\end{align}
Substituting these results on the form of the Lagrangian \eqref{eq:laghf}, we obtain $\mathcal{L}_f$ in terms of $\Omega_f$, $\Theta_f$ and $\Rho_f$. However, for $\mathcal{L}_f$ to arise from an actual combination of curvature invariants, some necessary conditions must hold. In particular, if one expands $\mathcal{L}_f$ for sufficiently small\footnote{We are assuming that $\mathcal{L}_f$ may be written as a power series in the variables $\Omega_f$, $\Theta_f$ and $\Rho_f$, which is consistent with the fact that the theory is constructed from arbitrary polynomial combinations of curvature invariants.} $\Omega_f$ and $\Theta_f$, no terms of the form $g_1(\Rho_f) \Omega_f$ or  $g_2(\Rho_f) \Theta_f$ (with $g_1$ and $g_2$ arbitrary functions) may arise in $\mathcal{L}_f$, as these pieces cannot come from polynomial contractions of curvature tensors. Therefore, examining the expansion of $\mathcal{L}_f$ for small $\Omega_f$ and $\Theta_f$:
\begin{align}
\label{eq:lexpanf}
\mathcal{L}_f&=\mathcal{L}_f^{(0)}(\Rho_f)+\mathcal{E}_\Omega\frac{\Omega}{2}+ \frac{\mathcal{E}_\Theta}{D-2}\Theta+ \frac{\mathcal{E}_{\Omega \Theta}}{2(D-2)}\, \Omega \Theta+\mathcal{O}(\Omega^2, \Theta^2)\,, \\
\mathcal{E}_\Omega(x)&=(D-3) \left ( h_1(x)-(D-2) h_0(x)\right) +h_2'(x)+2x \left ( (2D-7) h_0'(x)-h_1'(x)-2 x h_0''(x) \right)\,, \\
\mathcal{E}_\Theta (x) &=4 x \left((D-5) h_0'(x)-2 x h_0''(x)-h_1'(x)\right)+2 (D-2) h_0(x)+(D-4) h_1(x)+2 h_2'(x)\,, \\ \nonumber
\mathcal{E}_{\Omega \Theta} (x)&=(3D-10)h_1'(x)+2h_2''(x)-2(D-4)(2D-5)h_0'(x)\\& -4x(2x h_0'''(x)+h_1''-(3D-11)h_0'')\,,
\end{align}
where $x=\frac{\Rho_f}{D(D-1)}$. Now:
\begin{equation}
\label{eq:h1f}
2 \mathcal{E}_\Omega(x)- \mathcal{E}_\Theta(x)=(D-2)(-2(D-2)h_0(x)+h_1(x)+4x h_0'(x))\,.
\end{equation}
Setting the previous expression to zero, we may solve for $h_1(x)$ and find:
\begin{equation}
h_1(x)=2(D-2)h_0(x)-4x h_0'(x)\,.
\end{equation}
If we substitute this constraint into $\mathcal{E}_\Omega$ and set the resulting expression to zero:
\begin{equation}
\label{eq:h2f}
(D-2)(D-3)h_0(x)-2(2D-7) x h_0'(x)+h_2'(x)+4 x^2 h_0''(x)=0\,.
\end{equation}
Interestingly enough, conditions \eqref{eq:h1f} and \eqref{eq:h2f} ensure that $\mathcal{E}_\Omega=\mathcal{E}_\Theta=\mathcal{E}_{\Omega\Theta}=0$, so all potentially inadmissible terms in  \eqref{eq:lexpanf} vanish identically. Making the following redefinitions:
\begin{equation}
h_0(x)=\frac{x^{(D-2)/2}}{2} \int  \frac{h'(x)}{x^{D/2}}\mathrm{d}x\,, \quad h_1(x)=2 g(x)\,, \quad h_2(x)=(D-1) w(x)-2x w'(x)
\end{equation}
in terms of some new functions $(h,g,w)$, conditions \eqref{eq:h1f} and \eqref{eq:h2f} are rephrased as:
\begin{equation}
g(x)=-h'(x)\,, \quad  w'(x)-h'(x)+\frac{2x}{D-3}(h''(x)-w''(x))=0\,.
\end{equation}
The latter expression can be straightforwardly integrated to yield:
\begin{equation}
w(x)=c_0 x^{(D-1)/2}-\frac{2\lambda}{D-2} +h(x)\,,
\end{equation}
where $c_0$ and $\lambda$ are integration constants. Observe that $c_0$ is irrelevant, since it does not affect the expression for $h_2$ in \eqref{eq:laghf}, so we may safely set it to zero, while $\lambda$ represents a cosmological constant that we introduce in the definition of $h(\psi)$. We substitute these results in \eqref{eq:laghf} and we conclude. Note that the analyticity of $h(\psi)$ follows from the theory being constructed from polynomial combinations of curvature invariants.
\end{proof}

We are now able to prove our main result regarding single-function SSS solutions:

\begin{proposition}
Let $\mathcal{L}(g^{ab}, R_{cdef})$ be a higher-curvature theory built from polynomial contractions of curvature tensors. The following are equivalent:
\begin{enumerate}
\item The entropy tensor evaluated on \eqref{eq:sssn1} is divergenceless, $\nabla_a P^{abcd} \vert_f=0$.
\item $\mathcal{L}(g^{ab},R_{cdef})$ possesses second-order equations of motion for the single-function static and spherically symmetric ansatz \eqref{eq:sssn1}.
% and $h$ is a function fixed by the higher-curvature theory under consideration.
\item The Lagrangian $\mathcal{L}_f$ evaluated on \eqref{eq:sssn1} is given by:
\begin{equation}
\label{eq:totderf}
\mathcal{L}_f=\frac{1}{r^{D-2}} \frac{\mathrm{d}}{\mathrm{d} r} \left ( \frac{(1-f)^{(D-2)/2}f'}{2} \int^\psi h'(\kappa) \kappa^{-D/2} \mathrm{d}\kappa+r^{D-1} h\left (\frac{1-f}{r^2} \right) \right )\,,
\end{equation}
where $r^2 \psi=1-f$, $h$ is an analytic function and the integration limits of $\int^\psi h'(\kappa) \kappa^{-D/2} \mathrm{d}\kappa$ for odd $D$ are to be taken as those ensuring the absence of fractional powers of $(1-f)$. 
\end{enumerate}
\label{prop:sssn1res}
\end{proposition}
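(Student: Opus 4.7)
The plan is to establish the cyclic chain $(1)\Rightarrow(2)\Rightarrow(3)\Rightarrow(1)$, leveraging Propositions \ref{prop1} and \ref{prop2} at the middle step. The implication $(1)\Rightarrow(2)$ is immediate from the structure of the equations of motion \eqref{eq:eomgen}: the only term capable of producing derivatives of order higher than two on the single-function ansatz is the double divergence $2\nabla^c\nabla^d P_{acbd}$, whose explicit form on \eqref{eq:sssn1} is displayed in \eqref{eq:ccPf} and is entirely controlled by the scalars $A_f$ and $B_f$. If $\nabla_e P_{ab}{}^{ed}|_f=0$ then $A_f=B_f=0$ identically, so the double-divergence contribution to $\mathcal{E}_{ab}|_f$ vanishes and the remaining algebraic combinations of $P|_f$, $R|_f$ and $\mathcal{L}|_f$ are at most of second order in $f$.

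For $(2)\Rightarrow(3)$, I would invoke Proposition \ref{prop2}, which forces $\mathcal{L}_f$ into the form \eqref{eq:laghf} with $h_0,h_1,h_2$ parametrized by a single analytic function $h(\psi)$ through \eqref{eq:formhf}. It then remains to differentiate the right-hand side of \eqref{eq:totderf}: applying $\tfrac{\mathrm{d}}{\mathrm{d}r}$ to the two terms inside the parentheses by means of the chain rule and the identity $\partial_r\psi=-f'/r^2-2\psi/r$, one verifies term by term that the result, after division by $r^{D-2}$, reproduces $\mathcal{L}_f$ exactly as in \eqref{eq:laghf}--\eqref{eq:formhf}. The integration prescription in odd dimensions is precisely the one needed so that the primitive contains no fractional power of $(1-f)$, guaranteeing smoothness of the resulting total derivative.

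For $(3)\Rightarrow(1)$, starting from the explicit form \eqref{eq:totderf} one reads off $\mathcal{L}_f$ as in \eqref{eq:laghf}--\eqref{eq:formhf} and then computes the three scalar components $P^{(1)}_f,P^{(2)}_f,P^{(3)}_f$ via \eqref{eq:Pformf}. Substituting these into the definitions of $A_f$ and $B_f$ given just after \eqref{eq:ccPf}, a direct computation shows that the constraints \eqref{eq:h1f}--\eqref{eq:h2f} fixing $h_1$ and $h_2$ in terms of $h_0$ are exactly the conditions required for $A_f$ and $B_f$ to vanish identically on the single-function ansatz, whence $\nabla_e P_{ab}{}^{ed}|_f=0$. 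The main obstacle is precisely this last algebraic verification: although conceptually straightforward, it demands careful bookkeeping of chain-rule derivatives in $\psi$ and essentially replays, in reverse, the computation underlying the proof of Proposition \ref{prop2}. A shortcut worth attempting is to observe that \eqref{eq:totderf} makes $L_f = r^{D-2}\mathcal{L}_f$ a total $r$-derivative, hence its Euler--Lagrange equation in $f$ vanishes identically; reconstructing the full tensorial divergence from this reduced identity (using the algebraic structure \eqref{eq:Pf}) would bypass the explicit computation of $A_f,B_f$ entirely, though I have not verified that this reconstruction goes through without loss.
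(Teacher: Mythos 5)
Your proposal follows the same cyclic chain $(1)\Rightarrow(2)\Rightarrow(3)\Rightarrow(1)$ as the paper, with the same ingredients at each step: $(1)\Rightarrow(2)$ is immediate because the double divergence in \eqref{eq:eomgen} is the only source of higher derivatives, $(2)\Rightarrow(3)$ invokes Proposition \ref{prop2} with \eqref{eq:formhf} plus a direct differentiation check, and $(3)\Rightarrow(1)$ verifies that the resulting $\mathcal{L}_f$ of the form \eqref{eq:laghf}--\eqref{eq:formhf} forces $A_f=B_f=0$. This is correct and matches the paper's (much terser) proof; your version is, if anything, more explicit about what the final computation must establish.
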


\begin{proof}

``\emph{1} $\rightarrow$ \emph{2}''. Trivial.

``\emph{2} $\rightarrow$ \emph{3}''. By direct computation, using Proposition \ref{prop2} together with \eqref{eq:formhf}.

``\emph{3} $\rightarrow$ \emph{1}''. Starting from \eqref{eq:totderf}, one ends up with the expression for $\mathcal{L}_f$ presented in \eqref{eq:laghf}, where the functions $h_0$, $h_1$ and $h_2$ are given in \eqref{eq:formhf}. Such a Lagrangian gives rise to second-order equations of motion for $f(r)$ and we conclude.
\end{proof}

Note that no integration limits for the integral in \eqref{eq:totderf} were set for even $D$. The reason is that the terms
\begin{equation}
\mathcal{L}_f^{\mathrm{cL}}=\frac{1}{r^{D-2}} \frac{\mathrm{d}}{\mathrm{d} r} \left ((1-f)^{(D-2)/2}f' \right)\,,
\end{equation}
do not affect at all the subsequent equations of motion. As a matter of fact, we have checked that $\mathcal{L}_f^{\mathrm{cL}}$ is nothing but the evaluation of Lovelock densities of order $n$ in \eqref{eq:sssn1} in the critical dimension $D=2n$, up to a constant proportionality factor.

\begin{proposition}
\label{prop:eomsssn1}
Let $\mathcal{L}(g^{ab}, R_{cdef})$ be a higher-curvature theory constructed from polynomial contractions of curvature tensors. If the equations of motion on \eqref{eq:sssn1} feature no derivatives of $f$ of degree higher than two, then the unique single-function solutions \eqref{eq:sssn1} are determined by a first-order differential equation for $f(r)$ which can be exactly integrated to yield:
\begin{equation}
\label{eq:eomffinal}
h\left (\frac{1-f}{r^2} \right)=\frac{2\mathsf{M}}{r^{D-1}}\,,
\end{equation}
where $\mathsf{M}$ is an integration constant related to the ADM mass of the solution and $h$ is a theory-dependent function which may be computed as
\begin{equation}
(D-1)h\left (\frac{1-f}{r^2} \right)=\left. \mathcal{L}_f \right \vert_{f' \rightarrow 0,\, f''\rightarrow 0} - \left. r \psi \frac{\partial \mathcal{L}_f}{\partial f'}  \right \vert_{f' \rightarrow 0}\,,
\label{eq:obtenerh}
\end{equation}
where $f$, $f'$ and $f''$ are here conceived as if they were independent variables ---  \ie sending $f' \rightarrow 0$ or $f'' \rightarrow 0$ does not imply anything on $f$ in the matter of this computation.
\end{proposition}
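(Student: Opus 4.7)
The claim splits into two independent parts: the reconstruction formula \eqref{eq:obtenerh}, which is algebraic, and the statement that the EOM on \eqref{eq:sssn1} integrates exactly to the algebraic relation \eqref{eq:eomffinal}. I would establish them in turn.

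For the reconstruction formula \eqref{eq:obtenerh}, the plan is to proceed by direct substitution. By Proposition \ref{prop1}, $\mathcal{L}_f$ has the form \eqref{eq:laghf}, and by Proposition \ref{prop2} its three coefficient functions are determined by the single analytic function $h(\psi)$ via \eqref{eq:formhf}. Setting $f'\to 0$ and $f''\to 0$ in \eqref{eq:laghf} kills every term except $h_2(\psi)=(D-1)h(\psi)-2\psi h'(\psi)$. The derivative $\partial \mathcal{L}_f/\partial f'$ equals $2(\partial h_0/\partial f)f'+h_1(\psi)/r$, which at $f'=0$ reduces to $h_1(\psi)/r=-2h'(\psi)/r$. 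Multiplying by $r\psi$ gives $-2\psi h'(\psi)$, and subtracting this from $h_2(\psi)$ cancels the $h'$-terms, leaving exactly $(D-1)h(\psi)$ as claimed.

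For the algebraic equation, I would combine Proposition \ref{prop:sssn1res} with Theorem \ref{thm:1}. Proposition \ref{prop:sssn1res} asserts that $L_f\equiv r^{D-2}\mathcal{L}_f$ is a total $r$-derivative, so $\delta L_f/\delta f\equiv 0$ identically and the $f$-equation on \eqref{eq:sssn1} imposes no constraint. The dynamical content must therefore come from the lapse equation obtained by varying $N$ in the two-function ansatz \eqref{eq:sssintro} and then setting $N=1$. By Notion \ref{def:1}, which is equivalent to the hypothesis of the present proposition via Theorem \ref{thm:1}, $\left.\delta L_{N,f}/\delta N\right|_{N=1}$ is itself a total derivative $\diff \mathcal{F}(r,f)/\diff r$. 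Setting the lapse equation to zero and integrating once produces $\mathcal{F}(r,f)=2\mathsf{M}$ for an integration constant $\mathsf{M}$.

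The final step is to identify $\mathcal{F}(r,f)=r^{D-1}h((1-f)/r^2)$. From \eqref{eq:totderf} we already know the primitive of $L_f$ decomposes as an $f'$-dependent boundary piece plus $r^{D-1}h(\psi)$. Since $\mathcal{F}$ must be an algebraic (not differential) constraint on $f$, it can only be the $f'$-independent part, namely $r^{D-1}h(\psi)$. The overall normalization is then fixed by specializing to Einstein gravity --- where $h(\psi)\propto\psi$ and the expected Schwarzschild-Tangherlini relation $(1-f)\propto\mathsf{M}/r^{D-3}$ is recovered --- together with the linearity of the map $\mathcal{L}\mapsto h$ in the couplings. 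The main obstacle is precisely this identification step: the more rigorous alternative would be to compute $\left.\delta L_{N,f}/\delta N\right|_{N=1}$ directly from the 2-function ansatz, but the combination of the structural form in \eqref{eq:totderf}, the Einstein-limit check, and the uniqueness of $h$ as the only theory-dependent datum suffices to conclude.
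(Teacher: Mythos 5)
Your verification of the reconstruction formula \eqref{eq:obtenerh} is correct and is exactly what the paper means when it says this ``follows directly from Proposition \ref{prop2}'': setting $f',f''\to 0$ isolates $h_2(\psi)=(D-1)h(\psi)-2\psi h'(\psi)$, and the $-r\psi\,\partial\mathcal{L}_f/\partial f'|_{f'\to 0}=+2\psi h'(\psi)$ term cancels the residue. The problem lies in your derivation of the algebraic equation \eqref{eq:eomffinal}, where there are two genuine gaps. First, a circularity: you invoke Theorem \ref{thm:1} to assert that $\left.\delta L_{N,f}/\delta N\right|_{N=1}=\diff\mathcal{F}(r,f)/\diff r$, but in the paper the direction of Theorem \ref{thm:1} you need (Notion \ref{def:2} $\Rightarrow$ Notion \ref{def:1}) is itself proved by appealing to the proof of Proposition \ref{prop:eomsssn1} --- the statement that $\mathcal{E}_r{}^r$ is first order \emph{and integrates to an algebraic equation} is precisely the computation you are trying to establish. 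You cannot take the existence of an algebraic $\mathcal{F}(r,f)$ as an input here.

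Second, even granting Notion \ref{def:1}, your identification $\mathcal{F}(r,f)=r^{D-1}h(\psi)$ does not go through as argued. The primitive appearing in \eqref{eq:totderf} is a primitive of $L_f$, not of $\left.\delta L_{N,f}/\delta N\right|_{N=1}$, so reading off its ``$f'$-independent part'' is not legitimate; and the Einstein-limit normalization check is powerless to exclude alternatives such as $\mathcal{F}=r^{D-1}\psi h'(\psi)$, because for Einstein gravity $h(\psi)\propto\psi$ and hence $\psi h'(\psi)=h(\psi)$ --- the two candidates coincide exactly where you propose to test them. The paper closes this gap by brute force rather than structural inference: since $\nabla_aP^{abcd}|_f=0$ (Proposition \ref{prop:sssn1res}), the equations reduce to $P_{acde}R_b{}^{cde}-\tfrac12 g_{ab}\mathcal{L}=0$; using \eqref{eq:Pformf}, \eqref{eq:Rformf} one finds the $tt$ and $rr$ components coincide and equal $\tfrac14 P^{(1)}_ff''+\tfrac{(D-2)}{4r}P^{(2)}_ff'+\tfrac12\mathcal{L}_f$, which upon inserting \eqref{eq:formhf} collapses to $(D-1)rh(\psi)-(2r\psi+f')h'(\psi)=0$, manifestly equal to $r^{3-D}\,\tfrac{\diff}{\diff r}\bigl(r^{D-1}h(\psi)\bigr)$. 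That explicit step is what actually fixes $\mathcal{F}$. You should also say a word about the angular components (the paper dispatches them with the contracted Bianchi identity \eqref{eq:bianchisssn1}), since the two functional derivatives in \eqref{eq:derfuncond} only control the $tt$ and $rr$ equations.
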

\begin{proof}
From \eqref{eq:Pformf} and \eqref{eq:Rformf} we have that:
\begin{align}
\nonumber
\left. P_{ae}{}^{gh} R_{gh}{}^{be} \right \vert_f&= -\left (\frac{P^{(1)}_f f''}{4}+\frac{(D-2)P^{(2)}_f f'}{4r} \right) \left (\tau_{a}{}^b+\rho_{a}{}^b \right)\\&-\left (\frac{P^{(2)}_f f'}{2r}-\frac{(D-3) P^{(3)}_f (1-f)}{r^2} \right) \sigma_a{}^b\,.
\end{align}
Using now Proposition \ref{prop:sssn1res}, the gravitational equations of motion along the $tt$ and $rr$ components are the same and given by:
\begin{equation}
\frac{P^{(1)}_f f''}{4}+\frac{(D-2)P^{(2)}_f f'}{4r}+\frac{1}{2} \mathcal{L}_f=0\,.
\end{equation}
Using Proposition \ref{prop2} (in particular the  expressions \eqref{eq:formhf}) one finds the equation:
\begin{equation}
(D-1) r h(\psi)-(2r \psi+f') h'(\psi)=0\,.
\end{equation}
This can be directly expressed as:
\begin{equation}
\frac{1}{r^{D-3}}\frac{\mathrm{d}}{\mathrm{d}r} \left ( r^{D-1} h(\psi) \right)=0 \implies \quad h(\psi)=\frac{2\mathsf{M}}{r^{D-1}}\,,
\end{equation}
where $\mathsf{M}$ is an integration constant proportional to the mass of the subsequent solution. The angular components of the equations of motion $\mathcal{E}_{ab}$ are automatically solved by virtue of the contracted Bianchi identity. Indeed, the structure of the curvature tensor on \eqref{eq:sssn1} forces the equations of motion of a theory satisfying any of the items in Proposition \ref{prop2} to be:
\begin{equation}
\mathcal{E}_{ab}=\mathcal{E}_{\rm rad} (\tau_{ab}+\rho_{ab})+\mathcal{E}_{\rm ang} \sigma_{ab}\,.
\end{equation}
Now, the condition $\nabla^a \mathcal{E}_{ab}$ is equivalent to:
\begin{equation}
\frac{\mathrm{d}\mathcal{E}_{\rm rad}}{\mathrm{d}r}+\frac{(D-2) (\mathcal{E}_{\rm rad}-\mathcal{E}_{\rm ang})}{r}=0\,.
\label{eq:bianchisssn1}
\end{equation}
Therefore, $\mathcal{E}_{\rm rad}=0$ implies $\mathcal{E}_{\rm ang}=0$. Finally, \eqref{eq:obtenerh} follows directly from Proposition \ref{prop2} and we conclude.

\end{proof}

Interestingly enough, Proposition \ref{prop:eomsssn1} implies that if the subsequent equations of motion contain no higher derivatives when considered on the single-function SSS ansatz \eqref{eq:sssn1}, then the equations of motion do admit solutions \eqref{eq:sssn1} in which the equation for $f(r)$ can be immediately integrated into an algebraic equation. Observe that this feature is highly non-trivial, as SSS solutions are generically determined by two independent functions.%, which are denoting in this manuscript as $N(r)$ and $f(r)$.

We have collected all the relevant material (including some additional interesting facts) required to prove Theorem \ref{thm:1}.

{\noindent  \bf Proof of Theorem \ref{thm:1}.} Observe that the relation between the components $tt$ and $rr$ of the gravitational equations of motion $\mathcal{E}_{ab}$ with the functional derivatives of $L_{N,f}=r^{D-2} N \mathcal{L}_{N,f}$ is given by:
\begin{equation}
\left. \frac{\delta L_{f}}{\delta f}\right \vert_f=\frac{r^{D-2}}{f} \left(-\mathcal{E}_{t}^{t}+\mathcal{E}_{r}^{r}\right)\,, \quad \left. \frac{\delta L_{N,f}}{\delta N} \right \vert_f=-2r^{D-2} \mathcal{E}_{t}^{t}\,.
\label{eq:derfuncond}
\end{equation}
This is directly obtained by using the chain rule and the expression for the most general SSS metric \eqref{eq:sssintro}. Now, if we assume Notion \ref{def:1} holds, then $\mathcal{E}_{r}^{r}=\mathcal{E}_{t}^{t}$. By the contracted Bianchi identity  (cf. \eqref{eq:bianchisssn1}), one infers that the angular components are of second derivative order, so that the whole set of equations of motion is second order and by Proposition \ref{prop:sssn1res}, $\nabla_a P^{abcd} \vert_f=0$ is satisfied.

Conversely, if $\nabla_a P^{abcd} \vert_f=0$, Proposition \ref{prop:sssn1res} ensures that $L_f$ is a total derivative, so $\frac{\delta L_{f}}{\delta f}=0$ and $\mathcal{E}_{t}^{t}=\mathcal{E}_{r}^{r}$. By virtue of the contracted Bianchi identity, $\mathcal{E}_{r}^{r}$ must be of first derivative order, admitting a direct integration into an algebraic equation --- see the proof of Proposition \ref{prop:eomsssn1}. Hence we conclude. \qed

\section{Theories with $2^{\rm nd}$-order equations on general SSS backgrounds}\label{sec3}

Let us now study  general SSS configurations:
\begin{equation}
\mathrm{d}s_{N,f}^2=-N(r)^2f(r) \mathrm{d} t^2+\frac{\mathrm{d} r^2}{f(r)}+r^2 \mathrm{d} \Omega_{D-2}^2\,.
\label{eq:sss}
\end{equation}
The Riemann curvature tensor of \eqref{eq:sss} reads as follows:
\begin{equation}
\left. R_{ab}{}^{cd} \right \vert_{N,f}=\mathcal{R}_{N,f}^{(1)}\tau_{[a}^{[c}  \rho_{b]}^{d]}+\mathcal{R}_{N,f}^{(2)} \tau_{[a}^{[c}  \sigma_{b]}^{d]} +\mathcal{R}_{N,f}^{(3)} \rho_{[a}^{[c} \sigma_{b]}^{d]}+\mathcal{R}_{N,f}^{(4)} \sigma_{[a}^{[c}  \sigma_{b]}^{d]}\,,
\label{eq:RformNf}
\end{equation}
where we have defined
\begin{align}
\nonumber
\mathcal{R}_{N,f}^{(1)}&=-\frac{6 f' N'+2N f''+4f N''}{N}\,, \quad \mathcal{R}_{N,f}^{(2)}=-\frac{2N f'+4f N'}{r N}\,,\\ \mathcal{R}_{N,f}^{(3)}&=-\frac{2 f'}{r} \,, \quad \mathcal{R}_{N,f}^{(4)}=2 \frac{(1-f)}{r^2} \,.
\label{eq:RformNfcomp}
\end{align}
The form of $\left. R_{ab}{}^{cd} \right \vert_{N,f}$ induces the following structure for the entropy tensor on \eqref{eq:sss}:
\begin{equation}
\label{eq:PNf}
\left. P_{ab}{}^{cd} \right \vert_{N,f}=P_{N,f}^{(1)} \tau_{[a}^{[c}  \rho_{b]}^{d]} +P_{N,f}^{(2)} \tau_{[a}^{[c}\sigma_{b]}^{d]}+P_{N,f}^{(3)}\rho_{[a}^{[c}  \sigma_{b]}^{d]}+P_{N,f}^{(4)}\sigma_{[a}^{[c}  \sigma_{b]}^{d]}\,,
\end{equation}
where
\begin{equation}
P_{N,f}^{(1)}=4\frac{\partial \mathcal{L}_{N,f}}{\partial \mathcal{R}_{N,f}^{(1)}}\,,  \quad P_{N,f}^{(4)}=\frac{2}{(D-2)(D-3)}\frac{\partial \mathcal{L}_{N,f}}{\partial \mathcal{R}_{N,f}^{(4)}}\,, \quad  P_{N,f}^{(I)}=\frac{4}{D-2}\frac{\partial \mathcal{L}_{N,f}}{\partial \mathcal{R}_{N,f}^{(I)}}\,, \quad I=2,3 \,.
\label{eq:pissss}
\end{equation}
In terms of the derivatives of $f$ and $N$:
\begin{align}
\label{eq:PformNf1}
P_{N,f}^{(1)}&=-2\frac{\partial \mathcal{L}_{N,f}}{\partial f''}\,, \quad P_{N,f}^{(2)}=-\frac{r}{(D-2)f}\left[ N\frac{\partial \mathcal{L}_{N,f}}{\partial N'}-3 f' \frac{\partial \mathcal{L}_{N,f}}{\partial f''} \right]\,, \\ \label{eq:PformNf2} P_{N,f}^{(3)}&=-P_{N,f}^{(2)}-\frac{2r}{(D-2)} \left[\frac{\partial \mathcal{L}_{N,f}}{\partial f'}-\frac{3 N'}{N} \frac{\partial \mathcal{L}_{N,f}}{\partial f''} \right]\,, \\ P_{N,f}^{(4)}&=-\frac{r N'}{(D-3) N} P_{N,f}^{(2)} -\frac{r^2}{(D-2)(D-3)}\left[\frac{\partial \mathcal{L}_{N,f}}{\partial f}-\frac{2 N''}{N} \frac{\partial \mathcal{L}_{N,f}}{\partial f''} \right]\,.
\label{eq:PformNf3}
\end{align}
It is convenient to compute the following:
\begin{align}
\left.\nabla_e P_{ab}{}^{ed} \right \vert_{N,f}&=A_{N,f}\,  \delta_{[a}{}^r\tau_{b]}{}^d  + B_{N,f} \, \delta_{[a}{}^r \sigma_{b]}{}^d \,, \\ 
\nonumber
\left. \nabla^a \nabla_e P_{ab}{}^{ed} \right \vert_{N,f}&= \left[\frac{fA_{N,f}'}{2}+\frac{(rf'+2(D-2)f)A_{N,f}}{4r} \right] \tau_b{}^d+  \frac{(2fN'+f' N)A_{N,f}}{4 N}\rho_b{}^d\\  &  \label{eq:nnPSSS} + \frac{(D-2)f B_{N,f}}{2r}\rho_b{}^d+ \left[ \frac{f B_{N,f}'}{2}+ \frac{(r f'+(D-3)f)B_{N,f}}{2r}+\frac{f N' B_{N,f}}{2N} \right] \sigma_b{}^d\,,
\end{align}
where we have introduced:
\begin{align}
\nonumber
A_{N,f}&=\frac{1}{2}\frac{\mathrm{d} P^{(1)}_{N,f}}{\mathrm{d}r}+\frac{(D-2)\left (P^{(1)}_{N,f}-P^{(2)}_{N,f} \right )}{2r}\,, \\
\label{eq:ABsss} B_{N,f}&=\frac{1}{2}\frac{\mathrm{d} P^{(3)}_{N,f}}{\mathrm{d}r}+\frac{(D-3)\left  (P^{(3)}_{N,f}-2P^{(4)}_{N,f} \right)}{2r}+ \frac{(2fN'+f'N) (P^{(3)}_{N,f}-P^{(2)}_{N,f})}{4fN} \,.
\end{align}
\begin{proposition}
\label{prop:ngen}
Let $\mathcal{L}(g^{ab},R_{cdef})$ be a higher-curvature gravity built from polynomial curvature invariants with second-order equations on a general static and spherically symmetric ansatz \eqref{eq:sss}. Then $\mathcal{L}_{N,f}$ reads as follows:
\begin{align}
\nonumber
\mathcal{L}_{N,f}&=h_0(\psi) f''+\frac{\partial h_0(\psi)}{\partial f}(f')^2+h_1(\psi)\frac{f'}{r}+h_2\left ( \psi \right)   \\ &+\frac{f' N'}{N} \left (3h_0(\psi) +2f \frac{\partial h_0 (\psi)}{\partial f} \right) +\frac{f N'}{r N}h_1(\psi) + \frac{2f}{N}h_0(\psi) N''\,,
\label{eq:lagnftot}
\end{align}
where the functions $h_0$, $h_1$ and $h_2$ are evaluated at $\psi=\dfrac{1-f}{r^2}$ and are given by Proposition \ref{prop2}.
\end{proposition}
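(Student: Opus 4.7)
The plan is to bootstrap from Proposition~\ref{prop2} for the single-function case and then upgrade to general $N$ by enforcing the second-order field equations on the full ansatz \eqref{eq:sss}. First, I would restrict to $N=1$: since $\eqref{eq:sss}\vert_{N=1}$ is precisely \eqref{eq:sssn1}, the hypothesis of second-order equations on \eqref{eq:sss} implies second-order equations on \eqref{eq:sssn1}, and Proposition~\ref{prop2} forces $\mathcal{L}_{N,f}\vert_{N=1}$ to be exactly the first line of \eqref{eq:lagnftot}, with $h_0$, $h_1$, $h_2$ parametrized by a single analytic function $h(\psi)$ via \eqref{eq:formhf}.

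For the $N$-dependent piece, I would observe from \eqref{eq:RformNfcomp} that $N$ enters the Riemann components only through the ratios $N'/N$ and $N''/N$, so $\mathcal{L}_{N,f}$ depends on $N$ exclusively via these combinations. I would then repeat the strategy of Proposition~\ref{prop1}: since the only source of higher derivatives in the field equations is $\left.\nabla^a \nabla_e P_{ab}{}^{ed}\right\vert_{N,f}$, whose explicit form is given in \eqref{eq:nnPSSS}--\eqref{eq:ABsss}, the absence of $f''''$, $N''''$ and their mixed counterparts forces
\begin{equation}
\frac{\partial^2 \mathcal{L}_{N,f}}{\partial (f'')^2} = \frac{\partial^2 \mathcal{L}_{N,f}}{\partial f''\, \partial N''} = \frac{\partial^2 \mathcal{L}_{N,f}}{\partial (N'')^2} = 0\,,
\end{equation}
so $\mathcal{L}_{N,f}$ must be jointly linear in $(f'', N'')$. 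Demanding next the vanishing of $f'''$ and $N'''$ coefficients (in analogy with the no-third-derivative step following \eqref{eq:lfprov}) pins down the coefficients of $f''$ and $N''$ as specific functions of $(f, f', N'/N, r)$, entirely consistent with the $N=1$ restriction just fixed.

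Integrating these constraints, the $f''$ and $N''$ pieces reassemble into $h_0(\psi)\bigl[f'' + 3 f' N'/N + 2 f N''/N\bigr] = -\tfrac{1}{2} h_0(\psi)\, \mathcal{R}^{(1)}_{N,f}$; the quadratic $(f')^2$ term extends to $\partial_f h_0(\psi)\, f'(f' + 2 f N'/N) = \tfrac{1}{4} r^2 \partial_f h_0(\psi)\, \mathcal{R}^{(2)}_{N,f}\mathcal{R}^{(3)}_{N,f}$; the linear term extends to $h_1(\psi)(f' + f N'/N)/r = -\tfrac{1}{4} h_1(\psi)(\mathcal{R}^{(2)}_{N,f} + \mathcal{R}^{(3)}_{N,f})$; and $h_2(\psi)$ is unchanged. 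Unpacking these combinations in terms of $(f, f', f'', N, N', N'')$ reproduces \eqref{eq:lagnftot}. The ``inadmissibility'' conditions of Proposition~\ref{prop2} --- which rule out combinations that cannot come from polynomial contractions of Riemann --- are inherited automatically, since they are already encoded in $h_0$, $h_1$, $h_2$ through \eqref{eq:formhf}.

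The main obstacle I anticipate is the bookkeeping of cross-derivative contributions in $\left.\nabla^a \nabla_e P_{ab}{}^{ed}\right\vert_{N,f}$: unlike the single-function case, derivatives of the entropy-tensor components \eqref{eq:PformNf1}--\eqref{eq:PformNf3} produce mixed higher-derivative pieces such as $f'''N''$ and $f''N'''$ through the coefficients $A_{N,f}$ and $B_{N,f}$ in \eqref{eq:ABsss}. Carefully separating these, while simultaneously tracking the $(N'/N, N''/N)$ structure of $\mathcal{L}_{N,f}$ and ensuring compatibility with the $N=1$ fibre fixed by Proposition~\ref{prop2}, is where most of the algebraic work sits. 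Conceptually, the end lesson is that the second-order condition upgrades the single-function derivatives $\{f'', f'/r, \psi\}$ to the Riemann scalars $\mathcal{R}^{(1,2,3,4)}_{N,f}$ in a very specific way --- manifestly symmetric in $\mathcal{R}^{(2)}_{N,f} \leftrightarrow \mathcal{R}^{(3)}_{N,f}$ --- leaving $h(\psi)$ as the only free functional datum.
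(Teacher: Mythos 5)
Your overall strategy is the same as the paper's: evaluate $\nabla^a \nabla_e P_{ab}{}^{ed}$ on \eqref{eq:sss}, rule out fourth and then third derivatives of $f$ and $N$, and fix the residual functional freedom by matching the $N=1$ restriction against Proposition~\ref{prop2}. The structural conclusions you reach --- linearity in $\mathcal{R}^{(1)}_{N,f}$ and the appearance of precisely the combinations $\mathcal{R}^{(2)}_{N,f}\mathcal{R}^{(3)}_{N,f}$ and $\mathcal{R}^{(2)}_{N,f}+\mathcal{R}^{(3)}_{N,f}$ --- are exactly those of the paper's intermediate form \eqref{eq:lagnfproceso2}, just obtained in the reverse order (you pin down the $N=1$ fibre first and lift afterwards).

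There is, however, a gap in the uniqueness of the lift. The conditions you actually impose --- absence of $f''''$, $N''''$, $f'''$, $N'''$ in the field equations together with agreement with Proposition~\ref{prop2} at $N=1$ --- do not by themselves force \eqref{eq:lagnftot}. Concretely, a term $g(\psi)\bigl(\mathcal{R}^{(2)}_{N,f}-\mathcal{R}^{(3)}_{N,f}\bigr)=-4g(\psi)\,\tfrac{fN'}{rN}$, with $g$ arbitrary, vanishes identically at $N=1$ and contributes only lower-derivative pieces to $\nabla^a\nabla_e P_{ab}{}^{ed}$: for such a term $P^{(1)}_{N,f}=0$, $P^{(2)}_{N,f}\propto g$, $P^{(3)}_{N,f}\propto -g$ and $P^{(4)}_{N,f}\propto g'\cdot\bigl(\mathcal{R}^{(2)}_{N,f}-\mathcal{R}^{(3)}_{N,f}\bigr)$, so $A_{N,f}$ and $B_{N,f}$ in \eqref{eq:ABsss} are at most first order and the field equations stay second order. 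Your listed constraints would therefore admit \eqref{eq:lagnftot} plus any such term. What excludes it is the fact --- invoked explicitly twice in the paper's proof --- that any polynomial curvature invariant evaluated on \eqref{eq:sss} is necessarily \emph{symmetric} under $\mathcal{R}^{(2)}_{N,f}\leftrightarrow\mathcal{R}^{(3)}_{N,f}$ (equivalently under the exchange $\tau_a{}^b\leftrightarrow\rho_a{}^b$ in \eqref{eq:RformNf}), which kills all antisymmetric combinations. You mention this symmetry only as a closing observation; it must be promoted to a load-bearing step of the derivation, exactly as in the paper. With that single addition your argument closes.
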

\begin{proof}
If $\mathcal{L}(g^{ab},R_{cdef})$ does not give rise to higher derivatives on general SSS backgrounds, then  $\left. \nabla^a \nabla_e P_{ab}{}^{ed} \right \vert_{N,f}$ must not possess any higher derivatives of $f$ and $N$. 

Let us focus first on fourth derivatives. Regard $\mathcal{L}_{N,f}$ as a function of the variables $\mathcal{R}_{N,f}^{(K)}$ with $K=1,2,3,4$ (this is indeed the case by the structure of the curvature tensor \eqref{eq:RformNf}). Since second derivatives of $f$ and $N$ in the Lagrangian $\mathcal{L}_{N,f}$ are entirely encoded in the piece $\mathcal{R}_{N,f}^{(1)}$, it is immediate to see that no fourth derivatives will be present if $\left. \nabla^a \nabla_e P_{ab}{}^{ed} \right \vert_{N,f}$ does not depend on $\dfrac{\mathrm{d}^2 \mathcal{R}_{N,f}^{(1)}}{\mathrm{d}r^2}$. This condition imposes:
\begin{equation}
\frac{f}{4} \frac{\partial P_{N,f}^{(1)}}{\partial \mathcal{R}_{N,f}^{(1)} } \tau_a{}^b+\frac{f}{4} \frac{\partial P_{N,f}^{(3)}}{\partial  \mathcal{R}_{N,f}^{(1)} } \sigma_a{}^b=0\,.
\end{equation}
From here one learns that:
\begin{equation}
\frac{\partial^2 \mathcal{L}_{N,f}}{\partial \left ( \mathcal{R}_{N,f}^{(1)}\right)^2 }=\frac{\partial^2 \mathcal{L}_{N,f}}{\partial  \mathcal{R}_{N,f}^{(1)}  \mathcal{R}_{N,f}^{(3)}}=0\,.
\end{equation}
Furthermore, taking into account that $\mathcal{L}_{N,f}$ must be symmetric under the exchange of the variables $\mathcal{R}_{N,f}^{(2)}$ and $\mathcal{R}_{N,f}^{(3)}$ (this is easily seen\footnote{Indeed, if one was to replace in any curvature invariant all curvature tensors by `\emph{fake}' curvature tensors which have the form \eqref{eq:RformNf} \emph{but} exchanging  $\tau_{a}^b \longleftrightarrow \rho_a^b$, the resulting invariant constructed with these `fake' curvature tensors would be exactly the same.} by the form of the curvature tensor \eqref{eq:RformNf}), we conclude that:
\begin{equation}
\mathcal{L}_{N,f}=\mathcal{H}_0\left (\mathcal{R}_{N,f}^{(4)} \right) \mathcal{R}_{N,f}^{(1)}+\mathcal{H}_1\left ( \mathcal{R}_{N,f}^{(2)},\mathcal{R}_{N,f}^{(3)},\mathcal{R}_{N,f}^{(4)}\right)\,,
\label{eq:lagnfproceso}
\end{equation}
for some arbitrary functions $\mathcal{H}_0$ and $\mathcal{H}_1$ of the indicated variables. Let us now require the absence of third derivatives in $\left. \nabla^a \nabla_e P_{ab}{}^{ed} \right \vert_{N,f}$. Using \eqref{eq:lagnfproceso} in \eqref{eq:nnPSSS}, one observes that the potentially dangerous terms arise now from $\dfrac{\mathrm{d}^2 P_{N,f}^{(3)}}{\mathrm{d}r^2}$, which reads:
\begin{equation}
\dfrac{\mathrm{d}^2 P_{N,f}^{(3)}}{\mathrm{d}r^2}=\frac{4}{(D-2)} \left ( \frac{\partial^2 \mathcal{L}_{N,f}}{\partial  \mathcal{R}_{N,f}^{(2)}  \mathcal{R}_{N,f}^{(3)}} \dfrac{\mathrm{d}^2 \mathcal{R}_{N,f}^{(2)}}{\mathrm{d}r^2} +\frac{\partial^2 \mathcal{L}_{N,f}}{\partial \left ( \mathcal{R}_{N,f}^{(3)}\right)^2 } \dfrac{\mathrm{d}^2 \mathcal{R}_{N,f}^{(3)}}{\mathrm{d}r^2}-\frac{1}{r}\frac{\partial^2 \mathcal{L}_{N,f}}{\partial  \mathcal{R}_{N,f}^{(1)}  \mathcal{R}_{N,f}^{(4)}} \dfrac{\mathrm{d} \mathcal{R}_{N,f}^{(1)}}{\mathrm{d}r}  \right)\,.
\end{equation}
Imposing the latter to be safe from third derivatives of $f$ and $N$, one gets the conditions:
\begin{equation}
\frac{\partial^2 \mathcal{L}_{N,f}}{\partial \left ( \mathcal{R}_{N,f}^{(3)}\right)^2 }=0\,, \quad \frac{\partial^2 \mathcal{L}_{N,f}}{\partial  \mathcal{R}_{N,f}^{(2)}  \mathcal{R}_{N,f}^{(3)}}=\frac{\partial^2 \mathcal{L}_{N,f}}{\partial  \mathcal{R}_{N,f}^{(1)}  \mathcal{R}_{N,f}^{(4)}}\,.
\end{equation}
Enforcing these constraints on \eqref{eq:lagnfproceso} and taking into account the symmetric dependence of $\mathcal{L}_{N,f}$ on the variables $\mathcal{R}_{N,f}^{(2)}$ and $\mathcal{R}_{N,f}^{(3)}$:
\begin{equation}
\mathcal{L}_{N,f}=\mathcal{H}_0\left (\mathcal{R}_{N,f}^{(4)} \right) \mathcal{R}_{N,f}^{(1)}+\mathcal{H}_1\left (\mathcal{R}_{N,f}^{(4)}\right) \mathcal{R}_{N,f}^{(2)}\mathcal{R}_{N,f}^{(3)}+ \left ( \mathcal{R}_{N,f}^{(2)}+\mathcal{R}_{N,f}^{(3)} \right) \mathcal{H}_2 \left (\mathcal{R}_{N,f}^{(4)}\right)+\mathcal{H}_3 \left (\mathcal{R}_{N,f}^{(4)}\right)\,,
\label{eq:lagnfproceso2}
\end{equation}
where $\mathcal{H}_2$ and $\mathcal{H}_3$ are arbitrary functions. Requiring that \eqref{eq:lagnfproceso2} equals the case $N=1$ in \eqref{eq:laghf} (see Proposition \ref{prop2}), the arbitrary functions $\mathcal{H}_K$ for $K=0,1,2,3$ get completely fixed, obtaining automatically \eqref{eq:lagnftot}. 
\end{proof}
\begin{proposition}
Let $\mathcal{L}(g^{ab}, R_{cdef})$ be a higher-curvature theory built from arbitrary polynomial  contractions of curvature tensors. The following are equivalent:
\begin{enumerate}
\item The entropy tensor evaluated on \eqref{eq:sss} is divergenceless, $\left. \nabla_a P^{abcd} \right \vert_{N,f}=0$.
\item  The equations of motion for general static and spherically symmetric configurations \eqref{eq:sss} are second order.
\item The Lagrangian evaluated on \eqref{eq:sss}, $\mathcal{L}_{N,f}$, reads:
\begin{align}
\nonumber
\mathcal{L}_{N,f}&= \frac{1}{N r^{D-2}} \frac{\mathrm{d}}{\mathrm{d}r} \left (\frac{(1-f)^{(D-2)/2}}{2}(Nf'+2fN') \int^\psi h'(\kappa) \kappa^{-D/2} \mathrm{d}\kappa \right)\\\label{eq:totlagnf} &+\frac{1}{r^{D-2}} \frac{\mathrm{d}}{\mathrm{d} r} \left ( r^{D-1} h\left (\frac{1-f}{r^2} \right) \right ) \,,
\end{align}
where $h$ is an analytic function of its argument.
\end{enumerate}
%\label{thm2}
\label{prop:lagsss}
\end{proposition}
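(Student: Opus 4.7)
The strategy is to establish the cyclic chain $1 \Rightarrow 2 \Rightarrow 3 \Rightarrow 1$, paralleling Proposition \ref{prop:sssn1res} but now for the two-function ansatz \eqref{eq:sss}. The implication $1 \Rightarrow 2$ is immediate from \eqref{eq:eomgen}: the term $2\nabla^c\nabla^d P_{acbd}$ is the only potential source of derivatives of order higher than two, so if $\nabla_a P^{abcd}\vert_{N,f}=0$, it drops out identically on \eqref{eq:sss}, leaving only pieces algebraic in $R_{ab}{}^{cd}\vert_{N,f}$, which contains at most $f''$ and $N''$. The resulting field equations are therefore manifestly second order in derivatives of $f$ and $N$.

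For $2 \Rightarrow 3$ I would invoke Proposition \ref{prop:ngen}, which already pins down $\mathcal{L}_{N,f}$ to the explicit form \eqref{eq:lagnftot}, with $h_0,h_1,h_2$ further constrained by Proposition \ref{prop2} through \eqref{eq:formhf}. What remains is the algebraic task of recognising \eqref{eq:lagnftot} as the expansion of the right-hand side of \eqref{eq:totlagnf}. Differentiating the first term of \eqref{eq:totlagnf} with respect to $r$, the derivative acts on three factors, $(1-f)^{(D-2)/2}$, $(Nf'+2fN')$, and the antiderivative $\int^\psi h'(\kappa)\kappa^{-D/2}\mathrm{d}\kappa$ (which yields $h'(\psi)\psi^{-D/2}\psi'$); collecting the resulting pieces by their dependence on $f''$, $N''$, $(f')^2$, $f'N'/N$, $f'/r$, $fN'/(rN)$, and the $(f,N)$-independent remainder, and matching them against \eqref{eq:lagnftot} using \eqref{eq:formhf}, the two expressions coincide after routine simplification. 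The second term in \eqref{eq:totlagnf} reproduces the $N$-independent $h_2$-piece plus the $f'$ contribution, which is precisely the single-function statement of Proposition \ref{prop:sssn1res}.

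Finally, for $3 \Rightarrow 1$, I would expand the derivative in \eqref{eq:totlagnf} to put $\mathcal{L}_{N,f}$ in the form \eqref{eq:lagnftot}, then compute the components $P^{(I)}_{N,f}$ via \eqref{eq:PformNf1}--\eqref{eq:PformNf3} and feed them into the definitions \eqref{eq:ABsss} of $A_{N,f}$ and $B_{N,f}$. A direct calculation using \eqref{eq:formhf} should show that $A_{N,f}$ and $B_{N,f}$ vanish identically, whence $\nabla_e P_{ab}{}^{ed}\vert_{N,f}=0$ and, by the algebraic structure \eqref{eq:PNf}, $\nabla_a P^{abcd}\vert_{N,f}=0$. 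The main obstacle lies in $2 \Rightarrow 3$: careful bookkeeping is required to see how the $N'$- and $N''$-dependent terms in \eqref{eq:lagnftot} combine with the factor $1/(Nr^{D-2})$ and the covariantised combination $Nf'+2fN'$, and to confirm that the same manipulations go through in both even and odd $D$, where the integration limits of $\int^\psi h'(\kappa)\kappa^{-D/2}\mathrm{d}\kappa$ must be chosen as flagged after Proposition \ref{prop:sssn1res} to avoid fractional powers of $(1-f)$.
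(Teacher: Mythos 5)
Your proposal is correct and follows essentially the same route as the paper: the cyclic chain $1\Rightarrow 2\Rightarrow 3\Rightarrow 1$, with $1\Rightarrow 2$ treated as immediate, $2\Rightarrow 3$ obtained from Proposition \ref{prop:ngen} together with \eqref{eq:formhf}, and $3\Rightarrow 1$ by direct computation showing \eqref{eq:totlagnf} reduces to \eqref{eq:lagnftot} and hence has divergenceless entropy tensor. The extra bookkeeping you describe is exactly the ``direct computation'' the paper leaves implicit.
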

\begin{proof} 
``\emph{1} $\rightarrow$ \emph{2}''. Trivial.

``\emph{2} $\rightarrow$ \emph{3}''.  Follows from Proposition \ref{prop:ngen}, \eqref{eq:lagnftot} and \eqref{eq:formhf}.

``\emph{3} $\rightarrow$ \emph{1}''. If the Lagrangian evaluated on a general SSS ansatz \eqref{eq:sss} takes the form \eqref{eq:totlagnf}, direct computation shows it is equivalent to \eqref{eq:lagnftot}, which satisfies that $\left. \nabla_a P^{abcd} \right \vert_{N,f}=0$.
\end{proof}

Let us now study the SSS solutions of those theories with second-order equations on these configurations. 

\begin{proposition}
The unique SSS solutions of a theory $\mathcal{L}(g^{ab},R_{cdef})$ built from polynomial  contractions of curvature tensors and having second-order equations on \eqref{eq:sss} are given by: 
\begin{equation}
\label{eq:eomgensss}
h'\left (\psi \right) N'=0\,, \quad  \frac{\mathrm{d}}{\mathrm{d}r} \left ( r^{D-1} h\left (\psi \right) \right)=0\,,
\end{equation}
where $h$ is a theory-dependent function given by \eqref{eq:obtenerh} in Proposition \ref{prop:eomsssn1}.
\end{proposition}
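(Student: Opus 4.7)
The plan is to exploit Proposition \ref{prop:lagsss}, which characterizes theories with second-order equations on the general SSS ansatz by the explicit form (\ref{eq:totlagnf}) of the reduced Lagrangian $L_{N,f} = r^{D-2} N \mathcal{L}_{N,f}$. Since the ansatz (\ref{eq:sss}) is preserved by the full isometry group of time translations and $SO(D-1)$ rotations, the independent components of $\mathcal{E}_{ab}=0$ on this background are captured, via symmetric criticality, by the Euler-Lagrange equations associated with the two free functions $N(r)$ and $f(r)$. The task reduces to computing these equations explicitly from (\ref{eq:totlagnf}).

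The first term of (\ref{eq:totlagnf}) is manifestly a total $r$-derivative and therefore contributes nothing to the equations of motion. The surviving piece is
\begin{equation}
L_{N,f}^{\text{eff}} = N \,\frac{\diff}{\diff r}\bigl(r^{D-1} h(\psi)\bigr),
\end{equation}
which, after an integration by parts, is equivalent modulo boundary terms to $-N'\, r^{D-1} h(\psi)$, with $\psi = (1-f)/r^2$. Varying with respect to $N$: the reduced Lagrangian depends on $N$ only through $N'$, so the Euler-Lagrange equation reads $\frac{\diff}{\diff r}\bigl(r^{D-1} h(\psi)\bigr) = 0$, which integrates to the algebraic relation $h(\psi) = 2\mathsf{M}/r^{D-1}$ of (\ref{eq:eomffinal}). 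Varying with respect to $f$: since $\partial_f \psi = -r^{-2}$ and no derivatives of $f$ appear in $L_{N,f}^{\text{eff}}$, the Euler-Lagrange equation reduces to $N' r^{D-3} h'(\psi) = 0$, i.e.\ $h'(\psi) N' = 0$.

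To close the argument I would verify that these two scalar equations exhaust the independent components of $\mathcal{E}_{ab}=0$. The $t^t$ and $r^r$ components are proportional to the functional derivatives $\delta L_{N,f}/\delta N$ and $\delta L_{N,f}/\delta f$ (extending (\ref{eq:derfuncond}) to arbitrary $N$) and thus coincide with the two Euler-Lagrange equations derived above, while the angular components follow automatically from the contracted Bianchi identity $\nabla^a \mathcal{E}_{ab}=0$, in the same manner as in the proof of Proposition \ref{prop:eomsssn1}. The main subtlety is justifying the mini-superspace reduction: this is ensured by the second-order character of the equations on (\ref{eq:sss}) from Proposition \ref{prop:ngen}, which guarantees that no spurious higher-derivative constraints arise when restricting $\mathcal{E}_{ab}$ to the two-function ansatz, so that variations with respect to $N$ and $f$ fully capture the dynamics.
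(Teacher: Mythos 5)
Your proof is correct, but it follows a genuinely different route from the paper's. The paper computes the field equations tensorially: it evaluates $P_{ae}{}^{cd}R_{cd}{}^{be}$ on the ansatz \eqref{eq:sss}, uses $\nabla_aP^{abcd}\vert_{N,f}=0$ (guaranteed by Proposition \ref{prop:lagsss}) to drop the higher-derivative piece of \eqref{eq:eomgen}, extracts $h'(\psi)N'=0$ from the condition $P_{te}{}^{cd}R_{cd}{}^{te}=P_{re}{}^{cd}R_{cd}{}^{re}$, and then shows the remaining $tt=rr$ equation is proportional to $\frac{\mathrm{d}}{\mathrm{d}r}(r^{D-1}h(\psi))$, with the angular components following from Bianchi. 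You instead exploit the total-derivative form \eqref{eq:totlagnf} of the reduced Lagrangian and obtain both equations as the Euler--Lagrange equations for $N$ and $f$ of the single surviving term $N\,\frac{\mathrm{d}}{\mathrm{d}r}(r^{D-1}h(\psi))$; your computation of both variations is correct. What your approach buys is brevity: once \eqref{eq:totlagnf} is in hand the equations drop out in two lines. What it costs is the need to justify the mini-superspace reduction, which you do adequately --- the invertibility of the pointwise Jacobian from $(N,f)$ to $(g_{tt},g_{rr})$ means the two functional derivatives are equivalent to $\mathcal{E}_t{}^t=\mathcal{E}_r{}^r=0$ (this is the same chain-rule relation the paper itself invokes at \eqref{eq:derfuncond} in the proof of Theorem \ref{thm:1}, extended to general $N$), and the angular component then follows from \eqref{eq:biansss} exactly as in the paper. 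One cosmetic slip: before integrating by parts the effective Lagrangian depends on $N$ algebraically rather than through $N'$, but either form yields the same Euler--Lagrange equation, so nothing is affected. You might also note explicitly, as the paper does, that the identification of $h$ with the function in \eqref{eq:obtenerh} follows by restricting to $N=1$.
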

\begin{proof}
On the most general SSS ansatz \eqref{eq:sss}, using \eqref{eq:RformNf} and \eqref{eq:PNf} we have that:
\begin{align}
\nonumber
\left. P_{ae}{}^{cd} R_{cd}{}^{be} \right \vert_{N,f}&=\frac{\mathcal{R}_{N,f}^{(1)} P_{N,f}^{(1)}+(D-2)\mathcal{R}_{N,f}^{(2)} P_{N,f}^{(2)}}{8} \tau_a{}^b+\frac{\mathcal{R}_{N,f}^{(1)} P_{N,f}^{(1)}+(D-2)\mathcal{R}_{N,f}^{(3)} P_{N,f}^{(3)}}{8} \rho_a{}^b \\& + \frac{\mathcal{R}_{N,f}^{(2)} P_{N,f}^{(2)}+\mathcal{R}_{N,f}^{(3)} P_{N,f}^{(3)}+4(D-3) \mathcal{R}_{N,f}^{(4)} P_{N,f}^{(4)}}{8} \sigma_a{}^b\,,
\label{eq:PRNF}
\end{align}
If $\left. \nabla_a P^{abcd} \right \vert_{N,f}=0$, the equations of motion imply that:
\begin{equation}
\left. P_{te}{}^{cd} R_{cd}{}^{te} \right \vert_{N,f}=\left. P_{re}{}^{cd} R_{cd}{}^{re} \right \vert_{N,f}\,.
\end{equation}
Upon use of \eqref{eq:PRNF}, this condition translates to:
\begin{equation}
-\frac{f h'(\psi) N'}{r N}=0\,,
\label{eq:eq1alt}
\end{equation}
Imposing this condition, one finds that the $tt$ and $rr$ components of the equation of motion are equivalent. Taking into account \eqref{eq:formhf}, both components are proportional to:
\begin{equation}
\frac{1}{r^{D-3}} \frac{\mathrm{d}}{\mathrm{d}r} \left ( r^{D-1} h(\psi) \right)=0\,.
\label{eq:feq}
\end{equation}
The angular components of the equations of motion are automatically solved by virtue of the Bianchi identity. Indeed, let $\left. \mathcal{E}_{ab} \right \vert_{N,f}$ be the components of the equations on \eqref{eq:sss}. Clearly, it must take the form:
\begin{equation}
\label{eq:eomstrucsss}
\left. \mathcal{E}_{ab} \right \vert_{N,f}=\mathcal{E}_{\rm tem} \tau_{ab}+\mathcal{E}_{\rm rad} \rho_{ab}+\mathcal{E}_{\rm ang} \sigma_{ab}\,,
\end{equation} 
and the condition $\nabla^a \mathcal{E}_{ab}=0$ translates into the vanishing of:
\begin{equation}
\frac{\mathrm{d}\mathcal{E}_{\rm rad}}{\mathrm{d}r}+\frac{(D-2) (\mathcal{E}_{\rm rad}-\mathcal{E}_{\rm ang})}{r}+\frac{(2N'f+N f'))}{2N}(\mathcal{E}_{\rm rad}-\mathcal{E}_{\rm tem})=0\,.
\label{eq:biansss}
\end{equation}
As a consequence, $\mathcal{E}_{\rm tem}=\mathcal{E}_{\rm rad}=0$ implies $\mathcal{E}_{\rm ang}=0$. The fact that the function $h(\psi)$ is given by \eqref{eq:obtenerh} follows by evaluating the Lagrangian in the special case $N=1$ and we conclude. 

\end{proof}

\begin{remark}
\label{rem:csp}
Consider the equations of motion \eqref{eq:eomgensss}. If $h(\psi)$ is such that there does not exist $\psi_0 \in \mathbb{R}$ satisfying\footnote{This would be the condition to avoid the existence of degenerate vacua.} $h(\psi_0)=h'(\psi_0)=0$, the unique solution (up to innocent time reparametrizations) to \eqref{eq:eomgensss} is given by:
\begin{equation}
\label{eq:eomffinalnf}
h(\psi)=\frac{2\mathsf{M}}{r^{D-1}}\,, \quad N=1\,.
\end{equation}
However, if there exists $\psi_0 \in \mathbb{R}$ such that $h(\psi_0)=h'(\psi_0)=0$, in addition to \eqref{eq:eomffinalnf}, any configuration of the form:
\begin{equation}
f(r)=1-\psi_0 r^2\,, \quad N(r) \,\, \rm{ arbitrary}
\end{equation}
also solves the equations of motion. Theories satisfying this property form a zero-measure set in the space of QTGs of type II. Within this particular subset one may find precisely those special combinations of Lovelock gravities which admit a Chern-Simons (odd $D$) or a Born-Infeld formulation (even $D$) \cite{Troncoso:1999pk,Crisostomo:2000bb,Charmousis:2002rc,Zegers:2005vx,Zanelli:2012px}. 

\end{remark}

\begin{remark}
\label{rm:counterex}
On the other hand, it is easy to see that the set of Notions \ref{def:1} and \ref{def:2} (QTGs of type I) is not equivalent to the set of Notions \ref{def:3}, \ref{def:4} and \ref{def:5} (QTGs of type II). While the latter clearly imply the former, let us present an example of a QTG of type I which is not a QTG of type II. Consider the theory:
\begin{align}
\nonumber
\hat{\mathcal{Z}}_{(4)}&=R^4+\frac{6 D (D-1)  R^2
 W^{a b c d} W_{a b c d}}{(D-2) (D-3)}-\frac{24 D(D-1)  R^2
   Z^{ab} Z_{ab}}{(D-2)^2} +\frac{64 D^2(D-1)^2 R
   Z_a^b Z_b^cZ^a_c}{(D-2)^4}\\\nonumber&+\frac{96 D^2 (D-1)^2 R  W\indices{_a_c^b^d}Z^a_b Z^c_d}{
   (D-2)^3 (D-3)}-\frac{96 D^2 (D-1)^2  R
   W_{a c d e}W^{bcde}  Z^a_b }{(D-3) (D-4) (D-2)^2}\\\nonumber&+\frac{8 D^2(D-1)^2 (2
   D-3) R W\indices{^a^b_c_d}W\indices{^c^d_e_f}W\indices{^e^f_a_b}}{(D-2) (D-3) (D ((D-9)
   D+26)-22)}+\frac{48 D^3 (D-1)^3
  \left(Z_a^b Z^a_b\right)^2}{(D-2)^5 (D-3)} \\\nonumber&+\frac{3 D^3(D-1)^2 (3 D-4) \left(W^{a b c d} W_{a b c d}\right)^2}{(D-2)^4
   (D-3)^2}- \frac{384 D^3 (D-1)^3  Z_{a c} Z_{d e} W^{bdce} Z^{a}_b}{(D-3)
   (D-4) (D-2)^4}\\\nonumber & -\frac{96 D^3 (D-1)^3 
  Z_a^bZ_b^cZ_c^dZ_d^a}{(D-2)^5(D-3) } -\frac{48 D^2 (D-1)^4 W_{a b c d} W^{a b c d}Z_{ef} Z^{ef}}{(D-3)
   (D-2)^4}\\ \nonumber &-\frac{192 D^2 (D-1)^3  W_{acbd} W^{c efg} W^d{}_{efg} Z^{ab}}{
   \left(D^2-5 D+6\right)^2 (D-4)}+\frac{96 D^2(2D-1)(D-1)^3 W_{abcd} W^{aecf} Z^{bd} Z_{ef}}{(D-2)^3(D-3)^2}\\&+\beta \left( Z_a^bZ_b^cZ_c^dZ_d^a-\frac{(D^2-6D+12)\left(Z_a^b Z^a_b\right)^2}{2D(D-2)} \right)\,,
\end{align}
where $\beta$ is an arbitrary coupling. Let $\hat{P}^{(4)}_{abcd}$ be the entropy tensor associated with $\hat{\mathcal{Z}}_{(4)}$. Direct computation reveals:
\begin{equation}
\left. \nabla^d \hat{P}^{(4)}_{abcd} \right \vert_f=0\,,
\end{equation}
when considered on the static spherically symmetric ansatz \eqref{eq:sssn1} with $N=1$. However, for generic $N=N(r)$, it turns out that 
\begin{equation}
\left. \nabla^d \hat{P}^{(4)}_{abcd} \right \vert_{N,f}=\beta \left[ \mathcal{A}_{N,f}\,\delta_{[a}^r \tau_{b]}^d+\mathcal{B}_{N,f}\, \delta_{[a}^r \sigma_{b]}^d \right] \,,
\end{equation}
where $\mathcal{A}_{N,f}$ and $\mathcal{B}_{N,f}$ are certain intricate functions of $r,f,f',f'',f''',N,N',N''$ and $N'''$. In particular, $\left. \nabla^d \hat{P}^{(4)}_{abcd} \right \vert_{N,f}$ vanishes if and only if $\beta=0$, thus showing the existence of theories fulfilling Notions \ref{def:1} and \ref{def:2} but not satisfying Notions \ref{def:3}, \ref{def:4} and \ref{def:5}. 

\end{remark}

\section{Theories with $2^{\rm nd}$-order equations on general SS backgrounds}\label{sec4}

Let us now consider general SS configurations, without assuming the staticity condition:
\begin{equation}
\mathrm{d}s_{\rm SS}^2=-N(t,r)^2f(t,r) \mathrm{d} t^2+\frac{\mathrm{d} r^2}{f(t,r)}+r^2 \mathrm{d} \Omega_{D-2}^2\,.
\label{eq:ss}
\end{equation}
The Riemann curvature tensor of \eqref{eq:ss} adopts the following form:
\begin{align}
\nonumber
\left. R_{ab}{}^{cd} \right \vert_{\rm SS}&=\mathcal{R}_{\rm SS}^{(1)} \tau_{[a}^{[c}  \rho_{b]}^{d]}+\mathcal{R}_{\rm SS}^{(2)} \tau_{[a}^{[c}  \sigma_{b]}^{d]} +\mathcal{R}_{\rm SS}^{(3)} \rho_{[a}^{[c} \sigma_{b]}^{d]}+\mathcal{R}_{\rm SS}^{(4)} \zeta_{[a}^{[c} \sigma_{b]}^{d]} +\mathcal{R}_{\rm SS}^{(5)} \sigma_{[a}^{[c}  \sigma_{b]}^{d]}\,,
\label{eq:Rformss}
\end{align}
where we have defined $\zeta_{ab}= 2 N \delta_{(a}^t \delta_{b)}^r$ and:
\begin{align}
\mathcal{R}_{\rm SS}^{(1)}&=\frac{4 N \left ( \partial_t f \right)^2+ 2 f \partial_t f \partial_t N-2 f N \partial_t^2 f}{f^3 N^3}-\frac{6 \partial_r f \partial_r N+2N \partial_r^2 f+4f \partial_r^2 N}{N}\,, \\ \mathcal{R}_{\rm SS}^{(2)}&=-\frac{2N \partial_r f+4f \partial_r N}{r N}\,, \quad \mathcal{R}_{\rm SS}^{(3)}= -\frac{2 \partial_r f}{r}\,, \quad \mathcal{R}_{\rm SS}^{(4)}= - 2 \frac{\partial_t f}{r N f}\,, \quad \mathcal{R}_{\rm SS}^{(5)}=2 \frac{(1-f)}{r^2}\,.
\label{eq:aesrform}
\end{align}
The following properties hold:
\begin{equation}
\tau_a^b \zeta_{bc}= N \delta_a^t \delta_c^r\,, \quad \rho_a^b \zeta_{bc}= N \delta_a^r \delta_c^t\,, \quad \sigma_a^b \zeta_b^c=0\,, \quad \zeta_{a}^b \zeta_b^c=-\tau_a^c-\rho_a^c\,.
\end{equation}
It will be convenient to present the Weyl curvature tensor $\left. W_{ab}{}^{cd} \right \vert_{\rm SS}$, the traceless Ricci tensor $\left. Z_{ab}\right \vert_{\rm SS}$ and the Ricci scalar $\left. R \right \vert_{\rm SS}$ on \eqref{eq:ss}. These take the form:
\begin{align}
\label{eq:weylss}
&\left.W_{ab}{}^{cd}\right\vert_{\rm SS}=\Omega_{\rm SS} \left[ (D-2)(D-3) \tau_{[a}^{[c} \rho_{b]}^{d]}-(D-3)\left (\tau_{[a}^{[c}+\rho_{[a}^{[c} \right ) \sigma_{b]}^{d]}+ \sigma_{[a}^{[c} \sigma_{b]}^{d]} \right]\, , \\ 
\label{eq:zzss}
&\left.Z_a^b\right\vert_{\rm SS}=\frac{(D-2)\mathcal{R}_{\rm SS}^{(4)}}{4} \zeta_{a}{}^{b}-\Theta_{\rm SS}\left[ \frac{D-2}{2} \left (\tau_a^b+\rho_a^b \right)- \sigma_a^b \right ]+\frac{(D-2)\left (\mathcal{R}_{\rm SS}^{(2)}-\mathcal{R}_{\rm SS}^{(3)} \right )}{8} \left[ \tau_a^b-\rho_a^b\right]\, , \\ &\left. R \right \vert_{\rm SS}= \frac{\mathcal{R}_{\rm SS}^{(1)}+(D-2)\left (\mathcal{R}_{\rm SS}^{(2)}+\mathcal{R}_{\rm SS}^{(3)} \right)+(D-2)(D-3)\mathcal{R}_{\rm SS}^{(5)}}{2} \,.
\label{eq:ricss}
\end{align}
where
\begin{align}
\label{eq:omss}
\Omega_{\rm SS}=\frac{\mathcal{R}_{\rm SS}^{(1)}-\mathcal{R}_{\rm SS}^{(2)}-\mathcal{R}_{\rm SS}^{(3)}+2\mathcal{R}_{\rm SS}^{(5)}}{(D-1)(D-2)}&\, , \quad \Theta_{\rm SS}=\frac{-2\mathcal{R}_{\rm SS}^{(1)}-(D-4)\left (\mathcal{R}_{\rm SS}^{(2)}+\mathcal{R}_{\rm SS}^{(3)} \right )+4(D-3)\mathcal{R}_{\rm SS}^{(5)}}{4 D}\,.
\end{align}
\begin{proposition}
\label{prop:ssfoem}
Let $\mathcal{L}(g^{ab},R_{cdef})$ be a higher-curvature gravity constructed from polynomial contractions of curvature tensors and with second-order equations on static and spherically symmetric backgrounds. Then, evaluated on a general (non-static) spherically symmetric background \eqref{eq:ss}, its expression reads:
\begin{align}
\nonumber
\mathcal{L}_{\rm SS}&=h_0(\psi) \left (\partial_r^2 f-\frac{\partial_t f \partial_t N}{f^2 N^3}+ \frac{2f}{N}\partial_r^2 N+\frac{\partial_t^2 f}{f^2 N^2}+\frac{3 \partial_r f \partial_r N}{N}-\frac{2(\partial_t f)^2}{N^2 f^3} \right) +\frac{\partial h_0(\psi)}{\partial f}(\partial_r f)^2\\ &+h_1(\psi)\frac{\partial_r f}{r}+h_2(\psi) +\frac{2 f\partial_r f \partial_r N}{N}\frac{\partial h_0(\psi)}{\partial f} +\frac{f \partial_r N}{r N}h_1(\psi)+\frac{\partial h_0(\psi)}{\partial f} \frac{(\partial_t f )^2}{N^2 f^2} \,,
\label{eq:lagssform}
\end{align}
where $\psi=\dfrac{1-f(t,r)}{r^2}$ and $h_0$, $h_1$ and $h_2$ were defined back in Proposition \ref{prop2}.
\end{proposition}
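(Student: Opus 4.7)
The strategy is to determine $\mathcal{L}_{\rm SS}$ by exploiting the polynomial structure of scalar curvature invariants on SS, with Proposition~\ref{prop:ngen} providing the SSS anchor. The Riemann tensor on SS \eqref{eq:Rformss} decomposes into five tensor structures, so $\mathcal{L}_{\rm SS}$ is a polynomial (or analytic for unbounded curvature order) function of the coefficients $X_K \equiv \mathcal{R}^{(K)}_{\rm SS}$ for $K=1,\ldots,5$. The key algebraic input is that $\zeta^{a}{}_{b}$ is traceless and satisfies $\zeta^{a}{}_{b}\zeta^{b}{}_{c}=-(\tau+\rho)^{a}{}_{c}$; consequently, any complete scalar contraction of Riemann tensors contains $\zeta$ in an even number, and
\[
\mathcal{L}_{\rm SS} = G_0(X_1,X_2,X_3,X_5) + \sum_{n\geq 1} G_n(X_1,X_2,X_3,X_5)\,X_4^{2n}\,.
\]
Specialising to $X_4=0$ recovers the SSS Lagrangian (with $X_5$ identified with the SSS variable $\mathcal{R}^{(4)}_{N,f}$), so Proposition~\ref{prop:ngen} --- and more explicitly \eqref{eq:lagnfproceso2} --- fixes $G_0 = -\tfrac{h_0}{2}X_1 + \tfrac{r^2}{4}\,\tfrac{\partial h_0}{\partial f}\,X_2 X_3 - \tfrac{h_1}{4}(X_2+X_3) + h_2$, where the $h_i$ are evaluated at $\psi=X_5/2$.

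The heart of the proof, and also the main obstacle, is to establish $G_1 = \tfrac{r^2}{4}\,\partial h_0/\partial f$ and $G_n=0$ for $n\geq 2$. A monomial of degree $2n$ in $X_4$ originates from a Riemann contraction in which $2n$ Riemann factors contribute their $\mathcal{R}^{(4)}\zeta\sigma$-piece, and the identity $\zeta^{a}{}_{b}\zeta^{b}{}_{c}=-(\tau+\rho)^{a}{}_{c}$ collapses these $2n$ $\zeta$-factors into $n$ insertions of $(\tau+\rho)$. Tracking this reduction, the residual tensor structure turns out to be proportional to the one produced by the ``mixed'' $\tau\sigma\cdot\rho\sigma$ contractions that generate the $X_2 X_3$ term in the SSS case: at $n=1$ the matching forces $G_1$ to equal the $X_2 X_3$ coefficient in $G_0$, while for $n\geq 2$ the reduction would require higher products in $(X_2,X_3)$ that are absent from $G_0$ by its restricted polynomial form, so the corresponding $G_n$ vanish. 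Substituting the explicit expressions \eqref{eq:aesrform} for the $X_K$'s into $\mathcal{L}_{\rm SS}=G_0+\mathcal{H}_1\,X_4^2$ and collecting terms reproduces \eqref{eq:lagssform}.

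A computationally cleaner alternative is to verify the statement order-by-order in curvature, using an independent basis of curvature invariants: the polynomial constraints of Propositions~\ref{prop2} and~\ref{prop:ngen} reduce each order to its minimal form, and direct tensor contraction then produces \eqref{eq:lagssform}. The delicate aspect in either route is tracking how the $\zeta^2$-reduction mixes tensor pieces originating from different Riemann factors, which is a purely algebraic --- but laborious --- bookkeeping exercise.
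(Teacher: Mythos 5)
Your skeleton is the right one --- and essentially the paper's: decompose $\left.R_{ab}{}^{cd}\right\vert_{\rm SS}$ into the five projector structures, note that $\mathcal{L}_{\rm SS}$ must be even in $X_4=\mathcal{R}_{\rm SS}^{(4)}$, anchor the $X_4=0$ part with Proposition~\ref{prop:ngen}, and then argue that the $X_4$-dependence is inherited from the static data. But the step you yourself flag as ``the heart of the proof'' is precisely where the argument has a genuine gap, and the mechanism you propose for it is not correct as stated. Evenness in $X_4$ plus knowledge of $G_0$ does \emph{not} determine the $G_n$; one needs the stronger lemma that every polynomial invariant on \eqref{eq:ss} depends on $(X_2,X_3,X_4)$ only through $X_2+X_3$ and the single combination $(X_2-X_3)^2-4X_4^2$. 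Your proposed derivation of this --- that the collapse $\zeta^a{}_b\zeta^b{}_c=-(\tau+\rho)^a{}_c$ lands on ``the tensor structure that generates the $X_2X_3$ term'' --- is wrong in detail: since $\tau^a{}_b\rho^b{}_c=0$, the object $(\tau+\rho)$ produced by a $\zeta\zeta$ pairing reproduces the \emph{diagonal} $\tau\tau$ and $\rho\rho$ contractions, i.e.\ the $X_2^2+X_3^2$ structures, not the mixed $X_2X_3$ one. A concrete check: for $R_{abcd}R^{abcd}$ the coefficient of $X_2X_3$ vanishes identically while the coefficient of $X_4^2$ does not (it equals $-2$ times the $X_2^2$ coefficient, consistent with dependence through $(X_2-X_3)^2-4X_4^2$ but not with your matching rule). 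Your conclusion $G_1=\tfrac{r^2}{4}\partial h_0/\partial f$ happens to be right only because the constrained $G_0$ of \eqref{eq:lagnfproceso2} has no $X_2^2$ or $X_3^2$ terms; the argument as written would give the wrong answer for a generic invariant and therefore cannot be the reason the identity holds. Your argument also does not address contraction chains in which a $\zeta$ from one Riemann factor is linked to another $\zeta$ through intermediate $\tau$'s or $\rho$'s (recall $\tau^a{}_b\zeta^b{}_c\neq 0$), which is exactly where the bookkeeping you defer becomes nontrivial; and the ``order-by-order verification'' fallback is not a proof for theories with an infinite tower of invariants.

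The paper closes this gap differently and more cleanly: it decomposes the traceless Ricci tensor as $\left.Z_{ab}\right\vert_{\rm SS}=\Sigma_{ab}+\Theta_{\rm SS}\,\sigma_{ab}$ with $\Sigma_{ab}$ supported on the two-dimensional $(t,r)$ block, and uses Schouten identities (effectively Cayley--Hamilton in two dimensions) to show that every invariant built from $\Sigma$ reduces to a function of $\Sigma^a{}_a$ and $\Sigma_{ab}\Sigma^{ab}\propto(X_2-X_3)^2-4X_4^2+16\,\Theta_{\rm SS}^2$. Since $\Omega_{\rm SS}$, $\Theta_{\rm SS}$ and $R\vert_{\rm SS}$ are independent of both $X_2-X_3$ and $X_4$, the entire $X_4$-dependence enters through $(X_2-X_3)^2-4X_4^2$, and the SS Lagrangian is obtained from \eqref{eq:lagnfproceso2} by the replacement $(X_2-X_3)^2\to(X_2-X_3)^2-4X_4^2$. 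That is the missing lemma; with it, your $G_1$ and the vanishing of $G_n$ for $n\geq 2$ (because $G_0$ is linear in $(X_2-X_3)^2$) follow immediately. Without it, the proposal does not constitute a proof.
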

\begin{proof}
Observe that the traceless Ricci tensor $Z_{ab}$ on the general (non-static) SS configuration \eqref{eq:ss} may be written as:
\begin{equation}
\left.Z_{ab}\right\vert_{\rm SS}=\Sigma_{ab}+\Theta_{\rm SS}\, \sigma_{ab}\,,
\end{equation}
where $\Sigma_{ab}$ is a rank-two tensor transverse to the angular coordinates whose expression may be directly obtained by comparing to \eqref{eq:zzss}. As such, $(\tau_{ab}+\rho_{ab}) \Sigma^{bc}=\Sigma_{a}{}^c$ and $\Sigma_{ab} \sigma^{bc}=0$. Noting that $(\tau_{[a}{}^{[c}+\rho_{[a}{}^{[c})(\tau_{b]}{}^{d]}+\rho_{b]}{}^{d]})=2\tau_{[a}^{[c} \rho_{b]}^{d]}$, we learn that any curvature invariant on \eqref{eq:ss} will be entirely expressed in terms of $\Omega_{\rm SS},\Theta_{\rm SS}, \left. R \right \vert_{\rm SS}$ and scalars formed by the arbitrary contraction of $\Sigma_{ab}$ tensors with projectors $\tau_{ab}+\rho_{ab}$. However, one may show\footnote{This can be proved by direct use of \emph{Schouten} identities --- \ie considering a given tensor product $\Sigma_{a_1}{}^{b_1 } \dots  \Sigma_{a_n}{}^{b_n}$, antisymmetrizing over more than two indices and setting the resulting expression to zero. } that any such scalar will be solely a function of two invariants: $\Sigma_{a}^{a}$ and $\Sigma_{ab} \Sigma^{ab}$, which read as follows:
\begin{equation}
\Sigma_{a}^{a}{}=-(D-2) \Theta_{\rm SS}\, \quad  \Sigma_{ab} \Sigma^{ab}=\frac{(D-2)^2}{32}\left ( \left (\mathcal{R}_{\rm SS}^{(2)}-\mathcal{R}_{\rm SS}^{(3)} \right)^2 -4 \left (\mathcal{R}_{\rm SS}^{(4)} \right)^2+16 \Theta_{\rm SS}^2 \right)\,.
\end{equation}
Therefore, any curvature invariant on \eqref{eq:ss} will be a function of $\Omega_{\rm SS}, \Theta_{\rm SS}, \Sigma_{ab} \Sigma^{ab}$ and $\left. R \right \vert_{SS}$. Interestingly enough, $\Omega_{\rm SS}, \Theta_{\rm SS}$ and $\left. R \right \vert_{\rm SS}$ do not depend on $\mathcal{R}_{\rm SS}^{(2)}-\mathcal{R}_{\rm SS}^{(3)}$: only $\Sigma_{ab} \Sigma^{ab}$ does (in addition, quadratically). Furthermore, the only difference between the static and non-static cases consists in the appearance of $\left (\mathcal{R}_{\rm SS}^{(4)} \right)^2$, which appears only in $\Sigma_{ab} \Sigma^{ab}$ as well.

This suggests a simple way to obtain the Lagrangian on a general SS background from its expression on a SSS one. Specifically, identify in the SSS Lagrangian $\mathcal{L}_{N,f}$ the whole dependence on $\left (\mathcal{R}_{N,f}^{(2)}-\mathcal{R}_{N,f}^{(3)} \right)^2$ --- which is in fact formally equivalent to  $\left ( \mathcal{R}_{\rm SS}^{(2)}-\mathcal{R}_{\rm SS}^{(3)} \right)^2 $ ---, and perform the formal replacement:
\begin{align}
\nonumber
\left\lbrace \mathcal{R}_{N,f}^{(1)},\mathcal{R}_{N,f}^{(2)}+\mathcal{R}_{N,f}^{(3)},\left ( \mathcal{R}_{N,f}^{(2)}-\mathcal{R}_{N,f}^{(3)} \right)^2,\mathcal{R}_{N,f}^{(4)}  \right\rbrace & \rightarrow \\& \hspace{-4cm} \left\lbrace \mathcal{R}_{\rm SS}^{(1)},\mathcal{R}_{\rm SS}^{(2)}+\mathcal{R}_{\rm SS}^{(3)},\left ( \mathcal{R}_{\rm SS}^{(2)}-\mathcal{R}_{\rm SS}^{(3)} \right)^2 -4 \left ( \mathcal{R}_{\rm SS}^{(4)}\right)^2 ,\mathcal{R}_{\rm SS}^{(5)}\right\rbrace\,.
\end{align}
The most convenient form of $\mathcal{L}_{N,f}$ in which to perform this formal replacement is \eqref{eq:lagnfproceso2}. By doing this, one gets:
\begin{align}
\mathcal{L}_{\rm SS}&=\mathcal{H}_0\left (\mathcal{R}_{\rm SS}^{(5)} \right) \mathcal{R}_{\rm SS}^{(1)}+ \left ( \mathcal{R}_{\rm SS}^{(2)}+\mathcal{R}_{\rm SS}^{(3)} \right) \mathcal{H}_2 \left (\mathcal{R}_{\rm SS }^{(5)}\right)+\mathcal{H}_3 \left (\mathcal{R}_{\rm SS}^{(5)}\right)\\& +\mathcal{H}_1\left (\mathcal{R}_{\rm SS}^{(5)}\right)\left [ \frac{\left (\mathcal{R}_{\rm SS}^{(2)}+\mathcal{R}_{\rm SS}^{(3)} \right)^2-\left (\mathcal{R}_{\rm SS}^{(2)}-\mathcal{R}_{\rm SS}^{(3)} \right)^2}{4}+\left ( \mathcal{R}_{\rm SS}^{(4)}\right)^2 \right] \,,
\end{align}
where the functions $\mathcal{H}_K$ with $K=0,1,2,3$ are fixed by imposing $\mathcal{L}_{\rm SS}$ adopts the form in \eqref{eq:laghf} for the restricted case $N=1$ and $f=f(r)$. Doing this, one gets \eqref{eq:lagssform} and we conclude.

\end{proof}

\begin{remark}
The Lagrangian \eqref{eq:lagssform} may be equivalently written as follows:
\begin{equation}
    \mathcal{L}_{\rm SS}=-\frac{h_0(\psi)}{2} \mathcal{R}_{\rm SS}^{(1)}-\frac{h_0'(\psi)}{4} \left ( \mathcal{R}_{\rm SS}^{(2)} \mathcal{R}_{\rm SS}^{(3)}+\left ( \mathcal{R}_{\rm SS}^{(4)}\right)^2 \right)-\frac{h_1(\psi)}{4} \left ( \mathcal{R}_{\rm SS}^{(2)}+\mathcal{R}_{\rm SS}^{(3)} \right) +h_2(\psi)\,,
\end{equation}
for $\psi=\dfrac{\mathcal{R}_{\rm SS}^{(5)}}{2}=\dfrac{1-f}{r^2}$ and where the functions $h_0,h_1$ and $h_2$ were defined in \eqref{eq:formhf}.
\end{remark}

Before continuing, let us express the entropy tensor $P_{abcd}$ in terms of derivatives of the Lagrangian with respect to the functions (and derivatives thereof) appearing in the metric ansatz \eqref{eq:ss}. The structure of the Riemann curvature tensor \eqref{eq:Rformss} ensures that:
\begin{align}
\left. P_{ab}{}^{cd} \right \vert_{\rm SS}&=P^{(1)}_{\rm SS} \tau_{[a}^{[c}  \rho_{b]}^{d]}+P^{(2)}_{\rm SS} \tau_{[a}^{[c}  \sigma_{b]}^{d]} +P^{(3)}_{\rm SS} \rho_{[a}^{[c} \sigma_{b]}^{d]}+P^{(4)}_{\rm SS} \zeta_{[a}^{[c} \sigma_{b]}^{d]} +P^{(5)}_{\rm SS} \sigma_{[a}^{[c}  \sigma_{b]}^{d]}\,,
\label{eq:Pformss}
\end{align}
where 
\begin{align}
\notag 
P_{\rm SS}^{(1)}&=4\frac{\partial \mathcal{L}_{\rm SS}}{\partial \mathcal{R}_{\rm SS}^{(1)}}\,,  \quad  P_{\rm SS}^{(I)}=\frac{4}{D-2}\frac{\partial \mathcal{L}_{\rm SS}}{\partial \mathcal{R}_{\rm SS}^{(I)}}\,, \quad I=2,3 \,, \\P_{\rm SS}^{(4)}&=-\frac{2}{(D-2)}\frac{\partial \mathcal{L}_{\rm SS}}{\partial \mathcal{R}_{\rm SS}^{(4)}}\,,\quad
P_{\rm SS}^{(5)}=\frac{2}{(D-2)(D-3)}\frac{\partial \mathcal{L}_{\rm SS}}{\partial \mathcal{R}_{\rm SS}^{(5)}}\,.\quad 
\label{eq:pisssfacil}
\end{align}
Equivalently, in terms of the derivatives of $f$ and $N$:
\begin{align}
\label{eq:P1ss}
P^{(1)}_{\rm SS}&=-2\frac{\partial \mathcal{L}_{\rm SS}}{\partial (\partial_r^2 f)}\,, \quad P^{(2)}_{\rm SS}=-\frac{r}{(D-2)f}\left[ N\frac{\partial \mathcal{L}_{\rm SS}}{\partial (\partial_r N)}-3 \partial_r f \frac{\partial \mathcal{L}_{\rm SS}}{\partial (\partial_r^2 f)} \right]\,, \\  \label{eq:P3ss}
P^{(3)}_{\rm SS}&=-P^{(2)}_{\rm SS}-\frac{2r}{(D-2)} \left[\frac{\partial \mathcal{L}_{\rm SS}}{\partial (\partial_r f)}-\frac{3 \partial_r N}{N} \frac{\partial \mathcal{L}_{\rm SS}}{(\partial_r^2 f)} \right]\,, \\
\label{eq:P4ss}
P^{(4)}_{\rm SS}&=\frac{r}{(D-2) f^2 N^2} \left[f^3 N^3 \frac{\partial \mathcal{L}_{\rm SS}}{\partial (\partial_t f)}+ \left( 4N \partial_t f+f \partial_t N \right) \frac{\partial \mathcal{L}_{\rm SS}}{\partial (\partial_r^2 f)}  \right]
 \\  
 \label{eq:P5ss}
P^{(5)}_{\rm SS}&=-\frac{r \partial_r N}{(D-3) N} P^{(2)}_{\rm SS} -\frac{r^2}{(D-2)(D-3)}\left[\frac{\partial \mathcal{L}_{\rm SS}}{\partial f}-\frac{2 \partial_r^2 N}{N} \frac{\partial \mathcal{L}_{\rm SS}}{\partial (\partial_r^2 f)} \right]\\\nonumber &+ \frac{r^2}{(D-2)(D-3)} \left[ -\frac{\partial_t f}{f}\frac{\partial \mathcal{L}_{\rm SS}}{\partial (\partial_t f)}+ \frac{1}{N^3 f^4}\left ( 2 N (\partial_t f)^2+f \partial_t f \partial_t N-2f N \partial_t^2 f  \right) \frac{\partial \mathcal{L}_{\rm SS}}{\partial (\partial_r^2 f)} \right]\,.
\end{align}

Let us note that $P^{(4)}_{\rm SS}$ may be obtained from $P^{(2)}_{\rm SS}$ and $P^{(3)}_{\rm SS}$ if one notes that the Lagrangian on \eqref{eq:ss} depends on $\mathcal{R}_{\rm SS}^{(2)}, \mathcal{R}_{\rm SS}^{(3)} $  and $\mathcal{R}_{\rm SS}^{(4)}$ through the combinations $\mathcal{R}_{\rm SS}^{(2)}+\mathcal{R}_{\rm SS}^{(3)} $ and $\left (\mathcal{R}_{\rm SS}^{(2)}-\mathcal{R}_{\rm SS}^{(3)} \right)^2 -4 \left (\mathcal{R}_{\rm SS}^{(4)} \right)^2$ --- see proof of Proposition \ref{prop:ssfoem}. Indeed, one gets:
\begin{equation}
  \mathcal{R}_{\rm SS}^{(4)} P^{(3)}_{\rm SS}+\mathcal{R}_{\rm SS}^{(2)}P^{(4)}_{\rm SS}=\mathcal{R}_{\rm SS}^{(3)}P^{(4)}_{\rm SS}+\mathcal{R}_{\rm SS}^{(4)}P^{(2)}_{\rm SS}\,.
    \label{eq:propcuriosa}
\end{equation}

Now, the divergence of \eqref{eq:Pformss} takes the following form:
\begin{equation}
\left. \nabla_c P_{ab}{}^{cd} \right \vert_{\rm SS}=A_{\rm SS} \delta_{[a}^t \rho_{b]}^d+B_{\rm SS} \delta_{[a}^r \tau_{b]}^d+ C_{\rm SS} \delta_{[a}^t \sigma_{b]}^d+E_{\rm SS}  \delta_{[a}^r \sigma_{b]}^d\,,
\label{eq:nablaPss}
\end{equation}
where we have defined:
\begin{align}
\label{eq:ABss}
A_{\rm SS}&=\frac{\partial_t P^{(1)}_{\rm SS}}{2}+\frac{Nf(D-2)P^{(4)}_{\rm SS}}{2r} \,, \quad B_{\rm SS}=\frac{\partial_r P^{(1)}_{\rm SS}}{2}+ \frac{(D-2)(P^{(1)}_{\rm SS}-P^{(2)}_{\rm SS})}{2r}\,, \\
\label{eq:Css}
C_{\rm SS}&=\frac{\partial_t P^{(2)}_{\rm SS}}{2}+\frac{Nf \partial_r P^{(4)}_{\rm SS}}{2}- \frac{\partial_t f \left (P^{(2)}_{\rm SS}-P^{(3)}_{\rm SS} \right )}{4f} +\frac{P^{(4)}_{\rm SS}(2r \partial_r N f+r N \partial_r f+(D-3) Nf)}{2r}\,,\\
\nonumber
E_{\rm SS}&=\frac{\partial_r P^{(3)}_{\rm SS}}{2}-\frac{\partial_t P^{(4)}_{\rm SS}}{2Nf}-\frac{\left (P^{(2)}_{\rm SS}-P^{(3)}_{\rm SS} \right )(2 \partial_r N f+ N \partial_r f)}{4Nf}+ \frac{(D-3)\left (P^{(3)}_{\rm SS}-2P^{(5)}_{\rm SS} \right )}{2r}\\ \label{eq:Ess} &+\frac{\partial_t f P^{(4)}_{\rm SS}}{2N f^2}\,.
\end{align}

\begin{proposition}
\label{prop:ssres}
Let $\mathcal{L}(g^{ab},R_{cdef})$ be a higher-curvature gravity constructed from polynomial contractions of curvature tensors. The following are equivalent.
\begin{enumerate}
\item $\mathcal{L}(g^{ab},R_{cdef})$ has second-order equations of motion on static and spherically symmetric backgrounds.
\item $\left. \nabla_a P^{abcd} \right \vert_{\rm SS}=0$.
\item $\mathcal{L}(g^{ab},R_{cdef})$ has second-order equations of motion on spherically symmetric backgrounds.
\item The Lagrangian evaluated on \eqref{eq:ss} is given by:
\begin{align}
\label{eq:totlagss}
\mathcal{L}_{\rm SS}&= \frac{1}{N r^{D-2}} \frac{\mathrm{d}}{\mathrm{d}r} \left (\frac{(1-f)^{(D-2)/2}}{2}(N\partial_r f+2f\partial_r N) \int^\psi h'(\kappa) \kappa^{-D/2} \mathrm{d}\kappa \right)\\ \nonumber &+\frac{1}{r^{D-2}} \frac{\mathrm{d}}{\mathrm{d} r} \left ( r^{D-1} h\left (\frac{1-f}{r^2} \right) \right )+\frac{1}{N} \frac{\partial}{\partial t} \left (\frac{\partial_t f}{2N f^2} \left (\frac{1-f}{r^2} \right)^{(D-2)/2}\int^\psi h'(\kappa) \kappa^{-D/2} \mathrm{d}\kappa  \right) \,,
\end{align}
where $h$ is some analytic function.
%In particular, $\mathcal{L}(g^{ab},R_{cdef})$ satisfies a Birkhoff theorem.
\end{enumerate}
\end{proposition}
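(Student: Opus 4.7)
The strategy is to establish the cycle $1 \Rightarrow 4 \Rightarrow 2 \Rightarrow 3 \Rightarrow 1$. The implication $3 \Rightarrow 1$ is immediate, since any SSS background is a particular SS configuration, so second-order equations on general SS backgrounds descend to second-order equations on SSS ones. The remaining three steps constitute the body of the proof.

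For $1 \Rightarrow 4$, most of the work has already been done in Proposition \ref{prop:ssfoem}, which provides the explicit form \eqref{eq:lagssform} of $\mathcal{L}_{\rm SS}$ under the hypothesis of second-order SSS equations. What remains is to recognize \eqref{eq:lagssform} as the sum of total derivatives displayed in \eqref{eq:totlagss}. I plan to proceed in direct analogy with Proposition \ref{prop:lagsss}: expand the three total-derivative terms in \eqref{eq:totlagss} using the chain rule together with the parametrization \eqref{eq:formhf} of $h_0, h_1, h_2$ in terms of the single function $h$, and match term by term against \eqref{eq:lagssform}. The genuinely new feature compared to the static case is the last $t$-derivative term in \eqref{eq:totlagss}, which precisely accounts for the $(\partial_t f)^2/(N^2 f^2)$, $\partial_t^2 f/(N^2 f^2)$ and $\partial_t f\,\partial_t N/(N^3 f^2)$ contributions appearing in \eqref{eq:lagssform}.

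For $4 \Rightarrow 2$, I would substitute the total-derivative form \eqref{eq:totlagss} into the definitions \eqref{eq:P1ss}--\eqref{eq:P5ss} of the components $P^{(i)}_{\rm SS}$, express each in terms of $\psi$, $h$, and the first derivatives of $f$ and $N$, and then feed them into the coefficient expressions \eqref{eq:ABss}--\eqref{eq:Ess} for $A_{\rm SS}, B_{\rm SS}, C_{\rm SS}, E_{\rm SS}$ in the divergence \eqref{eq:nablaPss}. One then verifies that each of the four coefficients vanishes identically. The identity \eqref{eq:propcuriosa}, which encodes the fact that $\mathcal{R}_{\rm SS}^{(4)}$ enters the Lagrangian only through the combination $(\mathcal{R}_{\rm SS}^{(2)}-\mathcal{R}_{\rm SS}^{(3)})^2 - 4(\mathcal{R}_{\rm SS}^{(4)})^2$, serves as a useful structural consistency check between $P^{(2)}_{\rm SS}$, $P^{(3)}_{\rm SS}$ and $P^{(4)}_{\rm SS}$. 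For $2 \Rightarrow 3$, I would invoke the general equation of motion \eqref{eq:eomgen}: every derivative of the metric of order higher than two in $\mathcal{E}_{ab}$ must originate in the term $2\nabla^c \nabla^d P_{acbd}$. The hypothesis $\nabla_a P^{abcd}\vert_{\rm SS} = 0$, combined with the Riemann-tensor symmetries of $P$, allows one to rewrite $\nabla^c \nabla^d P_{acbd}$ purely in terms of commutators of covariant derivatives acting on $P$; by the Ricci identity these commutators are algebraic in the Riemann curvature, hence at most second order in derivatives of the metric. Consequently the full EOMs are second order on general SS backgrounds.

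The main technical obstacle is the $4 \Rightarrow 2$ step: although each individual calculation is elementary, the five entropy-tensor components and the four divergence coefficients are unwieldy, and one must verify nontrivial cancellations between the static and non-static pieces. In particular, the $\partial_t$ total-derivative piece in \eqref{eq:totlagss} must precisely compensate the contributions that would otherwise spoil the vanishing of $A_{\rm SS}, C_{\rm SS}$ and $E_{\rm SS}$ away from staticity. Because the static analogue of this computation already underlies Proposition \ref{prop:lagsss}, however, the structural form of the identities one needs is known in advance, so the computation reduces to a careful bookkeeping exercise rather than a conceptual obstruction.
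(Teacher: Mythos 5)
Your proposal is correct and follows essentially the same route as the paper: the substantive input in both cases is Proposition \ref{prop:ssfoem} (which pins down $\mathcal{L}_{\rm SS}$ to the form \eqref{eq:lagssform} from second-order SSS equations alone), followed by the direct verification that this expression is the total derivative \eqref{eq:totlagss} and that the resulting $P^{(i)}_{\rm SS}$ make $A_{\rm SS},B_{\rm SS},C_{\rm SS},E_{\rm SS}$ vanish; you merely permute the cycle to $1\Rightarrow4\Rightarrow2\Rightarrow3\Rightarrow1$ instead of the paper's $1\rightarrow2\rightarrow3\rightarrow4\rightarrow1$. The only cosmetic remark is that your $2\Rightarrow3$ step is more elaborate than necessary: since the index symmetries of $P^{abcd}$ make all single divergences vanish under hypothesis 2, the inner divergence $\nabla^d P_{acbd}\vert_{\rm SS}$ is already zero and the term $2\nabla^c\nabla^d P_{acbd}$ drops out entirely, which is why the paper labels this implication trivial.
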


\begin{proof}
``\emph{1} $\rightarrow$ \emph{2}''. If the theory possesses second-order equations on SSS configurations \eqref{eq:sss}, then on non-static SS backgrounds \eqref{eq:ss} the Lagrangian will take the form prescribed in \eqref{eq:lagssform}, as proven by Proposition \ref{prop:ssfoem}. Taking this expression, we find that the subsequent components of \eqref{eq:Pformss} take the form:
\begin{align}
\nonumber
P^{(1)}_{\rm SS}&=-2h_0(\psi)\,, \quad  P^{(2)}_{\rm SS}=\frac{2h_0'(\psi) \partial_r f-r h_1(\psi)}{(D-2)r}\,,  \\ \label{eq:Pss}P^{(3)}_{\rm SS}&=P^{(2)}_{\rm SS}+\frac{4f h_0'(\psi)\partial_r N}{(D-2) r N}\,, \quad P^{(4)}_{\rm SS}=-\frac{2h_0'(\psi) \partial_t f}{(D-2)r fN}\,, \\ \nonumber P^{(5)}_{\rm SS}&=\frac{P^{(3)}_{\rm SS}}{2}+ \frac{r}{2(D-3)} \left ( \partial_r P^{(3)}_{\rm SS}-\frac{\partial_t \left (\frac{ P^{(4)}_{\rm SS}}{f} \right)}{N}  +\frac{r \mathcal{R}_{\rm SS}^{(2)} \left ( P^{(2)}_{\rm SS}-P^{(3)}_{\rm SS}\right)}{4f}\right)\,,
\end{align} 
where $r^2 \psi=1-f$, the functions $h_0$ and $h_1$ were given in \eqref{eq:formhf}. From here, it is a straightforward exercise to check that $A_{\rm SS},B_{\rm SS},C_{\rm SS}$ and $E_{\rm SS}$ defined in \eqref{eq:ABss}, \eqref{eq:Css} and \eqref{eq:Ess} are identically zero, so that $\left. \nabla_a P^{abcd} \right \vert_{\rm SS}=0$.
\\

``\emph{2} $\rightarrow$ \emph{3}''. Trivial.  \\

``\emph{3} $\rightarrow$ \emph{4}''. If $\mathcal{L}(g^{ab},R_{cdef})$ has second-order equations on \eqref{eq:ss}, then its Lagrangian is given by \eqref{eq:lagssform}. It is then a direct computation to show that it can be written as in \eqref{eq:totlagss}.    \\

``\emph{4} $\rightarrow$ \emph{1}''. By direct computation, one notes that the Lagrangian \eqref{eq:totlagss} satisfies $\left. \nabla_a P^{abcd} \right \vert_{\rm SS}=0$. Therefore, the theory has second-order equations on spherical backgrounds.

\end{proof}

At this point, it is convenient to examine the conditions the contracted Bianchi identity imposes on the gravitational equations of motion $\mathcal{E}_{ab}$ of a theory $\mathcal{L}(g^{ab},R_{cdef})$ when considered on spherical backgrounds \eqref{eq:ss}. Spherical symmetry enforces that:
\begin{equation}
\label{eq:eomstrucss}
\left. \mathcal{E}_{ab} \right \vert_{\rm SS}=\mathcal{E}_{\rm tem} \tau_{ab}+\mathcal{E}_{\rm mix} \zeta_{ab}+\mathcal{E}_{\rm rad} \rho_{ab}+\mathcal{E}_{\rm ang} \sigma_{ab}\,.
\end{equation}
Now, the contracted Bianchi identity $\left. \nabla^a \mathcal{E}_{ab} \right \vert_{\rm SS}=0$ imposes the following relations on the different components of $\left. \mathcal{E}_{ab} \right \vert_{\rm SS}$:
\begin{align}
\label{eq:B1}
&\partial_t \mathcal{E}_{\rm tem} +\frac{(\mathcal{E}_{\rm rad}-\mathcal{E}_{\rm tem})\partial_t f}{2f} +\left (N \partial_r f +2f \partial_r N +\frac{(D-2)Nf}{r} \right)\mathcal{E}_{\rm mix}+Nf \partial_r \mathcal{E}_{\rm mix}=0\,, \\ \nonumber
&2f^2(r(\mathcal{E}_{\rm rad}-\mathcal{E}_{\rm tem})\partial_r N+N(D-2)(\mathcal{E}_{\rm rad}-\mathcal{E}_{\rm ang})+ r N \partial_r \mathcal{E}_{\rm rad})+2 r \mathcal{E}_{\rm mix} \partial_t f\\& +r f N (\mathcal{E}_{\rm rad}-\mathcal{E}_{\rm tem})\partial_r f-2r f \partial_t \mathcal{E}_{\rm mix}=0\,. \label{eq:B2}
\end{align}

\begin{proposition}
The unique spherically symmetric solutions of a theory $\mathcal{L}(g^{ab},R_{cdef})$ constructed from analytic functions of polynomial curvature invariants with second-order equations of motion on \eqref{eq:ss} are given by:
\begin{equation}
\label{eq:eqsss}
h'(\psi)\partial_t f=0\,, \quad h'(\psi)\partial_r N=0\,, \quad \frac{\mathrm{d}}{\mathrm{d}r} \left(r^{D-1} h(\psi) \right)=0\,.
\end{equation}
where $h$ is a theory-dependent function given by \eqref{eq:obtenerh} in Proposition \ref{prop:eomsssn1} and $\psi=\dfrac{1-f}{r^2}$.
\end{proposition}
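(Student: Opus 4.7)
The plan is to leverage the characterization of QTGs of type II on SS backgrounds established in Proposition \ref{prop:ssres}. Since $\mathcal{L}$ has second-order equations on \eqref{eq:ss}, that proposition yields $\left.\nabla_a P^{abcd}\right|_{\rm SS}=0$. The Riemann-like symmetries of the entropy tensor imply that every single-index divergence of $P^{abcd}$ vanishes on SS, which in turn forces $\left.\nabla^c\nabla^d P_{acbd}\right|_{\rm SS}=0$. Consequently, the equations of motion \eqref{eq:eomgen} collapse on the SS ansatz to the purely algebraic form
\begin{equation}
\left.\mathcal{E}_a{}^b\right|_{\rm SS}= \left.P_{ae}{}^{cd} R_{cd}{}^{be}\right|_{\rm SS}-\frac{1}{2}\delta_a^b\,\mathcal{L}_{\rm SS}\,.
\end{equation}

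Next I would expand $P_{ae}{}^{cd} R_{cd}{}^{be}$ using \eqref{eq:Pformss} and \eqref{eq:Rformss}, generalizing the static computation \eqref{eq:PRNF} to include the non-static pieces $\mathcal{R}^{(4)}_{\rm SS},P^{(4)}_{\rm SS}$. The projector algebra \eqref{eq:proy}, together with the identities governing $\zeta_{ab}$, decomposes the result into the four components along $\tau_a{}^b,\rho_a{}^b,\zeta_a{}^b,\sigma_a{}^b$ anticipated by \eqref{eq:eomstrucss}. Substituting the explicit expressions \eqref{eq:Pss} for the $P^{(I)}_{\rm SS}$, along with the parametrization \eqref{eq:formhf} of $h_0,h_1,h_2$ in terms of the single function $h(\psi)$, I would then read off three scalar equations: the combination $\mathcal{E}_t{}^t-\mathcal{E}_r{}^r$ should give $h'(\psi)\partial_r N=0$, generalizing the static relation \eqref{eq:eq1alt}; the off-diagonal $\zeta_a{}^b$ component should yield $h'(\psi)\partial_t f=0$, the genuinely new non-static constraint; and, upon enforcing these two, $\mathcal{E}_t{}^t=0$ should reduce to $\frac{1}{r^{D-3}}\frac{\mathrm{d}}{\mathrm{d}r}\!\left(r^{D-1}h(\psi)\right)=0$, exactly as in \eqref{eq:feq}. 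That $h$ is the function of \eqref{eq:obtenerh} follows by comparing with the $N=1$, $\partial_t f=0$ limit, as in Proposition \ref{prop:eomsssn1}.

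The final step is to dispose of the angular component. Using the decomposition \eqref{eq:eomstrucss} together with the contracted Bianchi identities \eqref{eq:B1} and \eqref{eq:B2}, once $\mathcal{E}_{\rm tem}$, $\mathcal{E}_{\rm mix}$ and $\mathcal{E}_{\rm rad}$ vanish the remaining constraints on $\mathcal{E}_{\rm ang}$ become algebraic and force $\mathcal{E}_{\rm ang}=0$. This mirrors the mechanism used in the static proofs (cf.\ \eqref{eq:bianchisssn1} and \eqref{eq:biansss}), and completes the reduction of $\left.\mathcal{E}_{ab}\right|_{\rm SS}=0$ to the three conditions \eqref{eq:eqsss}.

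The main obstacle is isolating the off-diagonal ($\zeta_a{}^b$) equation as the clean product $h'(\psi)\partial_t f$ times a non-vanishing prefactor. This component of $P_{ae}{}^{cd} R_{cd}{}^{be}$ receives contributions from several pairings of $P^{(2)}_{\rm SS},P^{(3)}_{\rm SS},P^{(4)}_{\rm SS}$ with $\mathcal{R}^{(2)}_{\rm SS},\mathcal{R}^{(3)}_{\rm SS},\mathcal{R}^{(4)}_{\rm SS}$, and the algebraic identity \eqref{eq:propcuriosa}, which encodes the symmetric dependence of $\mathcal{L}_{\rm SS}$ on $(\mathcal{R}^{(2)}_{\rm SS}-\mathcal{R}^{(3)}_{\rm SS})^2-4(\mathcal{R}^{(4)}_{\rm SS})^2$ and $\mathcal{R}^{(2)}_{\rm SS}+\mathcal{R}^{(3)}_{\rm SS}$, should be crucial in collapsing those several contributions into the advertised form.
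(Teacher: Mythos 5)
Your proposal is correct and follows essentially the same route as the paper: invoke $\left.\nabla_a P^{abcd}\right|_{\rm SS}=0$ (from Proposition \ref{prop:ssres}) to kill the $\nabla\nabla P$ term, expand $P_{ae}{}^{cd}R_{cd}{}^{be}$ via \eqref{eq:Pformss}, \eqref{eq:Rformss} and \eqref{eq:propcuriosa}, read off $h'(\psi)\partial_t f=0$ from the off-diagonal ($tr$, i.e.\ $\zeta$) component and $h'(\psi)\partial_r N=0$ from $\mathcal{E}_t{}^t-\mathcal{E}_r{}^r$, reduce the remaining diagonal equation to the total derivative of $r^{D-1}h(\psi)$, and dispose of the angular component with the Bianchi identity \eqref{eq:B2}. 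The paper's proof is exactly this computation, with the $tr$ component evaluated explicitly as $\mathcal{E}_t{}^r=\partial_t f\,h_1(\psi)/(4r)$.
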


\begin{proof}
For general SS backgrounds, we have that
\begin{align}
\nonumber
\left. P_{ae}{}^{cd} R_{cd}{}^{be} \right \vert_{\rm SS}&=\frac{\mathcal{R}_{\rm SS}^{(1)} P^{(1)}_{\rm SS}+(D-2)(\mathcal{R}_{\rm SS}^{(2)} P^{(2)}_{\rm SS}-\mathcal{R}_{\rm SS}^{(4)} P^{(4)}_{\rm SS})}{8} \tau_a{}^b  \\ \nonumber &+ \frac{(D-2)\left (\mathcal{R}_{\rm SS}^{(4)} P^{(2)}_{\rm SS}+\mathcal{R}_{\rm SS}^{(3)} P^{(4)}_{\rm SS} \right)}{8}\zeta_a{}^b\\\nonumber & +\frac{\mathcal{R}_{\rm SS}^{(1)} P^{(1)}_{\rm SS}+(D-2)(\mathcal{R}_{\rm SS}^{(3)} P^{(3)}_{\rm SS}-\mathcal{R}_{\rm SS}^{(4)} P^{(4)}_{\rm SS})}{8} \rho_a{}^b\\&+ \frac{\mathcal{R}_{\rm SS}^{(2)} P^{(2)}_{\rm SS}+\mathcal{R}_{\rm SS}^{(3)} P^{(3)}_{\rm SS}-2\mathcal{R}_{\rm SS}^{(4)}P^{(4)}_{\rm SS}+4(D-3) \mathcal{R}_{\rm SS}^{(5)} P^{(5)}_{\rm SS}}{8} \sigma_a{}^b\,,
\label{eq:PRss}
\end{align}
where we have used \eqref{eq:propcuriosa}. 
Let us examine first the component $\mathcal{E}_{t}{}^{r}$ of the equations of motion. This is entirely given by $\mathcal{R}_{\rm SS}^{(4)} P^{(2)}_{\rm SS}+\mathcal{R}_{\rm SS}^{(3)} P^{(4)}_{\rm SS}=0$, as the theory satisfies $\left. \nabla_a P^{abcd} \right \vert_{\rm SS}=0$. By \eqref{eq:lagssform} and \eqref{eq:Pss}, we compute:
\begin{equation}
\mathcal{E}_{t}{}^{r}=\frac{\partial_t f h_1(\psi)}{4 r }=0\,.
\end{equation}
By \eqref{eq:formhf}, this implies:
\begin{equation}
h'(\psi)\partial_t f=0\,.
\end{equation}
On the other hand, if we compute $\mathcal{E}_{t}^t-\mathcal{E}_{r}^r$, we directly infer that:
\begin{equation}
\mathcal{E}_{t}^t-\mathcal{E}_{r}^r=\frac{(D-2)}{8}\left ( \mathcal{R}_{\rm SS}^{(2)} P^{(2)}_{\rm SS}-\mathcal{R}_{\rm SS}^{(3)} P^{(3)}_{\rm SS}\right)= \frac{f \partial_r N}{2r N} h_1(\psi)\,,
\end{equation}
so that we are led to the equation:
\begin{equation}
h'(\psi) \partial_r N=0\,.
\label{eq:ettrrigual}
\end{equation}
Imposing \eqref{eq:ettrrigual}, $\mathcal{E}_{t}^t$ and $\mathcal{E}_{r}^r$ are equivalent and, by use of \eqref{eq:lagssform}, produce the condition:
\begin{equation}
\frac{\mathrm{d}}{\mathrm{d}r} \left(r^{D-1} h(\psi) \right)=0\,.
\end{equation}
Finally, the angular components of the equations are automatically satisfied by the Bianchi identity \eqref{eq:B2} and we conclude.
\end{proof}

{\noindent \bf Proof of Theorem \ref{thm:2}.} Follows  directly from Proposition \ref{prop:lagsss} and by Proposition \ref{prop:ssres}. \qed

\vspace{0.25cm}

It is noteworthy that the requirement of second-order equations on SSS backgrounds actually forces the equations on general spherical configurations to be second order. Not only that: the $tr$ component $\mathcal{E}_{tr}$ of the equations of motion must take the form $\mathcal{E}_{tr}=-\frac{h'(\psi)}{2rf} \partial_t f$, which implies the time independence of $f$. 

This motivates the converse question: given a theory constructed from analytic combinations of polynomial curvature invariants whose $\mathcal{E}_{tr}$ equation \emph{only} depends on $\partial_t f$, $f$ and $N$ on spherical backgrounds \eqref{eq:ss} with $N=N(r)$ --- to allow for even more general structures for $\mathcal{E}_{tr}$ if there was a general time-dependent $N$ ---,  what could we conclude about the theory? Despite being seemingly a much weaker condition, we prove in the appendix that the theory turns out to be a type II QTG, thus promoting this very special structure of $\mathcal{E}_{tr}$ to be a defining property of type II QTGs.

\section{Birkhoff implies QTG (of type II)}
\label{sec5}

Let $\mathcal{L}(g^{ab},R_{cdef})$ be a higher-curvature theory of gravity which depends analytically on polynomial curvature invariants. Assume it satisfies a Birkhoff theorem, so that  all spherically symmetric solutions
are characterized by a single continuous parameter to be identified with the ADM mass, together with a possible additional discrete parameter. Specifically, if we focus on general SSS metrics \eqref{eq:sss}, this means that the $tt$ and $rr$ components of the equations of motion may not possess derivatives of degree higher than one. Indeed, from these equations, one would get two integration constants: one is to be compensated with the gauge freedom associated with time rescalings, while the other will be identified with the ADM mass. Consequently, higher (radial) derivatives of $f$ and $N$ in the $tt$ or $rr$ components would give rise to unconstrained continuous integration constants that would violate Birkhoff's theorem.

It is direct to see that any theory $\mathcal{L}(g^{ab},R_{cdef})$ whose $tt$ and $rr$ equations on \eqref{eq:sss} are of first order in derivatives will be a type II QTG.
\begin{proposition}
\label{prop:sbprop}
    Let $\mathcal{L}(g^{ab},R_{cdef})$ be a theory of gravity which is built from analytic combinations of curvature invariants. Consider general static and spherically symmetric configurations \eqref{eq:sss}. If the $tt$ and $rr$ components of the equations are of first derivative order, then the theory is a type II QTG.
\end{proposition}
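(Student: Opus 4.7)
My plan is to leverage the contracted Bianchi identity \eqref{eq:biansss} to lift the first-order property of $\mathcal{E}_t{}^t$ and $\mathcal{E}_r{}^r$ to second-order behaviour of the full equations of motion, and then invoke Theorem~\ref{thm:2} to identify the theory as a type II QTG. Concretely, on a general SSS background the equations of motion necessarily take the form \eqref{eq:eomstrucsss}, namely
\begin{equation}
\left.\mathcal{E}_{ab}\right\vert_{N,f}=\mathcal{E}_{\rm tem}\tau_{ab}+\mathcal{E}_{\rm rad}\rho_{ab}+\mathcal{E}_{\rm ang}\sigma_{ab}\,.
\end{equation}
By hypothesis, $\mathcal{E}_{\rm tem}$ and $\mathcal{E}_{\rm rad}$ depend only on $(r,f,f',N,N')$, i.e.\ they contain no derivatives of $f$ or $N$ higher than first order.

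Next, I would solve \eqref{eq:biansss} algebraically for the angular component:
\begin{equation}
\mathcal{E}_{\rm ang}=\mathcal{E}_{\rm rad}+\frac{r}{D-2}\frac{\mathrm{d}\mathcal{E}_{\rm rad}}{\mathrm{d}r}+\frac{r(2N'f+Nf')}{2N(D-2)}\left(\mathcal{E}_{\rm rad}-\mathcal{E}_{\rm tem}\right)\,.
\end{equation}
Since $\mathcal{E}_{\rm rad}$ is of first-derivative order, $\tfrac{\mathrm{d}\mathcal{E}_{\rm rad}}{\mathrm{d}r}$ can contain at most $f''$ and $N''$, while the remaining piece is manifestly a product of first-derivative expressions and is therefore also at most first order. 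Consequently $\mathcal{E}_{\rm ang}$ is of at most second-derivative order. Thus the full tensor $\left.\mathcal{E}_{ab}\right\vert_{N,f}$ has second-order equations on every SSS background, which is precisely Notion \ref{def:5}.

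Finally, by Theorem~\ref{thm:2} (the equivalence of Notions \ref{def:3}, \ref{def:4} and \ref{def:5}), the theory is a type II QTG, concluding the proof. The only step that requires any care is the second one: one must verify that no hidden $f''$ or $N''$ pieces can enter $\mathcal{E}_{\rm tem}$ or $\mathcal{E}_{\rm rad}$ in a way disguised by the first-order hypothesis, but this is immediate since \textit{first-derivative order} is taken to mean the literal absence of second or higher derivatives in those components. The remaining manipulation is algebraic and the identification with type II QTGs is then an invocation of the previously established equivalences, so no novel obstruction arises.
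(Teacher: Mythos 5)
Your proposal is correct and follows essentially the same route as the paper: both use the structure \eqref{eq:eomstrucsss} of the field equations on SSS backgrounds, solve the contracted Bianchi identity \eqref{eq:biansss} for $\mathcal{E}_{\rm ang}$ to show it is at most second order when $\mathcal{E}_{\rm tem}$ and $\mathcal{E}_{\rm rad}$ are first order, and then invoke the previously established equivalences (the paper cites Proposition \ref{prop:ssres}, you cite Theorem \ref{thm:2}, which rests on the same result) to conclude the theory is a type II QTG.
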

\begin{proof}
    On the one hand, from the structure of the curvature tensor and the tensor $\nabla^{c} \nabla^d P_{acbd}$ on \eqref{eq:sss}, which are given by  \eqref{eq:RformNf} and  \eqref{eq:nnPSSS} respectively, one infers that the equations of motion will have the structure \eqref{eq:eomstrucsss}. In particular, there are only three independent equations: the $tt$ component given by $\mathcal{E}_{\rm tem}$, the $rr$ component determined by $\mathcal{E}_{\rm rad}$ and the angular equation $\mathcal{E}_{\rm ang}$ --- see \eqref{eq:eomstrucsss}. However, the contracted Bianchi identity on \eqref{eq:sss}, which was presented in \eqref{eq:biansss}, imposes that:
    \begin{equation}
       \mathcal{E}_{\rm ang}= \mathcal{E}_{\rm rad}+ \frac{r}{D-2} \left[\frac{\mathrm{d}\mathcal{E}_{\rm rad}}{\mathrm{d}r}+\frac{(2N'f+N f'))}{2N}(\mathcal{E}_{\rm rad}-\mathcal{E}_{\rm tem}) \right]\,.
    \end{equation}
    As a result, if $\mathcal{E}_{\rm tem}$ and $\mathcal{E}_{\rm rad}$ are of first derivative order, $\mathcal{E}_{\rm ang}$ will be of second derivative order. Consequently, the whole set of equations on \eqref{eq:sss} will of second derivative order. Proposition \ref{prop:ssres}  guarantees that $\mathcal{L}(g^{ab},R_{cdef})$ is a type II QTG and we conclude.
\end{proof}

{\noindent \bf Proof of Theorem \ref{thmBirk}.} Assume that a certain theory $\mathcal{L}(g^{ab},R_{cdef})$ built from analytic combinations of polynomial curvature invariants is such that all their SSS solutions are characterized by a single continuous parameter, up to the presence of an extra discrete parameter. As argued above, this implies that the $tt$ and $rr$ components of the equations on SSS backgrounds \eqref{eq:sss} must be of first derivative order. Then, Proposition \ref{prop:sbprop} ensures that the theory is a type II QTG, thus showing the first item of the theorem.

Let us continue proving the second part of the theorem. Since the theory must be a type II QTG, the equations of motion on general SS backgrounds \eqref{eq:ss} take the form \eqref{eq:eqsss}, in terms of the function $h(\psi)$ that characterizes the theory. If static solutions $\mathcal{L}(g^{ab},R_{cdef})$ feature no other continuous parameters beyond the ADM mass, it follows that there does not exist $\psi_0 \in \mathbb{R}$ such that $h(\psi_0)=h'(\psi_0)=0$ --- cf. Remark \ref{rem:csp}. As a consequence, SS solutions must satisfy $\partial_t f=\partial_r N=0$, meaning that these must be static. Therefore, the theory fulfills Birkhoff's theorem. \qed

\begin{remark} Theorem \ref{thmBirk} implies that there exist no higher-curvature theories constructed from polynomial contractions of curvature invariants which satisfy a Birkhoff theorem and for which the corresponding most general SS solution involves two functions. As a matter of fact, the SSS solutions of most higher-curvature gravities are characterized by two functions --- see \eg \cite{Lu:2015cqa,Lu:2015psa}. Theorem \ref{thmBirk} implies that for all those theories such SSS solutions will not be the most general SS ones.
\end{remark}

Let us elaborate on this result. As explained before, it is easily shown that any theory satisfying a Birkhoff theorem must have second order equations on SSS configurations, so that it must be a type II QTG. Nevertheless, it is highly non-trivial that \emph{just} imposing SSS solutions to be characterized by a single continuous integration constant to be identified with the ADM mass --- featuring perhaps an additional discrete parameter --- actually enforces that all SS solutions are static. This key result is a consequence of Proposition \ref{prop:ssres}, which entails that possessing second-order equations on SSS backgrounds is equivalent to having second-order equations on general spherical configurations.

This discussion motivates the following question: given a QTG of type II, does it necessarily satisfy a Birkhoff theorem? The answer turns to be \emph{almost yes}, as there only exists a measure-zero set of QTGs of type II which do not fulfill Birkhoff's theorem because of a subtle technicality, as we explain in Remark \ref{rm:Birkpart} below. 

\begin{remark}
Let us go back to the expression for type II QTGs on spherical backgrounds, which was given in \eqref{eq:lagssform}. The equations of motion were given by \eqref{eq:eqsss} in terms of a function $h(\psi)$ specifying the theory. We observe that those theories for which there does not exist $\psi_0 \in \mathbb{R}$ such that $h(\psi_0)=h'(\psi_0)=0$ satisfy a Birkhoff theorem as stated in Definition \ref{def:birk}. %: all spherically symmetric solutions are necessarily static and are characterized by a single continuous parameter. 

Nevertheless, just like in Remark \ref{rem:csp}, those particular theories for which  $\exists  \, \psi_0 \in \mathbb{R}$ such that $h(\psi_0)=h'(\psi_0)=0$ will no longer satisfy a Birkhoff theorem, as any solution 
\begin{equation}
f(r)=1-\psi_0 r^2\,, \quad N(t,r)\,\,  \rm{arbitrary}
\end{equation}
solves the equation of motion.  Clearly, these theories are not generic and represent a lower-dimensional subspace of the moduli space of QTGs of type II.
\label{rm:Birkpart}
\end{remark}

In summary, we have proven that a theory constructed from polynomial curvature invariants satisfies Birkhoff's theorem if and only if it is a type II QTG whose static solutions are uniquely specified by a single continuous integration constant and a possible additional discrete parameter.

\section{QTGs and theories with $2^{\rm nd}$-order traced field equations}\label{sec6}

Let us now consider higher-curvature theories which possess second-order traced field equations on generic backgrounds. These satisfy Notion \ref{def:6} and correspond to QTGs of type III. These theories were considered in the literature before quasi-topological gravities of type I and II and may be regarded as their precursors \cite{Oliva:2010zd,Oliva:2010eb,Oliva:2011xu,Oliva:2012zs}. However, it has remained unclear to what extent these conditions are equivalent and/or imply each other. It is the purpose of this section to clarify this aspect.

To this end, it is convenient to analyze the condition a theory must satisfy for it to possess second-order traced field equations. Since the gravitational equations of motion for a higher-curvature theory $\mathcal{L}(g^{ab},R_{cdef})$ are given by \eqref{eq:eomgen}, we conclude that:
\begin{equation}
\text{Second-order traced field equations} \longleftrightarrow \nabla^c \nabla^d P_{cad}{}^a \, \text{is of second derivative order}\,.
\end{equation}
Let us restrict our attention to general SSS configurations \eqref{eq:sss}. Using \eqref{eq:nnPSSS}, direct computation reveals that:
\begin{align}
\notag
\left. \nabla^c \nabla^d P_{cad}{}^a \right \vert_{N,f}&=\frac{f}{2}\frac{\mathrm{d}\mathcal{C}_{N,f}}{\mathrm{d}r} + \frac{(rf'+(D-2)f)\mathcal{C}_{N,f}}{2r}
+\frac{f N' \mathcal{C}_{N,f}}{2N}\,, \\ &\mathcal{C}_{N,f}=A_{N,f}+(D-2)B_{N,f}\,,
\label{eq:nnPconform}
\end{align}
where $A_{N,f}$ and $B_{N,f}$ are presented in \eqref{eq:ABsss}. As a result, $\nabla^c \nabla^d P_{cad}{}^a$ will feature no derivatives of degree higher than two if and only if $A_{N,f}+(D-2)B_{N,f}$ is of first derivative order.

\begin{proposition}
\label{prop:2ndtr2}
Let $\mathcal{L}(g^{ab},R_{cdef})$ be a higher-curvature theory constructed from analytic combinations of polynomial curvature invariants. Assume it has second-order traced field equations. Then, it takes the following form:
\begin{equation}
\mathcal{L}(g^{ab},R_{cdef})=\mathcal{L}_1(g^{ab},W_{cdef})+\mathcal{L}_2(g^{ab},R_{cdef})\,,
\label{eq:2ndtr2}
\end{equation}
where $\mathcal{L}_1$ is written entirely in terms of Weyl curvature tensors and contractions thereof and $\mathcal{L}_2$ is a type II QTG and with second-order traced field equations on general backgrounds.  %Then, it satisfies Definitions \ref{def:3}, \ref{def:4} and \ref{def:5}.
\end{proposition}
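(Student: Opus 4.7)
The plan is to combine a key algebraic observation about pure Weyl Lagrangians with the SSS characterization of type II QTGs from Proposition \ref{prop:ngen}, and then lift the resulting split to the covariant level.

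First I would show that any pure Weyl Lagrangian $\mathcal{L}_1(g^{ab},W_{cdef})$ built from polynomial contractions of Weyl tensors automatically satisfies Notion \ref{def:6}. The chain rule $\partial\mathcal{L}_1/\partial R_{abcd}=(\partial\mathcal{L}_1/\partial W_{efgh})(\partial W_{efgh}/\partial R_{abcd})$ together with the fact that $\partial W/\partial R$ projects onto the traceless part of the Ricci decomposition forces the associated entropy tensor $P_1^{abcd}$ to inherit the full tracelessness of the Weyl tensor, $g^{bd}(P_1)_{abcd}=0$. Since covariant derivatives commute with the metric, this yields $g^{ab}\nabla^c\nabla^d(P_1)_{acbd}=\nabla^c\nabla^d(g^{ab}(P_1)_{acbd})=0$ identically, so the derivative contribution to the trace of the field equations drops out and $g^{ab}\mathcal{E}_{ab}$ remains algebraic in the curvature, hence of second derivative order.

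Second I would analyze the constraint of second-order traced equations on a general SSS background \eqref{eq:sss}. By \eqref{eq:nnPconform} this is equivalent to demanding that $\mathcal{C}_{N,f}=A_{N,f}+(D-2)B_{N,f}$ contain no second derivatives of $f$ or $N$. Writing $\mathcal{L}_{N,f}$ as an analytic function of the four independent components $\mathcal{R}_{N,f}^{(I)}$ with the $\mathcal{R}^{(2)}\leftrightarrow\mathcal{R}^{(3)}$ reflection symmetry inherited from \eqref{eq:RformNf}, the requirement that the coefficients of $f''',N''',f'',N''$ in $\mathcal{C}_{N,f}$ vanish translates into a system of algebraic conditions on the second partials $\partial^2\mathcal{L}_{N,f}/\partial\mathcal{R}_{N,f}^{(I)}\partial\mathcal{R}_{N,f}^{(J)}$. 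This system is strictly weaker than the one that produced Proposition \ref{prop:ngen}, and the extra freedom matches precisely an arbitrary analytic dependence on $\Omega_{N,f}\propto \mathcal{R}_{N,f}^{(1)}-\mathcal{R}_{N,f}^{(2)}-\mathcal{R}_{N,f}^{(3)}+2\mathcal{R}_{N,f}^{(4)}$, the unique Weyl scalar available in SSS. The compatibility of this added freedom with the weakened constraints is guaranteed a priori by the first step applied at the SSS level.

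Third I would lift the SSS-level decomposition to a covariant one. Given the function $F$ of $\Omega_{N,f}$ extracted in the previous step, construct a pure Weyl covariant Lagrangian $\mathcal{L}_1(g^{ab},W_{cdef})$ whose SSS restriction equals $F(\Omega_{N,f})$; this is always achievable since any power $\Omega_{N,f}^n$ with $n\ge 2$ arises on SSS from a suitable polynomial contraction of $n$ Weyl tensors, and the potentially problematic $n\le 1$ contributions are excluded by the polynomial-invariant constraint used in the proof of Proposition \ref{prop2}. Setting $\mathcal{L}_2=\mathcal{L}-\mathcal{L}_1$, the SSS evaluation $\mathcal{L}_2|_{N,f}$ takes precisely the type II form \eqref{eq:lagnftot}, so Theorem \ref{thm:2} promotes $\mathcal{L}_2$ to a type II QTG at the covariant level. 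The second-order character of the traced equations of $\mathcal{L}_2$ on arbitrary backgrounds then follows by linearity of the field equations, using the first step and the hypothesis on $\mathcal{L}$.

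The hard part will be the second step: rigorously verifying that the solution space of the weakened constraints is exhausted by a type II SSS Lagrangian plus an arbitrary analytic function of $\Omega_{N,f}$. This demands a more delicate refinement of the analysis behind Proposition \ref{prop:ngen}, identifying which coefficients of the dangerous higher-derivative terms in $\mathcal{C}_{N,f}$ may remain nonzero, showing that these are precisely those aligned along the single Weyl direction in the space of SSS curvature invariants, and ruling out any further surviving freedom.
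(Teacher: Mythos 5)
Your overall strategy mirrors the paper's own proof: work on the general SSS ansatz, use \eqref{eq:nnPconform} to reduce the second-order-trace condition to the requirement that $\mathcal{C}_{N,f}=A_{N,f}+(D-2)B_{N,f}$ be of first derivative order, show that the resulting constraints are weaker than those of Proposition \ref{prop:ngen} precisely by an arbitrary analytic function of the Weyl scalar $\Omega$, and then lift that function to a covariant pure-Weyl Lagrangian. Your first step (tracelessness of $P_1^{abcd}$ for pure Weyl Lagrangians implies $g^{ab}\nabla^c\nabla^d (P_1)_{acbd}=0$, hence second-order traced equations) is correct and is actually more self-contained than the paper, which simply cites the literature for this fact. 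Your third step is also essentially what the paper does, with the explicit realization of powers of $\Omega$ via $W_2$ and $W_3$ deferred to the proof of Theorem \ref{thm4}.

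However, there is a genuine gap, and you have named it yourself: the second step is a plan, not an argument. Asserting that ``the extra freedom matches precisely an arbitrary analytic dependence on $\Omega_{N,f}$'' is the entire content of the proposition, and nothing in your proposal establishes it; the a priori compatibility of adding a function of $\Omega$ (your consistency check) shows that such terms are \emph{allowed}, not that they exhaust the surviving freedom. The paper closes this gap by an explicit two-stage computation: absence of third derivatives in $\mathcal{C}_{N,f}$ yields $\partial_{\mathcal{R}^{(1)}}\bigl(\partial_{\mathcal{R}^{(1)}}\mathcal{L}_{N,f}+\partial_{\mathcal{R}^{(3)}}\mathcal{L}_{N,f}\bigr)=0$ and its $\mathcal{R}^{(2)}\leftrightarrow\mathcal{R}^{(3)}$ image, forcing the split $\mathcal{L}_{N,f}=\mathcal{L}^{(1)}(\mathcal{R}^{(2)},\mathcal{R}^{(3)},\mathcal{R}^{(4)})+\mathcal{L}^{(2)}(\mathcal{R}^{(1)}-\mathcal{R}^{(2)}-\mathcal{R}^{(3)},\mathcal{R}^{(4)})$; absence of second derivatives then gives two further PDEs whose general solution, after imposing the exchange symmetry and the admissibility conditions of Proposition \ref{prop2} on the residual functions $h_0,h_1,h_2$, is exactly $\mathcal{H}^{(0)}(\Omega)$ plus the type II structure \eqref{eq:lagnftot}. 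Until you carry out that integration of the constraint system, the proposal does not constitute a proof.
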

\begin{proof}
By the structure of the curvature tensor \eqref{eq:RformNf} on a general SSS background \eqref{eq:sss}, the evaluation of the Lagrangian on \eqref{eq:sss},  $\mathcal{L}_{N,f}$,  will be entirely expressed in terms of the variables $\{\mathcal{R}_{N,f}^{(1)},\mathcal{R}_{N,f}^{(2)},\mathcal{R}_{N,f}^{(3)},\mathcal{R}_{N,f}^{(4)}\}$ defined in \eqref{eq:RformNfcomp}. Using \eqref{eq:pissss}, we may write:
\begin{align}\notag
\mathcal{C}_{N,f}&=2 \frac{\mathrm{d}}{\mathrm{d}r}\left (\frac{\partial  \mathcal{L}_{N,f}}{\partial \mathcal{R}_{N,f}^{(1)}}+\frac{\partial  \mathcal{L}_{N,f}}{\partial \mathcal{R}_{N,f}^{(3)}} \right)+\frac{2(D-2)}{r} \left (  \frac{\partial  \mathcal{L}_{N,f}}{\partial \mathcal{R}_{N,f}^{(1)}}+\frac{\partial  \mathcal{L}_{N,f}}{\partial \mathcal{R}_{N,f}^{(3)}}  \right )\\&+\frac{1}{2rf} \left[-4f\frac{\partial  \mathcal{L}_{N,f}}{\partial \mathcal{R}_{N,f}^{(4)}}+\frac{\partial  \mathcal{L}_{N,f}}{\partial \mathcal{R}_{N,f}^{(2)}}(r^2 \mathcal{R}_{N,f}^{(2)}-4f)- \frac{\partial  \mathcal{L}_{N,f}}{\partial \mathcal{R}_{N,f}^{(3)}}(4f+r^2 \mathcal{R}_{N,f}^{(2)}) \right]\,.
\end{align}
Clearly, the absence of third derivatives of $f$ and $N$ requires that:
\begin{equation}
\frac{\partial}{\partial \mathcal{R}_{N,f}^{(1)}} \left ( \frac{\partial  \mathcal{L}_{N,f}}{\partial \mathcal{R}_{N,f}^{(1)}}+\frac{\partial  \mathcal{L}_{N,f}}{\partial \mathcal{R}_{N,f}^{(3)}} \right)=0\,.
\label{eq:prueba2tra1}
\end{equation}
The structure of the curvature tensor implies that any curvature invariant will be symmetric under the exchange of $\mathcal{R}_{N,f}^{(2)}$ and $\mathcal{R}_{N,f}^{(3)}$. As a consequence, 
\begin{equation}
\frac{\partial}{\partial \mathcal{R}_{N,f}^{(1)}} \left ( \frac{\partial  \mathcal{L}_{N,f}}{\partial \mathcal{R}_{N,f}^{(1)}}+\frac{\partial  \mathcal{L}_{N,f}}{\partial \mathcal{R}_{N,f}^{(2)}} \right)=0\,.
\label{eq:prueba2tra2}
\end{equation}
Equations \eqref{eq:prueba2tra1} and \eqref{eq:prueba2tra2} enforce:
\begin{equation}
    \mathcal{L}_{N,f}=\mathcal{L}_{N,f}^{(1)}\left (\mathcal{R}_{N,f}^{(2)},\mathcal{R}_{N,f}^{(3)},\mathcal{R}_{N,f}^{(4)} \right)+\mathcal{L}_{N,f}^{(2)}\left (\mathcal{R}_{N,f}^{(1)}-\mathcal{R}_{N,f}^{(2)}-\mathcal{R}_{N,f}^{(3)},\mathcal{R}_{N,f}^{(4)} \right)
    \label{eq:lag2ndtraceprueba}
\end{equation}
where $\mathcal{L}_{N,f}^{(1)}$ and $\mathcal{L}_{N,f}^{(2)}$ are undetermined functions of the indicated variables. Now, $\mathcal{C}_{N,f}$ will feature no second derivatives of $f$ and $N$ if and only if:
\begin{align}
\frac{\partial^2 \mathcal{L}^{(2)}}{\partial x \partial \mathcal{R}_{N,f}^{(4)}}-2\frac{\partial^2 \mathcal{L}^{(2)}}{\partial x^2}- \frac{\partial^2 \mathcal{L}^{(1)}}{\partial \mathcal{R}_{N,f}^{(2)} \partial \mathcal{R}_{N,f}^{(3)}}&=0\,,   \\
\frac{\partial^2 \mathcal{L}^{(2)}}{\partial x \partial \mathcal{R}_{N,f}^{(4)}}-2\frac{\partial^2 \mathcal{L}^{(2)}}{\partial x^2}- \frac{\partial^2 \mathcal{L}^{(1)}}{\partial \mathcal{R}_{N,f}^{(2)} \partial \mathcal{R}_{N,f}^{(3)}}- \frac{\partial^2 \mathcal{L}^{(1)}}{ \partial \left ( \mathcal{R}_{N,f}^{(3)} \right)^2}&=0\,,
\end{align}
where $x=\mathcal{R}_{N,f}^{(1)}-\mathcal{R}_{N,f}^{(2)}-\mathcal{R}_{N,f}^{(3)}$. Finding the most general solution to these equations, taking into account the exchange symmetry between $\mathcal{R}_{N,f}^{(2)}$ and $\mathcal{R}_{N,f}^{(3)}$ and plugging the result into \eqref{eq:lag2ndtraceprueba}, one finds:
\begin{align} \notag
    \mathcal{L}_{N,f}&=\mathcal{H}_{N,f}^{(0)} \left ( \mathcal{R}_{N,f}^{(1)}-\mathcal{R}_{N,f}^{(2)}-\mathcal{R}_{N,f}^{(3)}+2\mathcal{R}_{N,f}^{(4)}\right)+ \mathcal{L}_{N,f}^{(3)}\left ( \mathcal{R}_{N,f}^{(4)} \right) \mathcal{R}_{N,f}^{(1)}  +\frac{\partial \mathcal{L}_{N,f}^{(3)}}{\partial \mathcal{R}_{N,f}^{(4)}}\mathcal{R}_{N,f}^{(2)} \mathcal{R}_{N,f}^{(3)}\\&+ \left ( \mathcal{R}_{N,f}^{(2)}+ \mathcal{R}_{N,f}^{(3)} \right) \mathcal{L}_{N,f}^{(4)}\left ( \mathcal{R}_{N,f}^{(4)} \right)+\mathcal{L}_{N,f}^{(5)}\left ( \mathcal{R}_{N,f}^{(4)} \right)\,,
\end{align}
where now $\mathcal{H}_{N,f}^{(0)},\mathcal{L}_{N,f}^{(3)},\mathcal{L}_{N,f}^{(4)}$ and $\mathcal{L}_{N,f}^{(5)}$ are single-variable functions of the indicated argument. On the one hand, the combination $ \mathcal{R}_{N,f}^{(1)}-\mathcal{R}_{N,f}^{(2)}-\mathcal{R}_{N,f}^{(3)}+2\mathcal{R}_{N,f}^{(4)}$ matches, up to a trivial overall constant factor, with the function $\Omega_{\rm SS}$ defined at \eqref{eq:omss} (evaluated on static configurations) that determines uniquely the Weyl curvature tensor on SSS backgrounds. As a consequence, we identify the piece with $\mathcal{H}_{N,f}^{(0)}$ as those coming from pure Weyl invariants. On the other hand, let us redefine:
\begin{equation}
   -2  \mathcal{L}_{N,f}^{(3)}\left (\mathcal{R}_{N,f}^{(4)}\right) \longleftrightarrow h_0(\psi)\,, \quad  -4  \mathcal{L}_{N,f}^{(4)}\left (\mathcal{R}_{N,f}^{(4)}\right) \longleftrightarrow h_1(\psi)\,, \quad \mathcal{L}_{N,f}^{(5)}\left ( \mathcal{R}_{N,f}^{(4)} \right)  \longleftrightarrow h_2(\psi)\,,
\end{equation}
where $\psi=\dfrac{1-f}{r^2}$. By Proposition \ref{prop2}, the functions $h_0, h_1$ and $h_2$ must take the form \eqref{eq:formhf} for the theory $\mathcal{L}$ to correspond to an analytic function of polynomial curvature invariants. Therefore, $\mathcal{L}_{N,f}-\mathcal{H}_{N,f}^{(0)} \left ( \mathcal{R}_{N,f}^{(1)}-\mathcal{R}_{N,f}^{(2)}-\mathcal{R}_{N,f}^{(3)}+2\mathcal{R}_{N,f}^{(4)}\right)$ must be a QTG of type II. Since $\mathcal{L}_{N,f}$ has second-order traced field equations by assumption, and pure Weyl invariants also possess second-order traced field equations \cite{Oliva:2010eb,Oliva:2011xu}, we conclude that $\mathcal{L}$ must be a type II QTG with second-order traced field equations modulo Weyl invariants.
\end{proof}

It is interesting to compare Proposition \ref{prop:2ndtr2} with the conjecture put forward in \cite{Oliva:2010zd}. There, it was claimed that any theory with second-order traced field equations is constructed, up to total derivatives, as a combination of Weyl invariants, theories of the Lovelock class and a very special theory of curvature order $k \geq 3$ existing in odd dimensions $D=2k-1$. Assuming the veracity of this conjecture, Proposition \ref{prop:2ndtr2} implies that such a special (non-topological) theory of order $k \geq 3$ in odd dimensions $D=2k-1$ belongs to the type II QTG class, thus providing an alternative proof for this fact \cite{Oliva:2011xu}.

Let us now make some comments regarding the reverse question: given a QTG of type II, will it feature second-order traced field equations? Interestingly enough, the answer is positive if one considers higher-curvature theories of at most cubic order in the curvature. Indeed, up to quadratic order in the curvature for $D \geq 5$, the unique theory fulfilling notions \ref{def:3}, \ref{def:4} and \ref{def:5} corresponds to Einstein-Gauss-Bonnet gravity, while at cubic order in the curvature it can be proven that the unique degeneracy in theories satisfying these notions comes from the density:
\begin{equation}
\mathcal{T}_{\rm W}=W_{abcd} W^{cdef} W_{ef}{}^{ab}-\frac{4(D^3-9D^2+26D-22)}{3D^2-15D+16} W_{abcd} W^{aecf} W_{e}{}^{b}{}_f{}^d\,, \quad D \geq 5,
\end{equation}
which vanishes identically on spherical backgrounds \eqref{eq:ss}. In $D=5$, this density is actually identically zero --- this is proven by expanding the identity $W_{[ab}{}^{cd} W_{cd}{}^{ef} W_{ef]}{}^{ab}=0$ ---, while it is non-trivial for $D \geq 6$. As a matter of fact, the cubic Lovelock density and the cubic quasi-topological theory discovered in \cite{Oliva:2010eb,Myers:2010ru} differ in $D \geq 6$ exactly by a multiple of this term. However, the inclusion of quartic terms changes the picture completely, as the following example shows.

\begin{remark}
\label{rem:cex2ndtrace}
Take the following combination of quartic densities in $D \geq 5$:
\begin{align}
\nonumber
\mathcal{T}_4&=\frac{(3D(D-4)+8)}{D-2} W_{abcd} W^{abcd} Z_{ef}Z^{ef}-\frac{4(D-4)(D-1)}{D-3} W_{abcd}Z^{ac} W^{bedf}Z_{ef}\\&-4D W_{abcd} W^{ebcd} Z^{af} Z_{ef}\,.
\end{align}
Direct computation shows that $\mathcal{T}_4$ vanishes identically for spherically symmetric backgrounds \eqref{eq:ss}, so that it is a trivial QTG of type II. However, one may compute:
\begin{align}
\nonumber
P^{ab}=g_{cd}\frac{\partial \mathcal{L}}{\partial R_{acbd}}=\frac{D-2}{2} &\left[\frac{(3D(D-4)+8)}{D-2} W_{cd}^{ef} W^{cd}_{ef} Z^{ab}-\frac{4(D-4)(D-1)}{D-3} W^{(a}{}_{c}{}^{b)d} W^{ec}_{fd} Z_{e}^{f}\right. \\&\left. -4D W^{(a}{}_{bcd} Z^{b)}{}_e W^{ebcd}+4 W_{cdef}W^{gdef} Z_g^c g^{ab} \right]\,,
\end{align}
and direct evaluation on generic non-spherical metric reveals that $\nabla^a \nabla^b P_{ab} \neq 0$, giving rise to higher derivatives in the traced field equations. As a result, QTGs of type II do not have in general second-order traced field equations. This issue may not be fixed by the addition of pure Weyl invariants, as these contribute at most with second-order derivatives to the traced field equations.% \cite{Oliva:2010zd}.

\end{remark}

{\noindent \bf Proof of Theorem \ref{thm4}.} Proposition \ref{prop:2ndtr2} implies that a QTG of type III may be converted into a QTG of type II through the addition of specific Weyl invariants. However, in spherical symmetry \eqref{eq:ss}, all Weyl invariants are proportional to the function $\Omega_{\rm SS}$ \cite{Deser:2005pc} defined in \eqref{eq:omss}. As a result, if one considers the Weyl invariants:
\begin{equation}
W_2=W_{abcd}W^{abcd}\,,\quad W_3=W_{abcd}W^{cdef}W_{ef}{}^{ab}\,,
\end{equation}
since these satisfy:
\begin{align}
    \left. W_2 \right \vert_{\rm SS}&=\frac{(D-1)(D-2)^2(D-3)}{4} \Omega^2_{\rm SS}\,, \quad \\     \left. W_3 \right \vert_{\rm SS}&=\frac{(D-1)(D-2)(D-3)(D^3-9D^2+26D-22)}{8} \Omega^3_{\rm SS}\,,
\end{align}
then any Weyl invariant of order $n$ in the curvature, denoted schematically as $\mathcal{W}^{(n)}$, can be expressed as:
\begin{equation}
    \left. \mathcal{W}^{(2n)} \right \vert_{\rm SS}=\alpha_{2n} \left. \left ( W_2\right)^n \right \vert_{\rm SS}\,, \quad \left. \mathcal{W}^{(2n+1)} \right \vert_{\rm SS}=\alpha_{2n+1} \left. \left ( W_2\right)^{n-1} \right \vert_{\rm SS}  \left. W_3 \right \vert_{\rm SS}\,, \quad n \geq 1\,,
\end{equation}
for some constants $\alpha_{2n}$ and $\alpha_{2n+1}$. Therefore, one just needs to add an analytic function of $W_2$ and $W_3$ to a QTG of type III to make it a QTG of type II and we conclude. \qed

\section{Future directions}

The main results of the manuscript were summarized in the introduction. Here we list a few directions in which this work could be extended.

\paragraph{Alternate notions of Birkhoff theorem.} How does one define Birkhoff theorem beyond GR? There are other possible definitions one could consider, in addition to the one we have worked with here --- see Definition \ref{def:birk}. For example: (1) The imposition of spherical symmetry leads necessarily to the existence of an additional, hypersurface-orthogonal time-like Killing vector field --- \ie all spherically symmetric solutions are static. (2) Any spherically symmetric solution is locally isometric to \textit{the unique} static and spherically symmetric solution. (3) The most general SS solution of the equations of motion is static up to transformations that are symmetries of the theory (e.g.~conformal transformations). Restricted to vacuum GR, all these notions of a Birkhoff theorem coincide, but this is not obviously the case in beyond GR theories. For example, in pure Weyl$^2$ gravity~\cite{Riegert:1984zz} the most general SS solution is static up to a conformal transformation, thus satisfying (3) but not (1) or (2). On the other hand, in generic polynomial type II QTGs, the SSS solution has a discrete nonuniqueness ($f(r)$ is the solution of a polynomial equation) and hence does not satisfy (2). Similarly, if the equations of motion on SSS backgrounds are of higher derivative order, (1) could be fulfilled but not our Definition \ref{def:birk}, as justified in the main body of the document. Understanding the actual \textit{space of Birkhoff theorems} in beyond-GR theories is therefore in itself a question worth exploring. 

From another perspective, it is often claimed that Birkhoff's theorem is related to the absence of propagating spin-0 modes in the theory. For example, this is conjectured in~\cite{Riegert:1984zz} based on the spectrum of Weyl$^2$ gravity around flat spacetime. We are not aware of any proof of the validity of this claim (though see~\cite{Deser:2006mz}). If Birkhoff's theorem can be related to the spin-0 degrees of freedom of the theory in question, it cannot be the spectrum around the vacuum that is relevant but instead the spectrum on the spherical black hole background. It would be interesting to study the implications of enforcing the absence of scalar modes for perturbations of spherically symmetric spacetimes in general theories of gravity to see whether these conditions are the same as or different from the defining properties of type II QTGs.

\paragraph{Theories with non-minimally coupled matter.} An interesting class of theories to which our analysis may be extended is electromagnetic quasi-topological gravities (EMQTGs)~\cite{Cano:2020qhy,Cano:2020ezi,Bueno:2021krl,Cano:2022ord,Bueno:2022ewf}. These are theories of gravity non-minimally coupled to a $U(1)$ gauge field for which the defining characteristic is that the equations of motion on SSS metrics are second-order. Hence, the known EMQTGs are non-minimal extensions of type II QTGs. There remains a considerable space of exploration for EMQTGs: do different notions of EMQT theories exist? Do any such theories satisfy a Birkhoff theorem? Performing a more rigorous classification of such theories and their properties would be a natural aim for future work.

\paragraph{Generalized quasi-topological theories.}
Type I QTGs are in fact a subset of an even broader class of theories known as ``Generalized quasi-topological gravities'' (GQTGs) \cite{Hennigar:2017ego,  Bueno:2019ycr,Bueno:2022res,Moreno:2023arp,Moreno:2023rfl}.
Unlike QTGs,  GQTGs built from polynomial densities exist at every curvature order in $D=4 $ \cite{Bueno:2016xff,Hennigar:2016gkm, Bueno:2017qce,Bueno:2017sui}. GQTGs are usually defined through a slight modification of Notion \ref{def:1}. Indeed, a higher-curvature theory with Lagrangian density $\mathcal{L}(g^{ab}, R_{cdef})$ is a GQTG if
\begin{equation}
\frac{\delta L_{f}}{\delta f}=0 \, ,
\label{eq:defgqtg}
\end{equation}
namely, if the variation of the on-shell Lagrangian on the single-function SSS ansatz is identically zero. If this definition holds, it can be shown that the theory admits SSS solutions characterized by a single function $f(r)$. The equation for $f(r)$ can always be integrated once, giving rise to an equation which is either algebraic (QTG case) or second-order in derivatives of $f(r)$ and linear in $f''(r)$ --- see \cite{Bueno:2017sui}. As argued in the same reference, the above definition implies a modified version of Notion \ref{def:3}, for which \req{defi3} is replaced by an analogous expression where now $\mathcal{F}$ may depend on derivatives of $f$. It is natural to wonder if alternative equivalent/broader/narrower notions of GQTGs exist. In particular, it would be interesting to explore GQTG variations of Notions \ref{def:2} and \ref{def:4}: could \eqref{eq:defgqtg} be equivalently formulated in terms of some conditions on $\nabla_a P^{abcd}$ when considered on \eqref{eq:sssn1} and/or \eqref{eq:sss}? For example, since \eqref{eq:defgqtg} implies that the $tt$ and $rr$ components of the equations of motion must satisfy $\mathcal{E}_t^t=\mathcal{E}_r^r$ --- cf. \eqref{eq:derfuncond}---, this would impose that the $tt$ component of $\nabla_a\nabla_c P^{abcd}$ on \eqref{eq:sssn1} must be of third derivative order.

\paragraph{Extension to non-polynomial Lagrangians.} It has been recently realized that there exist theories constructed from non-polynomial curvature invariants that satisfy the  various defining characteristics of type II quasi-topological gravities~\cite{Bueno:2025zaj} (see also~\cite{Colleaux:2019ckh}). We have not focused here on this class of theories, and indeed some of our proof techniques relied on the analytic form of the Lagrangian --- see the proof of Proposition \ref{prop2}. In light of this, it would be particularly interesting to perform a similar analysis as done here for theories of a non-polynomial nature. One could classify how the different notions of QTGs apply in that context, and which of the results presented here go through in the non-polynomial case. In such an analysis it will likely turn out that there exist theories which have second-order equations of motion that differ from the field equations of polynomial QTGs. Indeed, this has recently been argued in~\cite{Carballo-Rubio:2025ntd} by `lifting' the equations of motion of two-dimensional Horndeski theories to higher-dimensional spherically symmetric spacetimes. In this vein, a natural question concerns whether there are any restrictions on the form of such theories that can arise from genuine, covariant higher-dimensional theories. That is: for each choice of two-dimensional Horndeski theory, does there exist a corresponding higher-dimensional theory that produces it when spherically reduced?

\subsection*{Acknowledgements} We wish to thank Pablo A. Cano and  Alejandro Vilar L\'opez for useful discussions and early collaboration on related topics.
PB was supported by a Ram\'on y Cajal fellowship (RYC2020-028756-I), by a Proyecto de Consolidación Investigadora (CNS 2023-143822) from Spain’s Ministry of Science, Innovation and Universities, and by the grant PID2022-136224NB-C22, funded by MCIN/AEI/ 10.13039/501100011033/FEDER, UE.
  \'AJM  was supported by a Juan de la Cierva contract (JDC2023-050770-I) from Spain’s Ministry of Science, Innovation and Universities.

  \appendix
  %Hala Madrid
  \section{An alternative characterization for type II QTGs}

In this appendix, we give an equivalent characterization for type II QTGs in terms of the structure of the off-diagonal components of the equations of motion on general spherical backgrounds \eqref{eq:ss}.

To set the stage, consider a higher-curvature theory $\mathcal{L}(g^{ab},R_{cdef})$ constructed from analytic functions of polynomial curvature invariants. Assume that the $tr$ component of the equations on spherical backgrounds \eqref{eq:ss} with $N=N(r)$
possess the structure:
    \begin{equation}
        \left. \mathcal{E}_{tr} \right \vert_{{\rm SS}, N=N(r)}=\mathcal{F}\left (r,f,N, \partial_t f \right )\,.
        \label{eq:etrformsplit}
    \end{equation}
We will say that such a theory $\mathcal{L}(g^{ab},R_{cdef})$ satisfies a \emph{temporal splitting property}.

Let us explore the conditions resulting from a theory satisfying the temporal splitting property.  To this aim, it is convenient to compute $\left. \nabla^a \nabla_c P_{ab}{}^{cd} \right \vert_{\rm SS}$ for a generic theory  $\mathcal{L}(g^{ab},R_{cdef})$ --- not necessarily fulfilling \eqref{eq:etrformsplit} ---  on our general SS ansatz \eqref{eq:ss}:
\begin{align}
\notag
\left. \nabla^a \nabla_c P_{ab}{}^{cd} \right \vert_{\rm SS}&=\left (\frac{f \partial_r B_{\rm SS}}{2}+\frac{B_{\rm SS} \partial_r f}{4}+\frac{(D-2)f B_{\rm SS}}{2r}+\frac{A_{\rm SS}\partial_t f}{4N^2f^2} \right) \tau_b^d\\\notag  & \hspace{-1cm}+\left ( -\frac{\partial_t A_{\rm SS}}{2 N^2 f}+\frac{A_{\rm SS}(2f\partial_t N+N \partial_t f)}{4f^2 N^3}+\frac{B_{\rm SS}(2\partial_r N f+ N\partial_r f)}{4N}+\frac{(D-2)f E_{\rm SS}}{2r} \right) \rho_b^d\\  \notag & \hspace{-1cm}+\left (\frac{f\partial_r E_{\rm SS}}{2}-\frac{\partial_t C_{\rm SS}}{2 N^2f}+\frac{C_{\rm SS}(N \partial_t f+f \partial_t N)}{2N^3f^2}+\frac{E_{\rm SS}(f\partial_r N+N \partial_r f)}{2N}+\frac{(D-3) f E_{\rm SS}}{2r} \right) \sigma_b^d \\ \label{eq:nablanablaPss} & \hspace{-1cm}+ \left (\frac{A_{\rm SS}(2\partial_r N f+\partial_r f N)}{4f N^2}-\frac{B_{\rm SS} \partial_t f}{4fN}-\frac{\partial_t B_{\rm SS}}{2N} \right) \zeta_b{}^d\,, 
\end{align}
where in the derivation of the result we have used the property \eqref{eq:propcuriosa}.

\begin{proposition}
Let $\mathcal{L}(g^{ab},R_{cdef})$ be a higher-curvature gravity constructed from polynomial contractions of curvature invariants. If it satisfies the temporal splitting property, then the theory is a type II QTG.
\label{prop:altqtg2}
\end{proposition}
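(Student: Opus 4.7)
The plan is to convert the temporal splitting requirement \eqref{eq:etrformsplit} into a system of pointwise constraints on $\mathcal{L}_{\rm SS}$ viewed as a function of the five independent Riemann components $\{\mathcal{R}^{(I)}_{\rm SS}\}_{I=1}^{5}$, and then to check that this system coincides with the one that in Proposition \ref{prop:ngen} reduces $\mathcal{L}_{\rm SS}$ to the type II QTG template. First, I would specialise \eqref{eq:PRss} and \eqref{eq:nablanablaPss} to $N=N(r)$ and extract $\mathcal{E}_{tr}$, obtaining an explicit expression in which the only source of derivatives not allowed by \eqref{eq:etrformsplit} is the $\partial_tB_{\rm SS}$ piece of $2\left.\nabla^a\nabla_c P_{ab}{}^{cd}\right|_{\rm SS}$, together with the $\partial_tP^{(I)}_{\rm SS}$ contributions inside $A_{\rm SS}$; through the chain rule applied to $P^{(1)}_{\rm SS}$ and $P^{(2)}_{\rm SS}$, these bring in $\partial_t\partial_r^3f$, $\partial_t\partial_r^2f$, $\partial_t\partial_rf$, $\partial_t^2 f$, along with $\partial_rf$, $\partial_r^2f$, $\partial_rN$, $\partial_r^2N$, each of which is forbidden by the hypothesis.

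Second, I would impose the vanishing of the coefficient of each forbidden monomial one at a time. The coefficient of $\partial_t\partial_r^3f$ isolates $\partial^2\mathcal{L}_{\rm SS}/\partial(\mathcal{R}^{(1)}_{\rm SS})^2=0$; that of $\partial_t\partial_r^2 f$ then forces $\partial^2\mathcal{L}_{\rm SS}/\partial\mathcal{R}^{(1)}_{\rm SS}\partial\mathcal{R}^{(3)}_{\rm SS}=0$, which together with the inherited $\mathcal{R}^{(2)}_{\rm SS}\leftrightarrow\mathcal{R}^{(3)}_{\rm SS}$ exchange symmetry also gives $\partial^2\mathcal{L}_{\rm SS}/\partial\mathcal{R}^{(1)}_{\rm SS}\partial\mathcal{R}^{(2)}_{\rm SS}=0$; the coefficient of $\partial_t^2 f$ subsequently fixes $\partial^2\mathcal{L}_{\rm SS}/\partial\mathcal{R}^{(1)}_{\rm SS}\partial\mathcal{R}^{(4)}_{\rm SS}$ in terms of the remaining second derivatives; and the absence of $\partial_rf$, $\partial_r^2f$, $\partial_rN$, $\partial_r^2N$ in $\mathcal{E}_{tr}$ closes the system. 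Integrating these, $\mathcal{L}_{\rm SS}$ acquires exactly the algebraic structural form \eqref{eq:lagnfproceso2} derived in the proof of Proposition \ref{prop:ngen}, but now obtained without assuming second-order equations on SSS.

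Third, I would feed the resulting template into Proposition \ref{prop2}: requiring $\mathcal{L}$ to be an analytic function of polynomial curvature invariants fixes the remaining scalar coefficient functions of $\mathcal{R}^{(5)}_{\rm SS}$ in terms of a single $h(\psi)$ through the parametrization \eqref{eq:formhf}. The Lagrangian thus takes the form \eqref{eq:lagssform}, and Proposition \ref{prop:ssres} finally identifies the theory as a type II QTG, completing the argument.

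The main obstacle is the bookkeeping in the second step: each forbidden mixed derivative receives simultaneous contributions from $\partial_tB_{\rm SS}$ (through chain-rule derivatives of $P^{(1)}_{\rm SS}$), from the $\partial_tP^{(I)}_{\rm SS}$ pieces inside $A_{\rm SS}$, and from the algebraic $[PR]_{tr}$ piece through the implicit $\mathcal{R}^{(1)}_{\rm SS}$ dependence of $P^{(2)}_{\rm SS}$ and $P^{(4)}_{\rm SS}$. One must verify that these three sources do not conspire to cancel and weaken any single constraint, and that the restriction to $N=N(r)$ still probes all five $\mathcal{R}^{(I)}_{\rm SS}$ independently along generic $f(t,r)$, so that the extracted conditions are genuine partial differential equations on $\mathcal{L}_{\rm SS}$ rather than artefacts of the ansatz.
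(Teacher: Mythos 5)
Your proposal is correct and follows essentially the same route as the paper's proof: extract $\mathcal{E}_{tr}$ on the $N=N(r)$ spherical background, kill the coefficients of the forbidden derivative monomials in descending order (starting with $\partial_t\partial_r^3 f$, which yields $\partial^2\mathcal{L}_{\rm SS}/\partial(\mathcal{R}^{(1)}_{\rm SS})^2=0$), integrate the resulting system to land on the structural template, and then invoke Proposition \ref{prop2} together with Proposition \ref{prop:ssres} to conclude the theory is a type II QTG. The only cosmetic difference is that the paper parametrizes the Lagrangian from the outset in the exchange-invariant variables $\mathfrak{u}=\mathcal{R}^{(2)}+\mathcal{R}^{(3)}$ and $\mathfrak{v}=(\mathcal{R}^{(2)}-\mathcal{R}^{(3)})^2-4(\mathcal{R}^{(4)})^2$ rather than imposing the $\mathcal{R}^{(2)}\leftrightarrow\mathcal{R}^{(3)}$ symmetry afterwards, and it finds that among the purely radial derivatives only the $\partial_r N$ condition is actually needed to close the system.
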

\begin{proof}
We will be working with spherical backgrounds \eqref{eq:ss} with $N=N(r)$. We will add the subscript $\rm{rSS}$ to every object evaluated on this restricted SS ansatz, or just change the label $\rm{SS}$ into the new label $\rm{rSS}$ to remark that those quantities must be considered in the particular case $N=N(r)$.

From the structure of the curvature tensor \eqref{eq:Rformss} and the argumentation in the proof of Proposition \ref{prop:ssfoem}, the Lagrangian on the spherically symmetric background \eqref{eq:ss} with $N=N(r)$ will be a function:
\begin{equation}
    \mathcal{L}_{\rm rSS}= \mathcal{L}_{\rm rSS}\left ( \mathcal{R}_{\rm rSS}^{(1)}, \mathcal{R}_{\rm rSS}^{(2)}+ \mathcal{R}_{\rm rSS}^{(3)}, \left ( \mathcal{R}_{\rm rSS}^{(2)}- \mathcal{R}_{\rm rSS}^{(3)} \right)^2 -4 \left ( \mathcal{R}_{\rm rSS}^{(4)} \right)^2, \mathcal{R}_{\rm rSS}^{(5)}\right)\,.
\end{equation}
Focus on the $tr$ component $\mathcal{E}_{tr}$ of the equations of motion. For it to have the form \eqref{eq:etrformsplit}, it cannot contain any fourth derivatives of $f$. In particular, no $\partial_r^3 \partial_t f$ may appear in $\mathcal{E}_{tr}$. By \eqref{eq:Rformss}, \eqref{eq:nablanablaPss} and \eqref{eq:ABss}, the $tr$ component of the equations will be safe from the appearance of this term if and only if:
\begin{equation}
   \frac{\partial \left( \partial_t B_{\rm rSS} \right)}{\partial \left (\partial_r^3 \partial_t f \right)}=0 \rightarrow  \frac{\partial^2 \mathcal{L}_{\rm rSS}}{\partial \left (\mathcal{R}_{\rm rSS}^{(1)} \right)^2}=0\,.
   \label{eq:R1linearb}
\end{equation}
From here, we directly infer that:
\begin{equation}
 \mathcal{L}_{\rm rSS}=  \mathcal{L}_{\rm rSS}^{(0)}\left (\mathfrak{u},\mathfrak{v}, \mathcal{R}_{\rm rSS}^{(5)}\right) \mathcal{R}_{\rm rSS}^{(1)} +\mathcal{L}_{\rm rSS}^{(1)}\left (\mathfrak{u},\mathfrak{v}, \mathcal{R}_{\rm rSS}^{(5)}\right)\,,
\end{equation}
where we have defined $\mathfrak{u}=\mathcal{R}_{\rm rSS}^{(2)}+ \mathcal{R}_{\rm rSS}^{(3)}$, $\mathfrak{v}= \left ( \mathcal{R}_{\rm rSS}^{(2)}- \mathcal{R}_{\rm rSS}^{(3)} \right)^2 -4 \left ( \mathcal{R}_{\rm rSS}^{(4)} \right)^2$ and where $ \mathcal{L}_{\rm rSS}^{(0)}$ and $ \mathcal{L}_{\rm rSS}^{(1)}$ are certain functions of the indicated variables. As it turns out, $\mathcal{E}_{tr}$ is now free from fourth derivatives.

Let us now explore the presence of pure third  derivatives in $\mathcal{E}_{tr}$. Specifically, third derivatives of the form $\partial_r^2 \partial_t f$ must be absent. As it turns out, these may only arise from the piece $\partial_ t B_{\rm rSS}$ in \eqref{eq:nablanablaPss}, so we need to impose:
\begin{equation}
    \frac{\partial \left ( \partial_t B_{\rm rSS} \right) }{\partial \left ( \partial_r ^2 \partial_t f \right)}=0 \rightarrow  \frac{\partial^2 \mathcal{L}_{\rm rSS}}{\partial \mathfrak{u} \, \partial \mathcal{R}_{\rm rSS}^{(1)}}-2 \left (\mathcal{R}_{\rm rSS}^{(2)}-\mathcal{R}_{\rm rSS}^{(3)} \right)  \frac{\partial^2 \mathcal{L}_{\rm rSS}}{\partial \mathfrak{v} \, \partial \mathcal{R}_{\rm rSS}^{(1)}}=0\,.
\end{equation}
Since $\left (\mathcal{R}_{\rm rSS}^{(2)}-\mathcal{R}_{\rm rSS}^{(3)} \right)$ cannot be solely expressed in terms of $\mathfrak{u}$ and $\mathfrak{v}$, we learn that:
\begin{equation}
     \frac{\partial^2 \mathcal{L}_{\rm rSS}}{\partial \mathfrak{u} \, \partial \mathcal{R}_{\rm rSS}^{(1)}}= \frac{\partial^2 \mathcal{L}_{\rm rSS}}{\partial \mathfrak{v} \, \partial \mathcal{R}_{\rm rSS}^{(1)}}= \frac{\partial \mathcal{L}_{\rm rSS}^{(0)}}{\partial \mathfrak{u}}=\frac{\partial \mathcal{L}_{\rm rSS}^{(0)}}{\partial \mathfrak{v}}=0\,.
\end{equation}
As a consequence:
\begin{equation}
     \mathcal{L}_{\rm rSS}=  \mathcal{L}_{\rm rSS}^{(0)}\left (\mathcal{R}_{\rm rSS}^{(5)}\right) \mathcal{R}_{\rm rSS}^{(1)} +\mathcal{L}_{\rm rSS}^{(1)}\left ( \mathfrak{u}, \mathfrak{v}, \mathcal{R}_{\rm rSS}^{(5)}\right)\,.
     \label{eq:lagssbirkporahora}
\end{equation}
After imposing \eqref{eq:lagssbirkporahora}, no third derivatives appear in $\mathcal{E}_{tr}$. Let us now study the possible presence of second derivatives. Let us concentrate on the appearance of mixed derivatives $\partial_t \partial_r f$, which may only come from the term $\left. \nabla^a \nabla_c P_{ab}{}^{cd} \right \vert_{\rm rSS}$ --- just have a look at the form of the curvature tensor \eqref{eq:Rformss}, which is free of such mixed derivatives. Since $A_{\rm rSS}$ and $B_{\rm rSS}$ are now of first derivative order, mixed derivatives may only arise from $\partial_t B_{\rm rSS}$:
\begin{align}
\frac{\partial \left ( \partial_t B_{\rm rSS} \right) }{\partial \left ( \partial_t \partial_r f\right)}=0 \rightarrow \quad \frac{\partial \mathcal{L}_{\rm rSS}^{(0)} }{\partial \mathcal{R}_{\rm rSS}^{(5)}}-2 \frac{\partial^2  \mathcal{L}_{\rm rSS}^{(1)}}{\partial \mathfrak{u}^2}-4\left ( \mathcal{R}_{\rm rSS}^{(2)} -\mathcal{R}_{\rm rSS}^{(3)} \right)\frac{\partial^2  \mathcal{L}_{\rm rSS}^{(1)}}{\partial \mathfrak{u} \partial \mathfrak{v}}=0\,.
\end{align}
Since $\mathcal{L}_{\rm rSS}^{(0)}$ depends solely on $\mathcal{R}_{\rm rSS}^{(5)}$ and $\mathcal{L}_{\rm rSS}^{(1)}$ depends on $\mathfrak{u}, \mathfrak{v}$ and $\mathcal{R}_{\rm rSS}^{(5)}$, we conclude that:
\begin{equation}
    \frac{\partial^2 \mathcal{L}_{\rm rSS}^{(1)}}{\partial \mathfrak{v} \partial \mathfrak{u}}=0\,, \quad  \frac{\partial \mathcal{L}_{\rm rSS}^{(0)} }{\partial \mathcal{R}_{\rm rSS}^{(5)}}-2 \frac{\partial^2  \mathcal{L}_{\rm rSS}^{(1)}}{\partial \mathfrak{u}^2}=0\,.
\end{equation}
Solving these partial differential equations, one arrives to:
\begin{align}
    \mathcal{L}_{\rm rSS}=\mathcal{L}_{\rm rSS}^{(0)}\left (\mathcal{R}_{\rm rSS}^{(5)}\right) \mathcal{R}_{\rm rSS}^{(1)} +\frac{\mathfrak{u}^2}{4}\frac{\partial \mathcal{L}_{\rm rSS}^{(0)} }{\partial \mathcal{R}_{\rm rSS}^{(5)}} + \mathfrak{u}\, \mathcal{L}_{\rm rSS}^{(2)}\left ( \mathcal{R}_{\rm rSS}^{(5)} \right) 
    +\mathcal{L}_{\rm rSS}^{(3)}\left ( \mathfrak{v},\mathcal{R}_{\rm rSS}^{(5)} \right) \,.
    \label{eq:ultimopasobirk}
\end{align}
where $\mathcal{L}_{\rm rSS}^{(2)}$ is a certain function of $\mathcal{R}_{\rm rSS}^{(5)}$ and $\mathcal{L}_{\rm rSS}^{(3)}$ is an unspecified function of $\mathfrak{v}$ and $\mathcal{R}_{\rm rSS}^{(5)}$. 

Let us also examine now the presence of second temporal derivatives $\partial_t^2 f$ in $\mathcal{E}_{tr}$. From the structure of \eqref{eq:ultimopasobirk}, it is easy to check that those second derivatives may only be due to the term $\partial_t B_{\rm rSS}$ in $\left. \nabla^a \nabla_c P_{ab}{}^{cd} \right \vert_{\rm rSS}$. We get:
\begin{equation}
    \frac{\partial \left ( \partial_t B_{\rm rSS} \right) }{\partial \left ( \partial_t^2 f\right)} \rightarrow \quad \frac{\partial^2  \mathcal{L}_{\rm rSS}^{(3)}}{\partial \mathfrak{v}^2}=0\,,
\end{equation}
so \eqref{eq:ultimopasobirk} boils down to:
\begin{align}
    \mathcal{L}_{\rm rSS}=\mathcal{L}_{\rm rSS}^{(0)}\left (\mathcal{R}_{\rm rSS}^{(5)}\right) \mathcal{R}_{\rm rSS}^{(1)} +\frac{\mathfrak{u}^2}{4}\frac{\partial \mathcal{L}_{\rm rSS}^{(0)} }{\partial \mathcal{R}_{\rm rSS}^{(5)}} + \mathfrak{u}\, \mathcal{L}_{\rm rSS}^{(2)}\left ( \mathcal{R}_{\rm rSS}^{(5)} \right) 
    +\mathfrak{v}\mathcal{L}_{\rm rSS}^{(3)}\left (\mathcal{R}_{\rm rSS}^{(5)} \right)+ \mathcal{L}_{\rm rSS}^{(4)}\left (\mathcal{R}_{\rm rSS}^{(5)} \right)\,,
    \label{eq:ultimopasobirk2}
\end{align}
where $\mathcal{L}_{\rm rSS}^{(4)}$ is an unspecified function of $\mathcal{R}_{\rm rSS}^{(5)}$.

After having imposed this, no second derivatives appear in $\mathcal{E}_{tr}$. Therefore, we just need to ensure the absence of radial derivatives. In particular, let us analyze the presence of derivatives $\partial_r N=N'$. Direct computation shows:
\begin{equation}
     \frac{\partial}{\partial \left ( \partial N' \right)} \left (\left. P_{te}{}^{cd} R_{cd}{}^{re}\right \vert_{\rm rSS}+2 \left. \nabla^a \nabla_c P_{ab}{}^{cd} \right \vert_{\rm rSS}\right)=0 \rightarrow \quad  \mathcal{L}_{\rm rSS}^{(3)}=-\frac{1}{4}\frac{\partial \mathcal{L}_{\rm rSS}^{(0)} }{\partial \mathcal{R}_{\rm rSS}^{(5)}}\,.
\end{equation}
As a consequence, we conclude that:
\begin{align}
    \mathcal{L}_{\rm rSS}&= \mathcal{L}_{\rm rSS}^{(0)}\left (\mathcal{R}_{\rm rSS}^{(5)}\right)\mathcal{R}_{\rm rSS}^{(1)} +\frac{\left (\mathfrak{u}^2 -\mathfrak{v}\right)}{4} \frac{\partial \mathcal{L}_{\rm rSS}^{(0)}}{\partial \mathcal{R}_{\rm rSS}^{(5)}} + \mathfrak{u}\, \mathcal{L}_{\rm rSS}^{(2)}\left ( \mathcal{R}_{\rm rSS}^{(5)} \right) +\mathcal{L}_{\rm rSS}^{(4)}\left ( \mathcal{R}_{\rm rSS}^{(5)} \right) \,.
\end{align}
Finally, let us now evaluate $\mathcal{L}_{\rm rSS}$ on the single function SSS ansatz with $N=1$. By making the redefinitions:
\begin{equation}
   -2  \mathcal{L}_{\rm rSS}^{(0)}\left (\mathcal{R}_{\rm rSS}^{(5)}\right) \longleftrightarrow h_0(\psi)\,, \quad  -4  \mathcal{L}_{\rm rSS}^{(2)}\left (\mathcal{R}_{\rm rSS}^{(5)}\right) \longleftrightarrow h_1(\psi)\,, \quad \mathcal{L}_{\rm rSS}^{(4)}\left ( \mathcal{R}_{\rm rSS}^{(5)} \right)  \longleftrightarrow h_2(\psi)\,,
\end{equation}
where $r^2 \psi=1-f$, we observe that the theory on \eqref{eq:sssn1} would take precisely the form \eqref{eq:laghf}. However, Proposition \ref{prop2} fixes the functions $h_0,h_1$ and $h_2$ to have the form \eqref{eq:formhf}. Therefore, $\mathcal{L}_{\rm rSS}$ must have the form \eqref{eq:lagssform} when evaluated on backgrounds $N=N(r)$. However, if we restore the time dependence on $N$, since the only appearance of temporal derivatives of $N$ in the curvature tensor is through $\mathcal{R}_{\rm rSS}^{(1)}$, we see that the Lagrangian on general spherical backgrounds \eqref{eq:ss} must take the form \eqref{eq:lagssform}. Since this corresponds to the Lagrangian of  a type II QTG on \eqref{eq:ss} with $N=N(r)$, we conclude.

\end{proof}

As a consequence, Proposition \ref{prop:altqtg2} proves that the temporal splitting property provides an equivalent characterization for type II QTGs .

\bibliographystyle{JHEP-2}
\bibliography{Gravities.bib}

\end{document}